\numberwithin{equation}{section} \setlength{\oddsidemargin}{.0001in}
\newcommand{\bean}{\begin{eqnarray*}}
\newcommand{\eean}{\end{eqnarray*}}
\newcommand{\be}{\begin{equation}}
\newcommand{\ee}{\end{equation}}
\newcommand{\bd}{\begin{displaymath}}
\newcommand{\ed}{\end{displaymath}}
\newcommand{\eps}{\varepsilon}
\newcommand{\al}{\alpha}
\newcommand{\bt}{\beta}
\newcommand{\gm}{\gamma}
\newcommand{\Gm}{\Gamma}
\newcommand{\lm}{\lambda}
\newcommand{\om}{\omega}
\newcommand{\cN}{{\mathcal N}}
\newcommand{\cK}{{\mathcal K}}
\newcommand{\cMe}{{\mathcal M}^\eps}
\newcommand{\beq}{\begin{equation}}
\newcommand{\eeq}{\end{equation}}
\newcommand{\bea}{\begin{eqnarray}}
\newcommand{\eea}{\end{eqnarray}}
\newtheorem{lemma}{Lemma}[section]
\newtheorem{theorem}[lemma]{Theorem}
\newcommand{\abs}[1]{\left\vert{#1}\right\vert}
\newcommand{\da}{\dagger}
\newcommand{\reu}{{\rm{Re}}\,\lambda_1}
\newcommand{\imu}{{\rm{Im}}\,\lambda_1}
\newcommand{\R}{\mathbb{R}}
\newtheorem{remark}[lemma]{Remark}
\newcommand{\e}{\varepsilon}
\newcommand{\norm}[1]{\left\Vert#1\right\Vert}
\newcommand{\C}{\mathbb{C}}
\newtheorem{proposition}[lemma]{Proposition}
\begin{document}

\author[J. Rubinstein, P. Sternberg and K. Zumbrun]{Jacob Rubinstein, Peter Sternberg and Kevin Zumbrun}

\address{Department of Mathematics, Indiana University,
Bloomington, IN 47405, USA} \email{jrubinst@indiana.edu,
sternber@indiana.edu, kzumbrun@indiana.edu}

%CHANGED(added date!)-K
%ENDCHANGED
\date{Last updated Dec. 20, 2007}
\thanks{J. Rubinstein was partially supported by the NSF. P. Sternberg was partially supported by
NSF DMS-0401328 and DMS-0654122. Kevin Zumbrun was partially
supported by NSF DMS-0300487}

\title[Resistive state in a superconducting
wire]{ The resistive state in a superconducting
wire:\\
Bifurcation from the normal state}

%\noindent \textbf{Abstract.}
\begin{abstract}
We study formally and rigorously the bifurcation to steady and
time-periodic states in a model for a thin superconducting wire in
the presence of an imposed current. Exploiting the PT-symmetry of
the equations at both the linearized and nonlinear levels, and
taking advantage of the collision of real eigenvalues leading to
complex spectrum, we obtain explicit asymptotic formulas for the
stationary solutions, for the amplitude and period of the
bifurcating periodic solutions and for the location of their zeros
or ``phase slip centers" as they are known in the physics
literature. In so doing, we construct a center manifold for the flow
and give a complete description of the associated finite-dimensional
dynamics.

\end{abstract}
\maketitle

\vskip 0.5cm

\section{Introduction}

One of the natural applications of superconducting is to exploit
their infinite conductivity to transmit electric currents. The goal
of this paper is to analyze a number of asymptotic problems that
arise in the study of such current transmission through a wire. We
consider a simple canonical problem, in which the superconducting
portion of the wire is of a finite extent $-L \leq x \leq L$. It is
assumed that a normal current $I$ is fed into the the wire at its
left end. It is known that under the right circumstances, for
example for a temperature $T$ that is sufficiently small, the
current in the wire itself will be in part normal and in part
superconducting. This coexistence of two types of currents in the
wire is called {\em a resistive state}.

The resistive state in superconducting wires received some attention
by physicists who observed a variety of phenomena that are unique to
this situation. To review these observations, which involve
oscillatory, that is to say, inherently time-varying behavior, it is
standard practice to use the time-dependent Ginzburg-Landau model
(TDGL). For a three-dimensional wire occupying say a thin
cylindrical region $D$ with axis of length $2L$ centered on the
$x$-axis, this system can be written in non-dimensional form as
\begin{eqnarray} & \psi_t+i\phi\psi=
(\nabla-iA)^2\psi+(\Gamma-\abs{\psi}^2)\psi\;\mbox{for}\;(x,y,z)\in
D,\label{3dpsi}\\
& \nabla\times\nabla\times
A=-\sigma(A_t+\nabla\phi)+\frac{i}{2}\left(\psi \psi_x^*-\psi_x
\psi^*\right)-\abs{\psi}^2A \;\mbox{for}\;(x,y,z)\in D,\label{3dA}
\end{eqnarray}
(cf. \cite{tin}) where $\psi:D\to\C$ is the Ginzburg-Landau
parameter whose square modulus measures the density of
superconducting electrons, $A:\R^3\to\R^3$ is the magnetic potential
whose curl measures the effective magnetic field and
$\phi:\R^3\to\R$ is the scalar electric potential whose gradient represents the electric field.
The parameter
$\Gm$ is proportional to $T_c-T$, where $T$ is the actual
temperature, and $T_c$ is the phase transition temperature in the
absence of external currents (i.e. in the case $I=0$). In
\eqref{3dA}, $*$ denotes complex conjugation and the right-hand
side represents the sum of normal current (with associated ohmic
conductivity $\sigma$) and supercurrent. We note that the TDGL is
invariant under the gauge transformation
\[
(\psi,\phi,A)\rightarrow (\psi e^{i g},\phi-g_t,A+\nabla
g)\quad\mbox{for any smooth scalar function}\;g=g(x,y,z,t).\]

 To pursue an appropriate
three-dimensional analysis of the problem of forcing an applied
current into a wire, one would then have to impose inhomogeneous
boundary conditions on the normal component of the normal current
and couple the system above to a Maxwell system on the exterior of
$D$. This is not the direction we will follow; instead we adopt the
model favored for many years in the physics literature on the
subject, e.g. \cite{ivko}, and view the wire as a one-dimensional
object. Before leaving the higher dimensional setting we should
comment, however, that an interesting two-dimensional study of the
stability of the normal state with an applied electric current can
be found in \cite{Almog}.

In such a one-dimensional model, the exterior problem is typically
ignored as a lower-order effect and so all unknowns are taken simply
to be defined along the wire as functions of $x$ and $t$ only.
Furthermore, through the gauge choice $g_x=-A$, one can eliminate
the magnetic potential $A$ completely from the system. Then using
the remaining freedom in the $t$-dependence of $g$, one can insist on the
convenient normalization \beq\phi(0,t)=0\label{phi0}\quad\mbox{for
all}\;t\geq 0.\eeq Under these assumptions and gauge choice,
\eqref{3dpsi} reduces to
\begin{equation}
\psi_t+i\phi \psi=\psi_{xx}+\Gm \psi - |\psi|^2 \psi\quad\mbox{for}\;-L<x<L,\;t>0. \label{gl1}
\end{equation}
Regarding the fate of the second equation \eqref{3dA}, note that
necessarily the divergence of the total current, that is, the
right-hand side, must vanish. In one dimension, this condition
however implies simply that the total current is a constant.
Therefore, since we are specifying that the current at the endpoints
of the wire is purely normal and equal to $I$, we arrive at the
relation
\begin{equation}
\frac{i}{2}\left(\psi \psi_x^*-\psi_x \psi^*\right) - \sigma
\phi_x=I\quad\mbox{for}\;-L<x<L,\;t\geq 0.  \label{gl3}
\end{equation}
Since the natural setup is for the wire to be connected at its
endpoints to a metal exhibiting normal conductivity, we supplement
the system \eqref{phi0}--\eqref{gl3} with homogeneous Dirichlet
boundary conditions on the order parameter
\begin{equation}
\psi(\pm L,t)=0, \label{gl5}
\end{equation}
along with initial conditions on $\psi.$ The general nature of our
results apply also to other homogeneous boundary conditions,
including in particular the case of homogeneous Neumann boundary
conditions.

As long as the temperature is sufficiently high, that is $\Gm$ is
low, the wire is in the normal state. This means that $\psi=0$, and
the current in the wire is purely ohmic, i.e.
\begin{equation}
\phi_x=-I/\sigma. \label{gl7}
\end{equation}
As the temperature $T$ is lowered, one reaches a critical value,
determined by a curve $\Gm=\Gamma_1(I)$ where the normal state loses
its stability and a nontrivial superconducting state might emerge.
Two such states were discovered in the early 1980's. The first one
is a stable stationary solution, reported by Langer and Baratoff
\cite{krba}. The notion of `stationarity' requires some care here.
Let us express the order parameter in polar form $\psi=fe^{i\chi}$;
then the gauge invariant quantities $f(x,t), \; q(x,t):=\chi_x(x,t)$
and $\theta(x,t):=\chi_t(x,t)-\phi(x,t)$ converge to stationary
functions $f_0(x),\; q_0(x),\;\theta_0(x)$. On the other hand,
Langer and Ambegaokar (\cite{laam} (see also Ivlev and Kopnin
\cite{ivko}) found in their numerical simulations that for some
values of the parameters $(I,\Gamma)$, the normal state bifurcates
into a state where the order parameter oscillates in time. This
periodic behavior is particularly interesting in light of the
dissipative nature of the TDGL model (\ref{gl1}), but it is made
possible by the presence of the applied current which effectively
disrupts the gradient-flow structure of the system.

For the duration of this investigation, we fix $L=1$ so that the
wire occupies the interval $[-1,1].$ We also set the conductivity
$\sigma=1$ in order to focus on the different kinds of states that
emerge at different points in the $(I,\Gamma)$ plane.

In a recent work \cite{rsm} the authors used numerical simulations
and some analytical arguments to identify a more elaborate phase
transition picture. In particular the curve $\Gamma_1(I)$ was shown
to be associated with an interesting spectral problem. It was also
shown that there exist two critical currents $I_k$ and $I_c$ that
play an important role in the system behavior. The curve
$\Gamma_1(I)$ is depicted in Figure \ref{1asym}. Specifically, the
normal state (N) is only stable for $\Gamma < \Gamma_1(I)$. When the
temperature is decreased and $\Gamma$ increases past $\Gamma_1(I)$,
the (N) state become unstable. If $I<I_k \approx 10.92$, then the
(N) state bifurcates into a stationary (S) state. On the other hand,
the bifurcation branch to a stationary state is unstable for $I_k <
I < I_c$, while if $I>I_c \approx 12.31$, the (N) state bifurcates
to a stable time-periodic (P) state.

\begin{figure}
\begin{center}
\begin{tabular}{c}
\includegraphics[height=7cm]{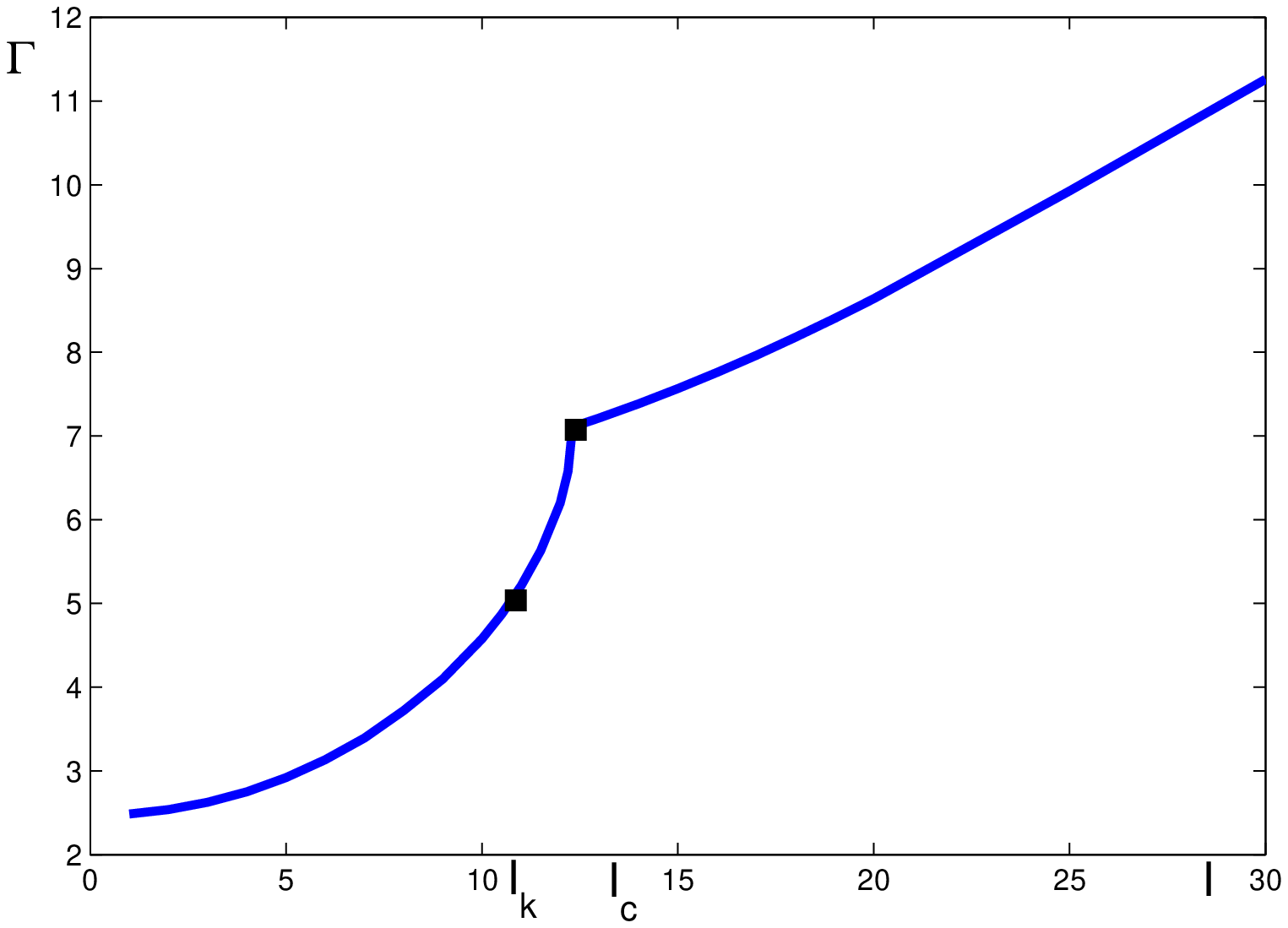}
\end{tabular}
\end{center}
\caption{} \label{1asym}
\end{figure}

To understand the phase transitions described above, and in general
the solution to  equations (\ref{gl1})-(\ref{gl3}), it is useful to
study a number of mathematical problems:
\begin{enumerate}

\item What is the meaning of the special current value $I_{c}$ where
the bifurcating state switches from a stationary one to a periodic
one?

\item From the viewpoint of self-adjoint operators arising in the time-independent
case, the answer to the first question involves an unusual spectral
problem involving the collision of two eigenvalues. Moreover, near
the critical value $I_{c}$, the spectrum of the underlying operator
%CHANGED (OK?)-K:
%changes its nature.
changes its nature, from real to complex.
%ENDCHANGED
Therefore it is desirable to understand this
spectrum near the special value $I_{c}$.

\item What is the nature of the bifurcating branch near the curve
$\Gamma_1(I)$? In the language of dynamical systems, we ask what is
the geometry of the center manifold there? This question, in fact,
involves a number of issues. For instance, is the bifurcation branch
stable and what is its shape? In addition, we point out that in the
periodic case there are points in space-time where the order
parameter $\psi$ vanishes. Such points are called phase slip centers
(PSC's) since the phase `exploits' the vanishing of the amplitude to
have a discontinuity there, thus relaxing large accumulated phase.
Thus one of the relevant questions would be to identify these
points.

\end{enumerate}

The loss of stability of the normal state is studied through the
linearization of equations (\ref{phi0})-(\ref{gl5}) around the
normal state
\begin{equation}
\psi \equiv 0,\;\; \phi=-Ix. \label{gl11}
\end{equation}
It is convenient to express the solution $\psi$ of the linearized
equation in the form $\psi(x,t)=u(x)e^{(\Gm-\lambda) t}$; then we
obtain that $u(x)$ is the solution of the following spectral problem
\begin{equation}
M u:=u_{xx}+ixI u = -\lambda u,\;\;\;u(\pm 1)=0. \label{gl13}
\end{equation}
Clearly the stability of the normal state is determined by whether
$\Gm$ is larger or smaller than ${\rm{Re}}\,\lambda_1$ where
$\lambda_1$ refers to the eigenvalue of the operator $M$ having
smallest real part. In the next section we shall therefore consider
this eigenvalue and examine some of its properties for small $I$ and
for large $I$. The results of this section will help us in
identifying the critical value $I_{c}$. Then, in Section 3, we will
examine in more detail the leading eigenvalue of $M$ for $I$ values
near $I_{c}$. In Section 4 we construct stationary solutions to
\eqref{gl1}-\eqref{gl5} for $I<I_c$ using formal asymptotic
expansions and multiple time-scales. In Section \ref{NP}, we construct
periodic solutions using the same methods in the regime $I>I_c$
where the spectrum of $M$ has become complex. We then make these
calculations rigorous in Section 6 by constructing the center
manifold for the solution immediately after bifurcation and studying
the O.D.E.'s which  govern the flow on the center manifold.

 A novel aspect of the rigorous analysis in Section 6
 is that both the linearized and the
 full nonlinear system admit what is called PT-symmetry; namely an
 invariance under the joint transformation of $x\rightarrow -x$ and
 complex conjugation. This type of symmetry has been the focus of a
 number of recent investigations (see e.g. \cite{bebo,cgs,cdv,latr}).
In particular, in the analysis of periodic bifurcation we make
strong use of this symmetry to reduce the dimension of what turns
out to be a four-dimensional phase space to a planar system
exhibiting a standard Hopf bifurcation. Furthermore, we are able to
then go further and describe bifurcation for the full
four-dimensional system involving possibly non-PT-symmetric
solutions, proving that, at least in the vicinity of the normal
solution $\psi\equiv 0$ and for values of the bifurcation parameter
that we study,
%CHANGED (wasn't quite correct!)-K:
%all solutions eventually converge, up to fixed
%complex rotation, to the manifold of solutions exhibiting PT-symmetry.
solutions generically converge, up to fixed
complex rotation, to the manifold of solutions exhibiting PT-symmetry,
and thereafter to the stable, PT-symmetric periodic solutions arising through
Hopf bifurcation within that manifold.
%par. break here?
However (Theorem \ref{Hopf2}), we also exhibit through direct calculation
the existence of unstable, non-PT-symmetric
periodic solutions in the same vicinity, i.e., persistent solutions
that do not converge to the manifold of PT-symmetry.
This shows that the observed generic convergence to PT-symmetry
is the result of detailed local dynamics on the center manifold about
the normal state and not, say, a global principle associated with
a decreasing Lyapunov functional.
In particular, there might exist attracting steady or periodic states
far from the normal state that are not PT-symmetric, an intriguing
possibility to keep in mind in further investigations.
%That is, convergence to PT-symmetry is the rule but not the law.
%ENDCHANGED

Finally, in Section 7, we show that our rigorous construction of
periodic solutions leads to a proof of the appearance of the phase
slip centers, that is, periodically appearing zeros of the order
parameter.

\section{The spectrum of the canonical PT-symmetric problem for
small $I$ and for  large $I$}

In this section we examine the spectrum of $M$ defined in equation
\eqref{gl13}. The operator $M$ is not self-adjoint, of course. On
the other hand it enjoys a symmetry that is called PT. The letter P
stand for parity, i.e. transforming $x \rightarrow -x$, while the
letter $T$ stands for time reversal, i.e. complex conjugacy. One
readily checks that under this pair of operations, $M$ is unchanged.

PT-symmetric spectral problems seem to have been little studied
until quite recently. In one of the earliest works on this subject
(in a physics context), Bender and Boettcher \cite{bebo} considered
a canonical PT-symmetric operator on the entire real line and
observed through numerical simulations that the spectrum is real. In
the case of a finite interval as in equation \eqref{gl13} above, the
situation is more involved. In particular, we will formally
demonstrate the appearance of complex eigenvalues for $I$ large with
corresponding eigenfunctions possessing an internal layer.

We look first in the case of small $I$. When $I=0$, the spectrum is
of course real, and can, in fact, be written down explicitly
\begin{equation}
\lambda_k(I=0)=\left(\pi k/2\right)^2,\;\;\; k=1,2,... \label{m2b}
\end{equation}

The PT-symmetry of $M$ implies that if $(\lambda,u(x))$ is a
spectral pair of eigenvalue and eigenfunction, then
$(\lambda^*,u^*(-x))$ is also a spectral pair. Langer and Tretter
\cite{latr} have shown that, as long as the eigenvalues of a
PT-symmetric problem are simple, the spectrum is a smooth function
of the problem's parameters. In our case it implies that since the
eigenvalues are well-separated at $I=0$, they are smooth functions
of $I$ at least for $I$ small. However, this implies that the
eigenvalues must remain real also for $I$ positive (but small),
since a real eigenvalue can become complex only by splitting into a
pair of eigenvalues (by the PT-symmetry).

What happens when $I$ is increased? As long as the eigenvalues do
not collide, they remain real. We now show formally that in fact the
eigenvalues must collide for some $I$ and establish an asymptotic
formula for the leading (complex) eigenvalue when $I$ is large.
Precise and rigorous asymptotics for this spectral problem were
carried out by Shkalikov \cite{shk1}. His work (see also
\cite{shk2}) was performed in the context of the Orr-Somerfeld
equation in hydrodynamic stability theory and makes use of a change
of variables leading to an Airy-type equation. We present the
following formal calculation with the hope that it makes the
emergence of the complex spectrum more understandable. Note also
that the formal method given here applies in more general
circumstances, whereas the exact solution of \cite{shk1} is specific
to the exact form of the equations under consideration.

To study the spectrum as $I$ becomes large, it is convenient to
introduce a small positive parameter $\eps$, and then write
$I=\eps^{-2}$. It is clear that the eigenvalues must also be large
to balance the large `potential' $i\eps^{-2}x$. We therefore write
to leading order
\begin{equation}
\lambda=\eps^{-2}(\alpha + i \beta)+o(\eps^{-2}). \label{m3}
\end{equation}
The eigenvalue problem \eqref{gl13} can be written to leading order
as $u_{xx}+\eps^{-2}{\mathcal Q}u=0$, where ${\mathcal
Q}:=x+\beta-i\al$.
Crudely setting $\eps=0$, we obtain a formal eigenvalue problem
$(ix-(\al +i\beta))\psi=0$ for the multiplication operator
$ix$, which evidently has only pure imaginary, essential spectrum.
From this we may conjecture that the spectrum of $L$ becomes complex
as $I\to \infty$ ($\eps\to 0$); however, this limit is very singular
and must be examined in more detail (indeed, on the whole line,
the spectrum is real for $I$ large \cite{ivko}.)

Consider first the case where the potential
${\mathcal Q}$  does not vanish for $x$ in the interval $[-1,1]$.
For example, this would occur if $\alpha \neq 0$. If ${\mathcal Q}
\neq 0$ then there is no turning point in a standard JBKW
\footnote{We use the term JBKW instead of WKB since this expansion
method was introduced by Jeffries in 1923, three years before it was
rediscovered by Wentzel, Kramers and Brillouin, who are also now
ordered alphabetically.} expansion of equation \eqref{gl13}.
Therefore, we seek in this case an asymptotic expansion of the form
\begin{equation}
u(x)=e^{iS(x)/\eps}. \label{m7}
\end{equation}
Substituting (\ref{m7}) and (\ref{m3}) into \eqref{gl13} we get to
leading order
\begin{equation}
S_x=\pm i^{1/2}\left(x+\beta-i\al \right)^{1/2}. \label{m9}
\end{equation}

Integrating the last equation from $-1$ to $x$ gives
\begin{equation}
S^{\pm} =\pm \frac{2}{3}i^{1/2}\left((x+\beta-i\al)^{3/2}-
(-1+\beta-i\al)^{3/2} \right). \label{m11}
\end{equation}
For future reference we introduce the notation
\begin{equation}
S^+(1)={\rm{Re}}\,S+i{\rm{Im}}\,S.  \label{m13}
\end{equation}

The general solution to the equation \eqref{gl13} is (to leading
order)
\begin{equation}
u(x) \sim A e^{iS^+(x)/\eps} +B e^{iS^-(x)/\eps}. \label{m15}
\end{equation}
Substituting this solution into the boundary conditions at $\pm 1$,
and seeking a pair of nontrivial coefficients $A,B$ leads to the
following complex-valued equation
\begin{equation}
e^{iS^+(1)/\eps}=e^{iS^-(1)/\eps}. \label{m17}
\end{equation}
In particular the magnitudes of the two sides of equation
(\ref{m17}) must be the same. Equating the absolute values, and
using the notation (\ref{m13}) gives
\begin{equation}
e^{{\rm{Im}}\,S/\eps}=e^{-{\rm{Im}}\,S/\eps}. \label{m19}
\end{equation}
Therefore, a regular JBKW expansion without a turning point is
feasible only if ${\rm{Im}}\,S=0$.

Recalling the definition of ${\rm{Im}}\,S$, the last condition on it
implies (for some real number $\chi$)
\begin{equation}
(1+\bt-i\al)^{3/2}- (-1+\bt-i\al)^{3/2}  = \chi i^{-1/2}.
\label{m21}
\end{equation}
We now show that equation (\ref{m21}) holds for any $\alpha$ if
$\bt=0$. Setting $\bt=0$, we write
\begin{equation}
-1-i \alpha = \rho e^{i(-\pi+\mu)}, \;\; 1-i \alpha = \rho
e^{i(-\mu)}. \label{s41}
\end{equation}
Here $\rho=\sqrt{1+\al^2}$. Substituting this into (\ref{m21}) with
$\bt=0$, and defining $\gm=3\mu/2$ gives
\begin{equation}
\rho^{3/2}i^{-1/2}G=\chi i^{1/2},\;\;\;G=\sqrt{2}\left(\cos \gm
+\sin \gm\right). \label{s43}
\end{equation}
This proves the assertion above with
\begin{equation}
\chi= \sqrt{2}(1+\al^2)^{3/2}\left(\cos \gm +\sin \gm\right).
\label{s45}
\end{equation}

It remains to show that there exist values of $\alpha$ for which
equation (\ref{m17}) is solvable. Since equation (\ref{m17}) holds
if and only if ${\rm{Im}}\,S=0$ and $\sin\left({\rm{Re}}\,S
\right)=0$, we obtain the condition
\begin{equation}
{\rm{Re}}\,S=\eps n \pi, \label{s47}
\end{equation}
where $n$ is an integer. The calculation above for $\chi$ together
with equation (\ref{s47}) imply
\begin{equation}
\frac{2\sqrt{2}}{3}\left(1+\al^2\right)^{3/2}\left(\cos \gm + \sin
\gm\right)=\eps n \pi. \label{s49}
\end{equation}
Consider the left-hand side as a function of $\al$ (recall that
$\gm$ also depends on $\alpha$ through the relation (\ref{s41})).
When $\al$ tends to $\infty$ then $\gm$ tends to $3\pi/4$, and the
term $\cos \gm + \sin \gm$ on left-hand side approaches zero.
However, the $\al^3$ term on the left hand side grows faster, and
therefore the entire left-hand side goes to $\infty$. On the other
hand, when $\alpha=0$, the left-hand side becomes $2\sqrt{2}/3$.
Therefore, varying $\alpha$ the left-hand side obtains all the
values in the interval $(2\sqrt{2}/3,\infty)$. This means that for
any fixed $\eps$ there exist infinitely many $n$ values for which
equation (\ref{s49}) has a solution $\al_n$. Consequently there are
infinitely many real eigenvalues of order $O(\eps^{-2})$.

The JBKW expansion above captured the real eigenvalues. They are all
of $O(\eps^{-2})$. However, this expansion fails when $\alpha=0$ and
$-1 \leq \beta \leq 1$ since in this case there is a turning point.
We shall now construct a solution for such a case. If we order the
eigenvalues by their real part, the eigenvalues we shall now
construct come before those derived above. Typically a JBKW
expansion captures the large eigenvalues and the associated
oscillatory behavior of the eigenfunction. The lower eigenfunctions
tend to oscillate less. Since we are now looking into the possible
construction of a complex eigenvalue, we need to recall that they
come in pairs of conjugate numbers. Geometrically it means that we
anticipate two eigenfunctions, related by PT-symmetry, and therefore
we anticipate in one case a turning point in $[-1,0)$ and in another
case a turning point in $(0,1]$, i.e. the turning point is at $x=\pm
\beta$. While we seek now an eigenvalue that to leading order is
purely imaginary, we shall be able to obtain also a lower order
correction for it that will have real and imaginary parts. Therefore
we assume that the eigenvalue is of the form
\begin{equation}
\lambda \sim i\eps^{-2} \beta_0 + \eps^{-\nu} \left(\alpha_1
+i\beta_1\right), \label{s71}
\end{equation}
where the exponent $\nu$ is still to be determined.

\begin{figure}
\begin{center}
\begin{tabular}{c}
\includegraphics[height=7cm]{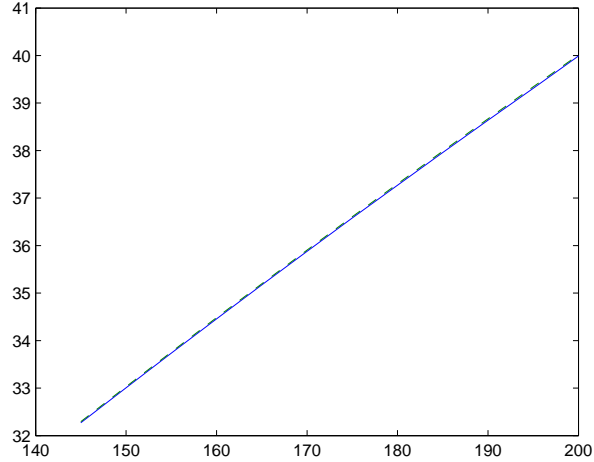}
\end{tabular}
\end{center}
\caption{The (numerically computed) real part of the leading
eigenvalue (solid line) compared to the the asymptotic expansion
(\ref{s83} (dashed line). The horizontal axis in this figure is the
current $I$, and the vertical axis is the real component of the
eigenvalue.} \label{largereallam}
\end{figure}

Consider, then, without loss of generality, the case where the
turning point is in the interior of the interval $(-1,0)$, i.e. $-1
< \beta_0 < 0$. To leading order the potential is balanced exactly
at the point $x=-\beta_0$. We therefore expect the eigenfunction to
be supported in a small neighborhood of this point, and to decay
away from it. Hence, we construct an internal layer around this
point by defining an inner variable $y$ through
\begin{equation}
x=-\beta_0 + \eps^{\gamma} y \label{s73}
\end{equation}
for some $\gm >0$. Substituting the transformation (\ref{s73}) into
equation \eqref{gl13} we see that in order to balance the different
terms in the equation we must set $\gamma=2/3$ and $\nu=4/3$. We
thus obtain on the $y$ scale the balanced equation
\begin{equation}
u_{yy}+ iyu+\left(\alpha_1 + i\beta_1\right)u=0. \label{s75}
\end{equation}
While equation (\ref{s75}) describes the internal layer form of the
eigenfunction $u$, the outer solution is of course $u \equiv 0$.
Therefore, matching the inner and outer solutions implies that we
should consider equation (\ref{s75}) over the entire real line with
the conditions
\begin{equation}
u(y=\pm \infty)=0. \label{s77}
\end{equation}

The eigenvalue problem (\ref{s75})-(\ref{s77}) can be solved
explicitly in terms of Bessel (or Hankel) functions. This was done
by Ivlev and Kopnin \cite{ivko} who concluded that this problem has
no solution with finite $L_2$ norm. This means that our assumption
that $\beta_0 < 1$ is not consistent. In other words, the
concentration cannot occur in the interior of $[-1,1]$. We therefore
consider now the last remaining case in which $\beta_0=1$. The
scaling (\ref{s73}) of the internal variable $y$ is the same, except
that now it is more appropriate to call it the `boundary layer'
variable. Thus, the boundary layer equation (\ref{s75}) is
considered over the half line $y\in [0,\infty)$. The Dirichlet
condition and the matching to the outer solution together imply the
condition
\begin{equation}
u(0)=u(\infty)=0. \label{s79}
\end{equation}
The half-line eigenvalue problem (\ref{s75}), (\ref{s79}) was also
studied in \cite{ivko}. In this case the eigenfunction has finite
$L_2$ norm. The authors computed the leading eigenvalue to be
approximately
\begin{equation}
\alpha_1 + i\beta_1 = 1.17 - 2.02 i. \label{s81}
\end{equation}
Returning to the original notation for the current, we obtain the
eigenvalue
\begin{equation}
\lambda  \sim  1.17 I^{2/3} + i\left(I-2.02 I^{2/3} \right) \;\;\;
{\rm for} \;\; I \gg 1. \label{s83}
\end{equation}

In Figure \ref{largereallam} we depict the asymptotic expansions for
the real part of the leading eigenvalue (dashed line) and the
numerically computed real part (solid lines). Similarly, we depict
in Figure \ref{largeimaglam} the asymptotic imaginary part (dashed
line) and the actual imaginary part (solid line). In both cases the
curves are very close to each other (the error is roughly 0.02).

\begin{figure}
\begin{center}
\begin{tabular}{c}
\includegraphics[height=7cm]{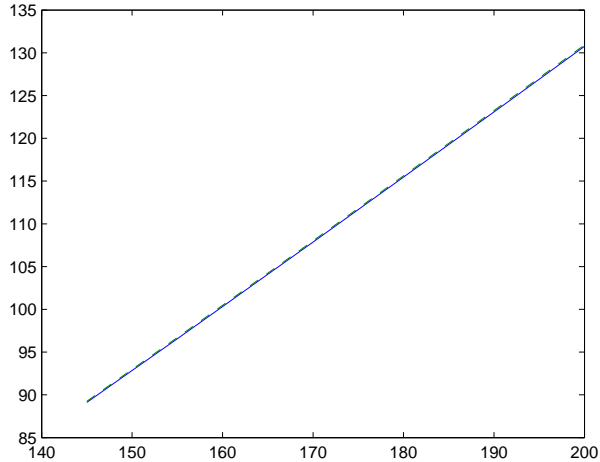}
\end{tabular}
\end{center}
\caption{The (numerically computed) imaginary part of the leading
eigenvalue (solid line) compared to the asymptotic expansion
(\ref{s83} (dashed line). The horizontal axis in this figure is the
current $I$, and the vertical axis is the imaginary component of the
eigenvalue.} \label{largeimaglam}
\end{figure}

The eigenvalues are arranged by their real part in an ascending
order. It is found numerically that as $I$ increases towards the
critical value of $I_c \approx 2.27$ (for $L=1$ and Neumann boundary
conditions) and at the critical value $I_c \approx 12.31$ (for $L=1$
and Dirichlet boundary conditions), the first and second eigenvalue
collide. These results are consistent with the bounds of ref.
\cite{latr}. In the Neumann case their estimate for $I$ below which
the entire spectrum is real is $I<\pi^2/8$, while the corresponding
Dirichlet estimate is $I<3\pi^2/8$.

\section{Eigenvalue collision in PT-symmetric problems}

In this section we look in some detail at the collision process of
two eigenvalues. Our goal is to derive an asymptotic expansion for
the eigenvalues and eigenfunctions near collision, using $I-I_c$ as
our small parameter. The analysis in this section will again be
formal.

For this purpose let $I$ be near a critical value $I_{c}$ the first
two eigenvalues (ordered by their real parts) $\lambda_1$ and
$\lambda_{2}$ collide; that is, the eigenvalues are real for $I$
just below $I_{c}$ but coincide at $I=I_c$.  Let the associated
eigenfunctions be $u_{1}$ and $u_{2}$, respectively. We should note
that the analysis we will present below is valid near the collision
of {\it any} two eigenvalues but we focus on the first collision
since this is most relevant to the stable bifurcation picture to be
presented subsequently.

It follows from equation \eqref{gl13} that they satisfy
\begin{equation}
\int_{-1}^1 u_{1}(x) u_{2}(x) \; dx=0. \label{s9}
\end{equation}
We term this property as $PT-orthogonality$. It is different, of
course, from usual orthogonality. As long as the eigenvalues are
real, their associated eigenfunctions are (up to a normalization)
PT-symmetric. Therefore PT-orthogonality is the same as
orthogonality in the Krein inner product: $[f,g]=\int
f(x)g^*(-x)\;dx=0$ \cite{latr}.

Now, as $I$ approaches $I_{c}$, we assume that $\lambda_i$ and
$\lambda_{ii}$ approach a common value $\lambda^{(0)}$. At the same
time, $u_{1}$ and $u_{2}$ also approach a common function that we
call $u^{(0)}$. Notice, that this statement is made after some
proper normalization, since the problem is linear. We really mean
that the ratio $u_{1}/u_{2}$ approaches a complex constant. To see
why, assume to the contrary that there are two independent
eigenfunctions $u_{1}$ and $u_{2}$ associated with the real
eigenvalue $\lambda^{(0)}$. Equation \eqref{gl13} is of second
order, and therefore its solution space is spanned by two
independent functions. However, $u_{1}$ and $u_{2}$ cannot form such
a basis, since they both satisfy homogeneous Dirichlet boundary
condition, while clearly there are solutions of \eqref{gl13} that do
not satisfy such conditions.

Having established that the collision eigenvalue $\lambda^{(0)}$ has
an algebraic multiplicity $2$, but a geometric multiplicity $1$ (and
thus we can say that the operator $M$ is Jordan at $I_{c}$), we
proceed to study what happens when $I$ increases past $I_{c}$.

We first write the eigenvalue problem for the operator $M$ near
$I_{c}$ in the form
\begin{equation}
u_{xx} + i(I_{c}+\eps a)xu +\lambda u=0, \label{s11}
\end{equation}
where the sign of the parameter $a$ determines if we move up or down
from $I_{c}$. Because of the singular nature of $M$ at $I_{c}$, it
turns out that the perturbation scheme is not analytic. Rather, we
need to expand the eigenfunction $u$ and the eigenvalue $\lambda$ in
powers of $\eps^{1/2}$:
\begin{equation}
u=u^{(0)}+\eps^{1/2} u^{(1)}+ \eps u^{(2)} + ....,\;\;\;\;
\lambda=\lambda^{(0)}+ \eps^{1/2} \lambda^{(1)} + \eps \lambda^{(2)}
+ ....\label{s13}
\end{equation}
At the first order we find of course
\begin{equation}
{\mathcal L}u^{(0)}:=u^{(0)}_{xx} + ixI_{c} u^{(0)}
+\lambda^{(0)}u^{(0)}=0, \label{s15}
\end{equation}
where we used this opportunity to introduce the operator notation
${\mathcal L}$. It is important to note that the PT-orthogonality
noted above implies
\begin{equation}
\int_{-1}^1 u^{(0)}(x)^2 \; dx=0. \label{s17}
\end{equation}
At the $O(\eps^{1/2})$ level we get
\begin{equation}
{\mathcal L}u^{(1)}=u^{(1)}_{xx}  + iI_{c} xu^{(1)} +\lambda^{(0)}
u^{(1)}=-\lambda^{(1)} u^{(0)}. \label{s19}
\end{equation}
Multiplying the last equation by $u^{(0)}$, integrating over the
interval $[-1,1]$, and using the PT-orthogonality (\ref{s17}) we see
that equation (\ref{s19}) is solvable. However, unlike the case of
regular eigenvalue perturbation schemes, we gain no information on
$\lambda^{(1)}$ at this level. We therefore need to proceed to the
$O(\eps)$ level:
\begin{equation}
{\mathcal L}u^{(2)}=u^{(2)}_{xx} + iI_{c} xu^{(2)} +\lambda^{(0)}
u^{(2)}=-\lambda^{(1)} u^{(1)} - \lambda^{(2)} u^{(0)} -iaxu^{(0)}.
\label{s21}
\end{equation}
To get a solvability condition we multiply both sides by $u^{(0)}$,
integrate over the interval and use (\ref{s17}) to find
\begin{equation}
-\lambda^{(1)} \int_{-1}^1 u^{(1)}(x) u^{(0)}(x) \; dx = ia
\int_{-1}^1 x u^{(0)}(x)^2\;dx \label{19}
\end{equation}
It is convenient at this point to introduce some notation. First, we
set
\begin{equation}
u^{(0)}={\rm{Re}}\,u^{(0)}(x)+i{\rm{Im}}\,u^{(0)}(x).  \label{21}
\end{equation}
Using the PT symmetry of $u^{(0)}$, we choose a normalization in
which ${\rm{Re}}\,u^{(0)}$ is even, while ${\rm{Im}}\,u^{(0)}$ is
odd. Then we define the real parameter $a_1$ through
\begin{equation}
a_1=-i \int_{-1}^1 xu^{(0)}(x)^2\;dx = 2\int_{-1}^1 x
{\rm{Re}}\,u^{(0)}(x) {\rm{Im}}\,u^{(0)}(x) \;dx \label{s23}
\end{equation}
Next, let $K(x)$ be the solution of the nonhomogeneous ODE
\begin{equation}
{\mathcal L}K= K_{xx} + iI_c xK +\lambda^{(0)}K = u^{(0)},\;\;\;\;
K(\pm 1)=0. \label{s25}
\end{equation}
The identity (\ref{s17}) ensures that equation (\ref{s25}) is
solvable. Using this canonical function $K$, we express $u^{(1)}$ as
\begin{equation}
u^{(1)}(x)=-\lambda^{(1)} K(x). \label{s27}
\end{equation}
Finally, we define
\begin{equation}
b=\int_{-1}^1 K(x) u^{(0)}(x) \; dx. \label{s29}
\end{equation}
Notice that $K$ is also PT-symmetric, i.e. $K(x)=\bar{K}(-x)$. In
particular ${\rm{Re}}\,K$ is even while ${\rm{Im}}\,K$ is odd. A
numerical integration of $K$, and a numerical evaluation of the
functionals in (\ref{s23}) and (\ref{s29}) gives
\begin{equation}
a_1 \approx 0.29,\;\;\; b \approx 0.12. \label{s30}
\end{equation}

We use the notation above to derive from equation (\ref{s19}) the
relation
\begin{equation}
\lambda_1^2 = -a a_1 /b  \approx -2.42 a. \label{s31}
\end{equation}
When $a<0$, $I$ is just below $I_{c}$ and there are two real
solutions. The negative one corresponds to the first eigenvalue, and
the positive one corresponds to the second eigenvalue. On the other
hand, when $I$ increases past $I_{c}$, i.e. when $a>0$, there is a
complex pair of conjugate solutions. This implies that the critical
eigenvalue $\lambda^{(0)}$ splits into a complex conjugate pair with
the $\mathcal{O}(\e^{1/2})$ correction $\lambda^{(1)}$ being purely
imaginary. We remark that the same analysis applies to any collision
of real eigenvalues. In Figure \ref{imagelam} we compare the
asymptotic expansion (\ref{s13}) of the imaginary part of the first
eigenvalue (dashed line) with the numerically computed value (solid
line).

\begin{figure}
\begin{center}
\begin{tabular}{c}
\includegraphics[height=7cm]{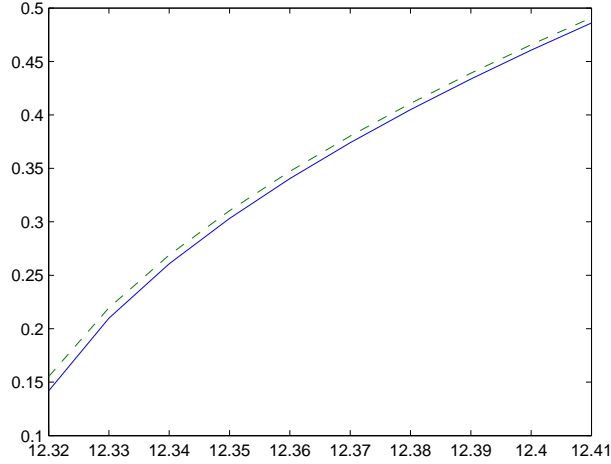}
\end{tabular}
\end{center}
\caption{The (numerically computed) imaginary part of the leading
eigenvalue (solid line) compared to the leading order term in the
asymptotic expansion (\ref{s13} (dashed line). The horizontal axis
in this figure is the current $I$, and the vertical axis is the
imaginary component of the eigenvalue.} \label{imagelam}
\end{figure}

To find the next term $\lambda^{(2)}$ in the eigenvalue expansion,
we proceed further to the $O(\eps^{3/2})$ level:
\begin{equation}
{\mathcal L} u^{(3)} =-\lambda^{(2)} u^{(1)} -\lambda^{(1)} u^{(2)}-
\lambda^{(3)} u^{(0)} -iaxu^{(1)}. \label{s61}
\end{equation}
Multiply equation (\ref{s61}) by $u^{(0)}$ and integrate by parts
over $[-1,1]$ to get
\begin{equation}
\lambda_2\int_{-1}^1 u^{(1)} u^{(0)}\,dx + \lambda^{(1)}\int_{-1}^1
u^{(2)} u^{(0)}\,dx + ia \int_{-1}^1 x u^{(1)} u^{(0)}\,dx=0.
\label{s63}
\end{equation}
Substituting the relation (\ref{s27}) into equation (\ref{s63}) and
dividing by $\lambda^{(1)}$ gives
\begin{equation}
-b \lambda^{(2)}  + \int_{-1}^1 u^{(2)} u^{(0)}\,dx -i a \int_{-1}^1
xK(x)u^{(0)}\,dx=0. \label{s65}
\end{equation}
It is useful at this point to introduce additional canonical
functions and functionals in the spirit of $K$, $a_1$ and $b$
defined above. Thus we define two canonical functions $\zeta$ and
$w$ through:
\begin{equation}
{\mathcal L}\zeta = K(x)-  \theta_1.\;\;\;\zeta(\pm 1)=0.
\label{h43}
\end{equation}
Here $\theta_1$ is a constant chosen such that equation (\ref{h43})
is solvable. Namely
\begin{equation}
\theta_1=\int_{-1}^1 K(x) u^{(0)}(x)dx /\int_{-1}^1 u^{(0)}(x)dx.
\label{h43b}
\end{equation}
The next canonical function $w(x)$ is defined by
\begin{equation}
{\mathcal L}w=-ixu^{(0)} - \theta_2,\;\;\;w(\pm 1)=0, \label{h45}
\end{equation}
with
\begin{equation}
\theta_2=-i\int_{-1}^1 x u^{(0)}(x)^2\,dx /\int_{-1}^1 u^{(0)}(x)dx.
\label{h45b}
\end{equation}

Using $\zeta$, $w$, and $K$ we can write
\begin{equation}
u^{(2)}(x)=(\lambda^{(1)})^2 \zeta(x) + a w(x) +\lambda^{(2)} K(x).
\label{s64}
\end{equation}
We further define the functionals
\begin{equation}
d_1=\int_{-1}^1 \zeta u^{(0)}\;dx, \;\; d_2=\int_{-1}^1
wu^{(0)}\;dx, \;\;d_3=i\int_{-1}^1 xK(x)u^{(0)}(x)\,dx, \label{h45c}
\end{equation}
A numerical computation gives
\begin{equation}
d_1 \approx -0.014,\;\; d_2 \approx -0.02,\;\;d_3 \approx -0.02.
\label{h45d}
\end{equation}
Using these functionals and formula (\ref{s64}) for $u^{(2)}$ in
equation (\ref{s65}), and using (\ref{s31}) to eliminate the
contribution of the coefficients $\theta_1,\theta_2$, gives the
following expression for $\lambda_2$:
\begin{equation}
\lambda^{(2)} = \left(ad_3-ad_2+(\lambda^{(1)})^2 d_1 \right)/2b.
\label{s67}
\end{equation}

We can conclude now an interesting fact. The eigenvalue
$\lambda^{(0)}$ splits into two eigenvalues as the current is varied
away from $I_{c}$. For $I<I_{c}$ we obtain a real pair, while for
$I>I_{c}$ we obtain a complex pair. This splitting manifests itself
in the two values for $\lambda^{(1)}$ obtained from equation
(\ref{s31}). We can also see how the single eigenfunction $u^{(0)}$
splits into two eigenfunctions through equation (\ref{s19}). On the
other hand, $\lambda^{(1)}$ appears in equation (\ref{s67}) only
through its square. Therefore equation (\ref{s67}) implies that
$\lambda^{(2)}$ is unique and real. In particular, if we draw the
real part of the colliding eigenvalues as a function of $I$ near the
collision, we obtain that the function is not analytic at $I_{c}^-$.
In fact $d\lambda_0/dI$ blows up as we approach $I_{c}$ from below,
due to the $\mathcal{O}(\e^{1/2})$ contribution to the expansion
\eqref{s13} coming from \eqref{s31}. Yet due to the fact that
$\lambda^{(1)}$ is purely imaginary for $I$ just above $I_c$, we see
that the graph of
${\rm{Re}}\,\lambda_1(I)\;(={\rm{Re}}\,\lambda_2(I))$ is
differentiable from the right at $I=I_c.$ This analytical conclusion
is verified in the numerical solution.

In Figure \ref{reallam} we compare the asymptotic expansion
(\ref{s13}) of the real component of the first eigenvalue (dashed
line) with the numerically computed value (solid line). The two
lines are almost indistinguishable, and the error is $O(0.001)$.

\begin{figure}
\begin{center}
\begin{tabular}{c}
\includegraphics[height=7cm]{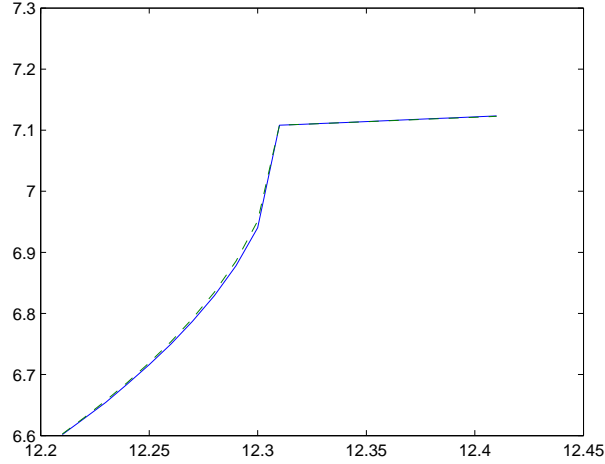}
\end{tabular}
\end{center}
\caption{The (numerically computed) real part of the leading
eigenvalue (solid line) compared to the leading order term in the
asymptotic expansion (\ref{s13} (dashed line). The horizontal axis
in this figure is the current $I$, and the vertical axis is the real
component of the eigenvalue.} \label{reallam}
\end{figure}

\section{The bifurcation from (N) to (S)}

In the next two sections we study the shape and the stability of the
bifurcation branch from the normal (N) state to either the
stationary (S) or periodic (P) state using formal asymptotic
expansions and multiple time-scales. Later, in Section 6, we will
present a rigorous justification of these calculations by appealing
to center manifold theory.

 For this purpose, it is convenient to rewrite
the system \eqref{gl1}-\eqref{gl3} as a single nonlocal complex
equation by first solving for the electric potential $\phi$ in
\eqref{gl3}, thereby obtaining \beq \phi=-Ix+\frac
i2\int_0^x\left(\psi\psi^*_x-\psi^*\psi_x\right)\,dx'.\label{solvephi}\eeq
Then we substitute this into \eqref{gl1} to obtain \beq \psi_t=
\psi_{xx}+ixI\psi+\Gamma\psi+\mathcal{N}[\psi],\label{reforma}\eeq
where we have introduced notation for the cubic nonlinearity \beq
\mathcal{N}[\psi]:=-\abs{\psi}^2 \psi+ \frac 12\;
\psi\int_0^x\left(\psi \psi^*_x-\psi^*
\psi_x\right)\,dx'.\label{nonnot}\eeq As always, this equation is
augmented with Dirichlet boundary conditions at $x=\pm 1$ and
initial conditions.

 We work in this section with a fixed current $I$ in the
regime $I<I_c$ where the leading eigenvalue $\lambda_1$ of the
operator $M$ is real (as is the entire spectrum) and we expect a
bifurcation to a stationary state. The transition takes place
exactly when $\Gamma$ crosses the value $\lambda_1$. This defines
the curve $\Gamma = \Gamma_1(I)$.

We denote the leading eigenfunction by $u_1$, and normalize it as
usual by $u_1(0)=1$. Recall that $u_1$ satisfies the equation \beq
L_1 u_1:=(u_1)_{xx}+ixIu_1+\lm_1u_1=0.\label{Lone}\eeq

To find the solution to \eqref{reforma} just above the curve
$\Gamma_1(I)$, we set $\Gm=\lm_1 +\eps$, so that \eqref{reforma}
takes the form \beq
\psi_t=L_1\psi+\e\psi+\mathcal{N}[\psi].\label{reformb}\eeq

Anticipating the contribution of the nonlinear terms in the
forthcoming expansion, we seek a solution that is proportional to
leading order to $u_1$ plus a small perturbation:
\begin{equation}
\psi(x,t) \sim \eps^{1/2} \al(\tau) u_{1}(x) +\eps^{3/2}
\psi_1(x,\tau) +... \label{h279}
\end{equation}
Since $\Gamma$ is a small perturbation of $\lambda_{1}$, we expect
the time evolution to be slow, hence we have introduced the
time-scale $\tau=\eps t$. Our goal here is to compute the function
$\al(\tau)$ and thus to obtain completely the leading order term in
the expansion.

Substituting the ansatz (\ref{h279}) into \eqref{reformb}, we see
through \eqref{Lone} that the $O(\eps^{1/2})$ terms are balanced by
the choice above for $\psi$, with the function $\al(\tau)$ not yet
determined. Proceeding the $O(\eps^{3/2})$ level, we obtain
\begin{equation}
-L_{1}\psi_{1} = \left(-\al_{\tau}+\al\right) u_{1} +
\mathcal{N}[\al u_{1}]. \label{h281}
\end{equation}

To obtain a solvability condition for $\psi_1$, we multiply equation
(\ref{h281}) by $u_{1}(x)$ and integrate over $[-1,1]$. We obtain
the following equation for $\al(\tau)$:
\begin{equation}
\al_{\tau}=\al+\frac{\int_{-1}^{1}\mathcal{N}[\al
u_{1}]u_{1}\,dx}{\int_{-1}^{1}u_{1}^{2}\,dx} =\al+\chi_{11} |\al|^2
\al. \label{h285}
\end{equation}
After a lengthy calculation, the coefficient $\chi_{11}$ is found to
be given by
\begin{equation}
\chi_{11} = \left(\frac 12 c_{1111}-\gm_{11}\right)/\bt,
\label{h287}
\end{equation}
where
\begin{equation}
\bt=\int_{-1}^1 u_{1}^2\;dx,\;\;\gm_{11}= \int_{-1}^1 |u_{1}|^2
u_{1}^2,\;\; c_{1111}=\int_{-1}^1 u_{1}^2 \theta_{11} \;dx,
\;\;\theta_{11}=\int_{0}^x u_{1} (u_{1})_x^* - u_{1}^*
(u_{1})_x\,dx'. \label{h289}
\end{equation}

The stability of the bifurcation branch depends on the sign of
$\chi_{11}$. Examination of \eqref{h285} reveals that there is
stable branch of equilibria if $\chi_{11} <0$, with
$\abs{\al}=\frac{1}{\sqrt{-\chi_{11}}}$. We note that in light of
the rotational invariance of the whole problem \eqref{reforma},
there is in fact an entire circle of equilibria with $\al$ given by
$\frac{1}{\sqrt{-\chi_{11}}}e^{i\theta_{0}},\;\theta_{0}\in
[0,2\pi)$.

On the other hand, if $\chi_{11}>0$ then an unstable branch of
equilibria exists for $\e$ small and negative wherein \eqref{h285}
is replaced by
\[\al_{\tau}= -\al+\chi_{11} |\al|^2 \al\] and the equilibrium value
of $\al$ is given by $1/\sqrt{\chi_{11}}$. It turns out that both
signs can occur, depending on the current $I$. For instance, when
$I=7$, we get
$$\beta(7)=0.785,\; \gm_{11}(7)=0.652,\;
c_{1111}(7)=0.375,\;\;\chi_{11}(7)= -0.592,$$ while
$$\beta(11)=0.403,\; \gm_{11}(11)=0.449,\;
c_{1111}(11)=0.924,\;\;\chi_{11}(11)= 0.03.$$ A careful computation
shows that $\chi_{11}$ vanishes at $I\approx 10.93$. We denote this
critical value by $I_{k}$ and conclude that the bifurcation from (N)
to (S) is stable (type II) for $I<I_k$ and unstable for $I_k<I<I_c$.
Finally we point out that the expansion in this section breaks down
for $I$ near $I_k$, and one needs to proceed to higher order terms
there.

\section{The bifurcation from (N) to (P)}\label{NP}

In this section we compute asymptotic approximations for the
solution in the oscillatory (P) state. In this state the solution
$\psi(x,t)$ is time-periodic. We assume in this section that the
current $I$ is fixed in the regime $I>I_{c}$. The transition to the
(P) state takes the form of a Hopf bifurcation; namely, the real
part of the spectrum is zero, and the bottom of the spectrum
consists of a conjugate pair of purely imaginary eigenvalues. We
point out that the spectrum of the operator $M$ (cf. \eqref{gl13})
{\it does} have a nonzero real part, but this real part is exactly
balanced at the transition curve by our choice in this section of
$\Gm=\Gamma_1(I)+\e=\reu+\e$ with $\e>0$ in \eqref{reforma}. Hence,
if we extend our definition \eqref{Lone} of the linear operator
$L_{1}$ to include the case where $\lambda_{1}$ is complex via \beq
L_1 u:=u_{xx}+ixIu+(\reu) u, \eeq then it is $L_{1}$ that possesses
a pair of purely imaginary eigenvalues with corresponding
eigenfunctions $u_{1}$ and $u_{2}$ satisfying \beq L_{1}
u_{1}=-i(\imu) u_{1}\quad\mbox{and}\quad L_{1} u_{2}=i(\imu)
u_{2}.\label{imeig} \eeq We choose to normalize the eigenfunctions
so that $u_j(0)=1,\;j=1,2$, and we assume $\lambda_{1}$ is defined
so that $\imu>0$. With the above choice of $\Gamma$ and definition
of $L_{1}$, we again find that (\ref{gl1})-(\ref{gl3}) takes the
form \eqref{reformb}.

At leading order, we expect the solution to be comprised of a linear
combination of solutions to the equation $\psi_{t}=L_{1} \psi$. This
leads us to seek a periodic solution to \eqref{reformb} of the form
\begin{equation}
\psi(x,t) \sim \eps^{1/2}\left(\al_1(\tau) e^{-i \imu t}u_1(x) +
\al_2(\tau) e^{i \imu t}u_2(x) \right) +\eps^{3/2} \psi_1(x,t) +...
\label{h79}
\end{equation}
Here $\psi_{1}$ is assumed to be a periodic function of $t$ with
period $p_{\e}=2\pi/\imu\,+\;\mathcal{O}(\e)$ and $\al_1, \al_2 $
are coefficients that we expect to evolve slowly in time. We thus
have set $\al_i=\al_i(\tau)$, where, just as before, $\tau=\eps t$.
Our goal here is to compute the functions $\al_i(\tau)$ and thus to
obtain completely the leading order term in the expansion.

Substituting the ansatz (\ref{h79}) into equations \eqref{reformb}
shows that the $O(\eps^{1/2})$ terms are balanced by the choice
above for $\psi$, with the coefficients $\al_i(\tau)$ not yet
determined. Proceeding to the $O(\eps^{3/2})$ level, we obtain \bea
(\psi_{1})_{t}-L_{1}\psi_{1}  =&&  e^{-i \imu t}u_1 \left(\al_1 -
\al_{1 \tau}\right) +  e^{i \imu t} u_2\left(\al_2 - \al_{2
\tau}\right) \\ &&\qquad+ \quad \mathcal{N}[ \al_1 e^{-i \imu t}u_1
+ \al_2 e^{i \imu t}u_2], \label{h81} \eea cf. \eqref{nonnot}.

To obtain a first solvability condition for $\psi_1$, we multiply
equation (\ref{h81}) by $e^{i\imu t}u_1(x)$ and integrate over
$[-1,1]\times[0,2\pi/\imu]$. A second solvability condition is
obtained by integrating similarly against the function $e^{-i\imu
t}u_2(x)$. We note that through \eqref{imeig} and the assumed periodicity of
$\psi_{1}$, the first integration against the left-hand side of
\eqref{h81} yields, after an integration by parts:
\begin{eqnarray*}
&&
\int_{-1}^{1}\int_{0}^{2\pi/\imu}\left((\psi_{1})_{t}-L_{1}\psi_{1}\right)
e^{i\imu t}u_{1}\,dt\,dx\\
&&=-i\imu\int_{-1}^{1}\int_{0}^{2\pi/\imu}\psi_{1}e^{i\imu
t}u_{1}\,dt\,dx-\int_{-1}^{1}\int_{0}^{2\pi/\imu}\psi_{1}e^{i\imu
t}L_{1}u_{1}\,dt\,dx=\mathcal{O}(\e).
\end{eqnarray*}
Similarly, the left-hand side in the
second integration vanishes to leading order.

After a lengthy but straight-forward calculation, these two
integrations then give rise to a pair of equations for the
coefficients $\al_i(\tau)$, namely
\begin{eqnarray}
(\al_{1})_{\tau}=\al_1+\left(\chi_{11} |\al_1|^2 + \chi_{12}
|\al_2|^2\right)\al_1,\label{h85a} \\
(\al_{2})_{\tau}=\al_2+\left(\chi^*_{11} |\al_2|^2 + \chi^*_{12}
|\al_1|^2\right)\al_2.\ \label{h85b}
\end{eqnarray}
Here the coefficient $\chi_{11}$ is again given by \eqref{h287}
while $\chi_{12}$ is defined by through:
\begin{equation}
\chi_{12}=\left(\frac 12 c_{1122}+\frac 12
c_{1212}-2\gm_{21}\right)/\bt, \label{h87}
\end{equation}
where
\begin{equation}
\bt:=\int_{-1}^1 u_1^2\;dx,\;\gm_{ij}= \int_{-1}^1 |u_{i}|^2
u_j^2\;dx,\; c_{ijkl}=\int_{-1}^1 u_{i} u_j \theta_{kl} \;dx, \;
\theta_{kl}=\int_{0}^x \left(u_k (u_l)_x^* - u_l^*
(u_k)_x\right)\;dx'. \label{h89}
\end{equation}
We note that the notation above is consistent with \eqref{h289}.

To analyze the evolution of the $\al_i's$, we note that from
\eqref{h85a}-\eqref{h85b} it is easy to derive the system
\begin{eqnarray}
\left(|\al_{1}|\right)_{\tau}=|\al_1|(1+({\rm{Re}}\,\chi_{11}
|\al_1|^2 + {\rm{Re}}\,
\chi_{12} |\al_2|^2),\label{mh85a} \\
\left(|\al_{2}|\right)_{\tau}=|\al_2|(1+({\rm{Re}}\,\chi_{11}
|\al_2|^2 + {\rm{Re}}\, \chi_{12} |\al_1|^2), \label{mh85b}
\end{eqnarray}
governing the evolution of the moduli of the $\al_i$. Now if the
initial conditions for \eqref{reformb} are taken such that
$\al_{1}(0)\not=0$ and $\al_{2}(0)\not=0$, then it follows from
\eqref{mh85a}-\eqref{mh85b} that $\al_{1}$ and $\al_{2}$ are
non-zero for all future times. In this case, we introduce the
Ricatti transform $r:=|\alpha_1|/|\alpha_2|$. From
\eqref{mh85a}-\eqref{mh85b} we obtain that
\begin{equation}
\label{req}
\begin{aligned}
r'&=\frac{|\alpha_2||\alpha_1|'-|\alpha_2|'|\alpha_1|}{|\alpha_2|^2}\\
&= \frac{|\alpha_1||\alpha_2|({\rm{Re}}\, \chi_{11}-{\rm{Re}}\,
\chi_{12})
(|\alpha_1|^2-|\alpha_2|^2) }{|\alpha_2|^2}\\
&= ({\rm{Re}}\, \chi_{11}-{\rm{Re}}\,\chi_{12}) |\alpha_2|^2 r
(r^2-1)=-({\rm{Re}}\,\hat{\chi})\, |\alpha_2|^2 r (r^2-1),
\end{aligned}
\end{equation}
where $\cdot'$ denotes $\frac{d}{d\tau}$ and we have introduced the
complex constant \beq
\hat{\chi}:=\chi_{12}-\chi_{11}.\label{chihatdefn}\eeq Provided that
${\rm{Re}}\,\hat{\chi}>0,$ we learn from \eqref{req} that $r\to 1$
at an exponential rate as $\tau\to\infty.$ One can indeed check
numerically that for $I>I_c$, the inequality
${\rm{Re}}\,\hat{\chi}>0$ holds. See Figure \ref{chi_1}.

Returning to the system \eqref{mh85a}-\eqref{mh85b} with this
information, we can determine the asymptotic value of the modulus of
both $\al_1$ and $\al_2$ to be
\[\abs{\al_1(\tau)}\sim\abs{\al_2(\tau)}\sim\sqrt{\frac{-1}{({\rm Re} \chi_{11} + {\rm Re}
\chi_{12})}}=\sqrt{\frac{1}{{\rm{Re}}\,\tilde{\chi}}}\quad\mbox{for}\;\tau>>1\]
where we have introduced another complex constant \beq
\tilde{\chi}:= -(\chi_{11}+\chi_{12}).\label{chitildedefn} \eeq
Numerical calculation reveals that ${\rm{Re}}\,\tilde{\chi}>0$ for
$I>I_c$ as well. Again, see Figure \ref{chi_1}.

\begin{figure}
\begin{center}
\begin{tabular}{c}
\includegraphics[height=7cm]{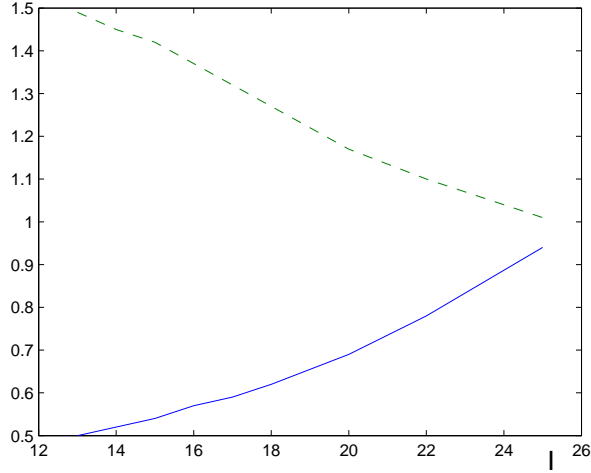}
\end{tabular}
\end{center}
\caption{The dashed curve represents the graph of the parameter
${\rm{Re}}\,\hat{\chi}$ as a function of applied current $I$ and the
solid curve represents the graph of ${\rm{Re}}\,\tilde{\chi}.$ }
\label{chi_1}
\end{figure}

 Substitution into \eqref{h85a}-\eqref{h85b} then yields a linear
dependence on $\tau$ of the phase of both $\al_1$ and $\al_2$ for
$\tau$ large with
\begin{equation}
\al_1(\tau)=\al_2^*(\tau)=
\sqrt{\frac{1}{{\rm{Re}}\,\tilde{\chi}}}\, e^{i \om
\tau}\quad\mbox{for}\;\tau>>1,\label{asymal}\end{equation} where
$\om  := -\frac{{\rm{Im}}\,\tilde{\chi}}{{\rm{Re}}\,\tilde{\chi}}. $
Of course, in light of the rotational invariance of all of the above
equations, the asymptotic state \eqref{asymal} holds only up to a
constant rotation. As an example, we computed the functionals
$\chi_{11}$ and $\chi_{12}$ for $I=20$. In this case the leading
eigenvalues are $\lm_{1,2} \approx 8.64 \pm 5.25\;i$, the amplitude
is $\approx 0.92$ and $\om \approx -1.81$. Unlike the case $I<I_c$,
the transition from (N) to (P) is always stable.

The case where either $\al_{1}(0)=0$ or $\al_{2}(0)=0$ is treated
separately. In view of \eqref{h85a}-\eqref{h85b}, if, for example,
$\al_{1}(0)=0$, then necessarily $\al_{1}(\tau)\equiv 0$. Hence,
$\al_{2}$ evolves by
$$
(\al_{2})_{\tau}=\left(1+\chi^*_{11} |\al_2|^2 \right)\al_2.
$$
From this and \eqref{mh85b} it easily follows that
\begin{equation}
\al_{2}(\tau)\sim\frac{1}{\sqrt{-{\rm{Re}}\,\chi_{11}}}e^{i\frac{{\rm{Im}}\chi_{11}}{{\rm{Re}}\chi_{11}}\,\tau}\quad
\mbox{for}\;\tau>>1\label{a2unst}
\end{equation}
provided that ${\rm{Re}}\,\chi_{11}<0$. Since
$\chi_{11}=-\frac{1}{2}(\tilde{\chi}+\hat{\chi})$, and both
$\tilde{\chi}$ and $\hat{\chi}$ have been shown numerically to be
positive, we do indeed have this condition on $\chi_{11}$ met.
Similarly, when $\al_{2}(0)=0$, one finds that $\al_{2}(\tau)\equiv
0$ and
\begin{equation}\label{a1unst}
\al_{1}(\tau)\sim\frac{1}{\sqrt{-{\rm{Re}}\,\chi_{11}}}e^{-i\frac{{\rm{Im}}\chi_{11}}{{\rm{Re}}\chi_{11}}\,\tau}\quad
\mbox{for}\;\tau>>1
\end{equation}
provided again that ${\rm{Re}}\chi_{11}<0$.

To summarize, in the generic case where neither $\al_{1}(0)$ nor
$\al_{2}(0)$ vanish, the solution in the (P) state is to leading
order
\begin{equation}
\psi(x,t) \sim \eps^{1/2} A \left( e^{-i( \imu +\om \eps)t} u_1(x) +
e^{i( \imu+ \om \eps)t} u_2(x)\right). \label{h95}
\end{equation}
Therefore the solution is time-periodic with period
\begin{equation}
p_{\e}\sim 2\pi/\left(\imu+\om \eps\right)+o(\e). \label{951}
\end{equation}
In a subsequent article, we will examine in some detail how the
asymptotic solution (\ref{h95}) extends deeper into the nonlinear
regime of the (P) state.

\section{Rigorous bifurcation theory}
In this section we will establish a rigorous justification for the
formally derived expansions and bifurcations of the previous two
sections. As was done earlier, we will set the normal conductivity
$\sigma$ equal to one and set the interval $[-L,L]$ to be $[-1,1]$,
thus focusing on the interplay between the externally forced current
$I$ and the temperature-dependent parameter $\Gamma$. As was done
earlier, we will treat the case of Dirichlet boundary conditions,
though similar conclusions can be rigorously established for the
case of Neumann boundary conditions. Though in our previously
derived formal asymptotics, we only pursued the case of bifurcation
off of the principal eigenvalue, at the end of this section we also
treat bifurcation off of any eigenvalue.

\subsection{Spectral properties of linear operator}

We begin by recalling some notation and collecting some facts about
the linear operator $M$ given by \eqref{gl13}.

\begin{lemma}\label{Mspec}
The spectrum of $M$ consists only of point spectrum, denoted by
$\{-\lambda_j\}$ with corresponding eigenfunctions $\{u_j\}.$ If
$(\lambda_j,u_j)$ is an eigenpair satisfying
$$
M u_j=-\lambda_j u_j,\quad u_j(\pm 1)=0,
$$
then \beq {\rm{Re}}\,\lambda_j>0,\quad\mbox{and}\quad
\abs{{\rm{Im}}\,\lambda_j}< I.\label{ReIm}\eeq Thus, in particular
we may order the eigenvalues $\lambda_1,\lambda_2,\ldots$ according
to the size of their real part, with $0<{\rm{Re}}\,\lambda_1\leq
{\rm{Re}}\,\lambda_2\leq \ldots.$ The PT-symmetry of the operator is
reflected in the fact that if $(\lambda_j,u_j)$ is an eigenpair then
so is $(\lambda^*_j,u_j^{\da})$ where $u_j^{\da}(x):=u^*_j(-x).$
Finally, for each positive integer $\ell$, there exists a positive
value of the current $I$, which we denote by $I_{\ell}$, with
$I_{\ell}<I_{\ell+1}$ such that \beq
\lambda_{2\ell-1},\;\lambda_{2\ell}\in\R\quad\mbox{for}\quad I\leq
I_{\ell}\quad\mbox{while}\quad
\lambda_{2\ell-1}=\lambda_{2\ell}^*\,\not\in\R\quad\mbox{for}\quad
I> I_{\ell}.\label{split}\eeq
\end{lemma}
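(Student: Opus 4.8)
The plan is to treat the three assertions in turn, as they are of increasing difficulty. \emph{Discreteness of the spectrum} is the easiest: $M = \partial_x^2 + ixI$ on $[-1,1]$ with Dirichlet boundary conditions is an operator with compact resolvent, since $\partial_x^2$ with Dirichlet conditions has compact resolvent and $ixI$ is a bounded (relatively compact) perturbation. Hence $\sigma(M)$ consists of isolated eigenvalues of finite algebraic multiplicity, and the eigenfunctions solve the given second-order ODE boundary value problem. The \emph{PT-symmetry statement} is a direct computation: if $M u_j = -\lambda_j u_j$ with $u_j(\pm 1)=0$, apply complex conjugation and the substitution $x\mapsto -x$ to see that $u_j^\da(x):=u_j^*(-x)$ satisfies $M u_j^\da = -\lambda_j^* u_j^\da$ with the same boundary conditions; this is exactly the invariance of $M$ under parity composed with conjugation already noted in Section 2.

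The \emph{bounds \eqref{ReIm}} I would get from an energy identity. Multiply $M u = u_{xx} + ixIu = -\lambda u$ by $u^*$ and integrate over $[-1,1]$, integrating the first term by parts and using $u(\pm1)=0$:
\[
-\int_{-1}^1 \abs{u_x}^2\,dx + iI\int_{-1}^1 x\abs{u}^2\,dx = -\lambda\int_{-1}^1\abs{u}^2\,dx.
\]
Taking real and imaginary parts and normalizing $\int\abs{u}^2=1$ yields ${\rm Re}\,\lambda = \int\abs{u_x}^2\,dx$ and ${\rm Im}\,\lambda = -I\int_{-1}^1 x\abs{u}^2\,dx$. The first gives ${\rm Re}\,\lambda\ge 0$, and strict positivity follows since ${\rm Re}\,\lambda=0$ would force $u_x\equiv0$, hence $u\equiv0$ by the boundary conditions. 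For the imaginary part, $\abs{\int_{-1}^1 x\abs{u}^2\,dx} < \int_{-1}^1\abs{u}^2\,dx = 1$ because $\abs{x}<1$ on the open interval and $\abs{u}^2$ cannot be supported at the endpoints; hence $\abs{{\rm Im}\,\lambda}<I$. With ${\rm Re}\,\lambda_j>0$ for every eigenvalue, the ordering by real part is well defined (one should remark that a discrete set accumulating only at infinity can be so ordered, with finitely many eigenvalues in any vertical strip, which follows from compactness of the resolvent together with the standard large-$\abs{\lambda}$ asymptotics — or more simply from \eqref{m2b} plus analytic perturbation in $I$).

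The real substance is the splitting statement \eqref{split}, and this is where I would lean heavily on material already assembled in the paper. The strategy: (i) at $I=0$ the eigenvalues are $(\pi k/2)^2$, all real and simple, so in particular $\lambda_{2\ell-1}(0)<\lambda_{2\ell}(0)$ are real; (ii) by the Langer–Tretter result \cite{latr}, as long as eigenvalues stay simple they depend smoothly on $I$, and PT-symmetry forces a simple eigenvalue to stay real (a real simple eigenvalue can only leave $\R$ by colliding with its conjugate, i.e. by colliding with another real eigenvalue first); (iii) Section 3 shows that at a collision of two real eigenvalues $\lambda_{2\ell-1},\lambda_{2\ell}$ the coalesced eigenvalue $\lambda^{(0)}$ has algebraic multiplicity $2$ and geometric multiplicity $1$, and that the local expansion \eqref{s13}, \eqref{s31} has $(\lambda^{(1)})^2 = -aa_1/b$, so immediately past the collision ($a>0$) the pair becomes genuinely complex conjugate and stays off $\R$; (iv) the global statement — that each pair collides exactly once, at a value $I_\ell$, with $I_\ell<I_{\ell+1}$ — I would obtain by combining the small-$I$ reality (Section 2) with the large-$I$ analysis of Section 2 showing the leading eigenvalues are complex with the asymptotics \eqref{s83}, together with a monotonicity/counting argument: once a conjugate pair has formed it cannot return to $\R$ without re-colliding, and the JBKW count shows the number of real eigenvalues of size $O(\eps^{-2})$ is controlled, pinning down that exactly the pairs $\{\lambda_{2\ell-1},\lambda_{2\ell}\}$ go complex one pair at a time as $I$ increases through the thresholds $I_1<I_2<\cdots$. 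The main obstacle is precisely this last global bookkeeping: ruling out that a pair, once complex, re-enters the real axis, and that no \emph{other} multiplicity-two Jordan degeneracies or higher-order collisions occur along the way. I would handle it by invoking the precise and rigorous asymptotics of Shkalikov \cite{shk1,shk2} for this exact operator, which give the full distribution of the spectrum and in particular identify the unique sequence of transition currents; alternatively, one can argue via the structure of the characteristic (Evans-type) function built from the explicit Airy/Bessel solutions — its zeros being analytic in $I$, PT-symmetry constraining them to be real or in conjugate pairs, and a winding/argument-principle count fixing how many lie in $\R$ for each $I$. Either way, the local picture is entirely supplied by Section 3 and only the "exactly once, in order" part requires the external input, which we are entitled to cite.
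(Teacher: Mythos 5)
Your proposal is correct and follows essentially the same route as the paper: discreteness via standard spectral theory (compact resolvent), the bounds \eqref{ReIm} via the same energy identity obtained by multiplying by $u^*$ and integrating by parts, PT-symmetry by inspection, and the splitting statement \eqref{split} ultimately resting on \cite{latr} for reality at small $I$ and on \cite{shk1,shk2} for the existence and ordering of the critical currents $I_\ell$. The extra detail you supply (the strictness of $\abs{\int x\abs{u}^2\,dx}<\int\abs{u}^2\,dx$, and the honest acknowledgment that the ``exactly once, in order'' bookkeeping is the part requiring the external citation) is consistent with, and somewhat more explicit than, the paper's own proof.
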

\begin{remark} {\rm We note that in earlier parts of this
paper, the critical value $I_1$ was denoted by $I_c$. In the next
subsection, we revert to this notation to keep consistency with
earlier sections.}
\end{remark}
\begin{proof}
The fact that the spectrum consists entirely of eigenvalues follows
from standard spectral theory. To verify \eqref{ReIm}, multiply the
equation $M u=-\lambda u$ by $u^*$ and integrate to obtain
$$
{\rm{Re}}\,\lambda=\frac{\int_{-1}^{1}\abs{u_x}^2\,dx}{\int_{-1}^{1}\abs{u}^2\,dx}\quad
\mbox{and}\quad{\rm{Im}}\,\lambda=\frac{I\int_{-1}^{1}x\abs{u}^2\,dx}{\int_{-1}^{1}\abs{u}^2\,dx}.
$$
The fact that if $(\lambda,u)$ is an eigenpair then so is
$(\lambda^*,u^{\da})$ follows by inspection. The assertion that the
spectrum is real for small $I$ follows from Theorem 4.1 of
\cite{latr} while the existence of a critical values $\{I_{\ell}\}$
beyond which pairs of real eigenvalues collide to form conjugate
pairs is a result of \cite{shk1,shk2}.
\end{proof}
Next, for any fixed positive integer ${\ell}$, we introduce the
operator \beq L_{\ell}u:=M
u+({\rm{Re}}\,\lambda_{2\ell-1})\,u.\label{Lell} \eeq Then
$L_{\ell}$ has spectrum shifted from that of $M$ by
${\rm{Re}}\,\lambda_{2\ell-1}$ so that $L_{\ell} u_j=-\mu_j\,u_j$
where $\mu_j=\lambda_j-{\rm{Re}}\,\lambda_{2\ell-1}$. Based on the
behavior of the spectrum of $M$ described in Lemma \ref{Mspec}, we
have the following scenario for $L_{\ell}$. \bea
&&\qquad\mbox{For}\;0\leq I<I_{\ell}:\quad\label{Iless}\\
&& \mu_{2\ell-1}=0,\quad {\rm{Re}}\,\mu_j<0\;\mbox{for}\;1\leq
j<2\ell-1,\quad \mu_j>0\;\mbox{for}\;j>2\ell-1.\nonumber\\
&&\nonumber\\
&&\qquad\mbox{For}\;I=I_{\ell}:\label{Iz}\\
&& \mu_{2\ell-1}=\mu_{2\ell}=0,\quad
{\rm{Re}}\,\mu_j<0\;\mbox{for}\;1\leq
j<2\ell-1,\quad \mu_j>0\;\mbox{for}\;j>2\ell.\nonumber\\
&&\nonumber\\
&&\qquad\mbox{For}\;I>I_{\ell}:\label{Imore}\\
&&{\rm{Re}}\,\mu_{2\ell-1}={\rm{Re}}\,\mu_{2\ell}=0,\quad
{\rm{Im}}\,\mu_{2\ell-1}=-{\rm{Im}}\,\mu_{2\ell}\not=0,\nonumber\\
&&{\rm{Re}}\,\mu_j<0\;\mbox{for}\;1\leq j<2\ell-1,\quad
{\rm{Re}}\,\mu_j>0\;\mbox{for}\;j>2\ell.\nonumber \eea

We also note that by PT-symmetry, if we normalize all eigenfunctions
so that $u_{\ell}(0)=1$, then we must have \beq
u_{2\ell-1}=u^{\da}_{2\ell-1}\quad\mbox{when}\quad I\leq
I_{\ell}\quad \mbox{while} \quad
u_{2\ell-1}=u^{\da}_{2\ell}\quad\mbox{when}\quad I>I_{l}.\label{eig}
\eeq For later use, we also introduce the spectral gap
$\zeta_{\ell}$ given by \beq
\zeta_{\ell}:=\inf_j\,\{|{\rm{Re}}\,\mu_j|\,:\;\mu_j\;\mbox{is an
eigenvalue of}\;L_{\ell}\;\mbox{with non-zero real
part}\}.\label{gap}\eeq

In what follows it will be convenient to choose a basis for the
eigenspace of $L_1$ that is PT-symmetric. This has already been
taken care of when $I<I_1$. However, for $I>I_1$ we introduce the
basis $v_1$ and $v_2$ given by \beq v_1:=u_1+u_2\quad\mbox{and}\quad
v_2:=i(u_1-u_2).\label{vdefn}\eeq The PT-symmetry of this basis
follows from \eqref{eig}. Note also that this basis satisfies the
relations \beq L_1 v_1=-\imu v_2,\quad L_1 v_2=\imu
v_1\label{veval}\eeq.

\subsection{Bifurcation from first eigenvalue}
We now develop the rigorous bifurcation theory associated with the
stationary and periodic solution branches formally derived in
earlier sections. To this end, we wish to reformulate the full
nonlinear system \eqref{phi0}-\eqref{gl5} in such a way as to make
it amenable to standard center manifold and bifurcation theory.
Accordingly, we first solve for the electric potential $\phi$ in
\eqref{gl3}, thereby obtaining \beq \phi=-Ix+\frac
i2\int_0^x\left(\psi\psi^*_x-\psi^*\psi_x\right)\,dx'.\label{solvephi1}
\eeq In this section, we will focus on the stable bifurcations that
occurs off of the first eigenvalue of the linear operator and so we
henceforth fix the positive integer $\ell$ from the previous section
to equal $1$ and pick $\Gamma$ to be of the form
$\Gamma={\rm{Re}}\,\lambda_{1}+\e$. Substituting \eqref{solvephi}
into \eqref{gl1}, we obtain a single, nonlocal complex equation so
that \eqref{gl1}-\eqref{gl3} can be rewritten as \beq
\psi_t=L_{1}\psi+\mathcal{N}(\psi,\e)\label{reform}\eeq where we
have introduced $\mathcal{N}(y,\e):=\mathcal{N}[y]+\e y$ with \beq
\mathcal{N}[y]:=-\abs{y}^2 y+ \frac 12\; y\int_0^x\left(y y^*_x-y^*
y_x\right)\,dx'.\label{nonnot1}\eeq We recall that as before, the
system is augmented with Dirichlet boundary conditions $\psi(\pm
1,t)=0$ along with the normalization $\phi(0,t)=0$ for $t\geq 0$,
and initial conditions, say $\psi(x,0)=\psi_0(x)$. We take $\e$ to
be small and positive (unless otherwise specified).

We remark that since $L_{1}y^{\da}=\left(L_{1}y\right)^{\da}$ and
$\mathcal{N}[y^{\da}]=\left(\mathcal{N}[y]\right)^{\da}$, it follows
easily that the flow \eqref{reform} preserves PT-symmetry in the
sense that if $\psi$ is a solution to \eqref{reform}, then so is
$\psi^{\da}.$ Hence, by uniqueness, we note that if the initial data
$\psi_0$ is PT-symmetric, i.e. if $\psi_0=\psi_0^{\da}$, then so is
the resulting solution $\psi.$

For later use, we also record the estimate: \begin{lemma}
\label{cubiclem} There exists a positive constant $C_0$ such that
\beq \norm{\mathcal{N}[y]}_{H^1}\leq C_0 \norm{y}^3_{H^1}\quad
\mbox{for all}\; y\in H^1_0((-1,1));\C).\label{cubic}\eeq\end{lemma}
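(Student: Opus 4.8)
The plan is to estimate the two terms in $\mathcal{N}[y]$ separately, both by standard Sobolev multiplication and embedding facts on the one-dimensional domain $(-1,1)$. The key analytic input is the continuous embedding $H^1_0((-1,1))\hookrightarrow L^\infty((-1,1))$, which holds in dimension one and gives $\norm{y}_{L^\infty}\leq C\norm{y}_{H^1}$ for all $y\in H^1_0$. Granting this, $H^1$ is a Banach algebra on the interval, so products of $H^1$ functions are controlled in $H^1$ by the product of their $H^1$ norms.

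For the cubic term $-\abs{y}^2y=-y\,\bar y\,y$, I would simply invoke the algebra property (or, more elementarily, bound $\norm{y\bar y y}_{H^1}$ by differentiating, using the product rule, and estimating each resulting term via $\norm{y}_{L^\infty}$ times lower-order factors), obtaining $\norm{\abs{y}^2y}_{H^1}\leq C\norm{y}^3_{H^1}$. For the nonlocal term $\frac12 y\int_0^x(yy^*_{x'}-y^*y_{x'})\,dx'$, write $g(x):=\int_0^x(yy^*_{x'}-y^*y_{x'})\,dx'$. First I would bound $g$ in $W^{1,\infty}$: its derivative $g_x=yy^*_x-y^*y_x$ is controlled in $L^2$ by $\norm{y}_{L^\infty}\norm{y_x}_{L^2}\leq C\norm{y}^2_{H^1}$, hence (since $g(0)=0$ and the interval has finite length) $\norm{g}_{L^\infty}\leq C\norm{y}^2_{H^1}$ as well. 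Then $\norm{yg}_{H^1}\leq \norm{yg}_{L^2}+\norm{y_xg+yg_x}_{L^2}\leq \norm{g}_{L^\infty}\norm{y}_{L^2}+\norm{g}_{L^\infty}\norm{y_x}_{L^2}+\norm{y}_{L^\infty}\norm{g_x}_{L^2}\leq C\norm{y}^3_{H^1}$. Adding the two contributions yields \eqref{cubic} with $C_0$ the larger of the two constants.

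There is no real obstacle here; the only point requiring a modicum of care is that $g_x$ is merely in $L^2$, not $L^\infty$, so one must not try to bound $\norm{g}_{L^\infty}$ through $\norm{g_x}_{L^\infty}$. Controlling $g$ in $L^\infty$ directly from $g(x)=\int_0^x g_x$ via Cauchy--Schwarz, $\abs{g(x)}\leq \abs{x}^{1/2}\norm{g_x}_{L^2}\leq \norm{g_x}_{L^2}$, sidesteps this. Everything else is a routine application of the one-dimensional Sobolev embedding and the product rule, with constants depending only on the length of the interval.
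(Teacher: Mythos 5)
Your proof is correct and follows essentially the same route as the paper's: split $\mathcal{N}[y]$ into its local cubic part and the nonlocal part, control the antiderivative $\int_0^x(yy^*_{x'}-y^*y_{x'})\,dx'$ in $L^\infty$ via Cauchy--Schwarz, and close everything with the one-dimensional embedding $H^1_0\hookrightarrow L^\infty$ and the product rule. No gaps.
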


\begin{proof}
For $y\in H^1_0((-1,1))$ recall that both $\norm{y}_{L^{\infty}}$
and $\norm{y}_{H^1}$ are controlled by $\norm{y_x}_{L^2}.$ We begin
by estimating the $H^1$ norm of the local part of $\mathcal{N}.$ We
find \[ \norm{ |y|^2 y}_{L^2}\leq \norm{y}_{L^6}^3\leq
\norm{y}_{L^{\infty}}^2\norm{y}_{L^2}\leq C \norm{y}^3_{H^1}. \]
Similarly, \[ \norm{ \big(|y|^2 y\big)_x}_{L^2}\leq C
\big(\int_{-1}^{1}\abs{y}^4\abs{y_x}^2\,dx\big)^{1/2}\leq C
\norm{y}_{L^{\infty}}^2\norm{y_x}_{L^2}\leq C \norm{y}^3_{H^1}\]
Turning now to the nonlocal part of $\mathcal{N}$ we find \bea &&
\norm{y\int_{0}^x \left(yy_x^*-y^*y_x)\right)\,dx'}_{L^2}\leq
C\bigg(\int_{-1}^{1}\big(\abs{y}\int_{-1}^{1}\abs{y}\abs{y_{x}}\,dx'\big)^{2}\,dx\bigg)^{1/2}
\nonumber\\
&&\leq
C\big(\int_{-1}^{1}\abs{y}\abs{y_{x}}\,dx\big)\norm{y}_{L^{2}}\leq C
\norm{y}_{L^{2}}^{2}\norm{y_{x}}_{L^{2}}\leq C
\norm{y}^3_{H^1}.\nonumber \eea Finally, we check that \bea &&
\norm{\bigg(y\int_{0}^x
\left(yy_x^*-y^*y_x)\right)\,dx'\bigg)_{x}}_{L^2}\leq\nonumber\\
&&C
\bigg(\int_{-1}^{1}\big(\abs{y_{x}}\int_{-1}^{1}\abs{y}\abs{y_{x}}\,dx'\big)^{2}\,dx\bigg)^{1/2}
+C\bigg(\int_{-1}^{1}\abs{y}^{4}\abs{y_{x}}^{2}\bigg)^{1/2}\nonumber\\
&&\leq C \norm{y}_{L^{2}}\norm{y_{x}}^{2}_{L^{2}}+C
\norm{y}^{2}_{L^{\infty}} \norm{y_{x}}_{L^{2}}\leq C
\norm{y}^3_{H^1}\nonumber \eea
\end{proof}

Armed with a full understanding of the linearized operator and
control on the nonlinear operator provided by Lemma \ref{cubiclem},
we can proceed to construct a center manifold for the flow
\eqref{reform}. To this end, we will denote by $S_c$ the center
subspace associated with $L_{1}$; that is, $S_c$ is the eigenspace
associated with any eigenvalues of $L_1$ having zero real part. For
$I\leq I_{c}\;(=I_1)$, we have $S_c={\rm{span}}\,\{u_{1}\}$ while
for $I>I_{c}$,
$S_c={\rm{span}}\,\{v_{1},v_{2}\}\;(={\rm{span}}\,\{u_{1},u_{2}\})$.

As $\e$ will play the role of a bifurcation parameter, we then
augment \eqref{reform} with the equation \beq
\e_t=0.\label{ezero}\eeq
The theorem below provides for the existence of a finite dimensional invariant manifold associated with
the flow \eqref{reform} for each fixed small $\e$ describing all orbits of sufficiently small norm. This, in effect,
allows us to rigorize the formal bifurcation calculations of the previous two chapters by reducing the analysis of the
nonlocal P.D.E. to a study of a local system of O.D.E.'s with accompanying rigorous error estimates.

\begin{theorem}\label{cm} For each value $I>0$ and positive integer $k$,
there is a $C^k$ local center manifold $\mathcal{M}\subset
H^1((-1,1);\C)\times\R$ of \eqref{reform}, \eqref{ezero} tangent to
the center subspace. The center manifold $\mathcal{M}$ is
expressible as a graph over the center subspace in the sense that
there exists a $C^k$ map $\Phi:S_c\times\R\to H^1((-1,1);\C)$ such
that \beq \mathcal{M}=\bigcup_
{\abs{\e}<\e_0}\,\big(\mathcal{M}_{\e}\times\{\e\}\big)\quad\mbox{where}\quad
\mathcal{M}_{\e}:=\{\Phi(u,\e):\,u\in
S_c,\;\norm{u}_{H^1}<\delta_0,\;\abs{\e}<\e_0\} \label{cmg}\eeq for
some sufficiently small positive constants $\delta_0$ and $\e_0$
depending in particular on $k$.
 The center manifold is invariant under complex rotation,
i.e. $(\psi,\e)\in\mathcal{M}\implies\;(e^{i\theta}\psi,\e)\in
\mathcal{M}$ for all $\theta\in\R$ and in fact \beq
e^{i\theta}\Phi(u,\e)=\Phi(e^{i\theta}u,\e)\quad\mbox{for all}\;
\theta\in\R.\label{phirot}\eeq The center manifold is also
PT-symmetric, i.e.
$(\psi,\e)\in\mathcal{M}\implies\;(\psi^{\da},\e)\in \mathcal{M}$.
If $u=u^{\da}$ then $\Phi(u,\e)=\Phi^{\da}(u,\e).$

The discrepancy between the center manifold and the center subspace can be expressed
through the estimate \beq \norm{\Phi(u,\e)-u}_{H^1}
\leq
C_1\left(\norm{u}_{H^1}^3+\abs{\e}\norm{u}_{H^1}\right)\label{verytang}
\eeq which holds for any
pair $(u,\e)$ such that $u\in S_c$ with $\norm{u}_{H^1}<\delta_0$ and
$\abs{\e}<\e_0$,
where $C_1$ is a positive constant independent of $u$ and $\e$.

The center manifold is locally invariant for the flow \eqref{reform}
in the sense that if $\abs{\e}<\e_0$ and the initial data $\psi_0$
lies on $\mathcal{M}_\e$, then so does the solution $\psi^{\e}$ to
\eqref{reform} so long as $\norm{\psi^\e(\cdot,t)}_{H^1}$ stays
sufficiently small. Hence, for such initial data, one can describe
the resulting solution $\psi^{\e}(t)=\psi^{\e}(\cdot,t)$ through
either one or two maps $\beta^{\e}_j:[0,\infty)\to\C$ via
$\psi^{\e}(t)=\Phi(\beta^{\e}_1(t)u_1,\e)$ when $I<I_c$ or
$\psi^{\e}(t)=\Phi(\beta^{\e}_1(t)v_1,\beta^{\e}_2(t)v_2,\e)$ when
$I>I_c).$ Finally, $\mathcal{M}$ contains all nearby bounded
solutions of \eqref{reform} in $H^{1}$, and in particular, it
contains any nearby steady-state or time-periodic solutions.
\end{theorem}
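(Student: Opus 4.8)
\emph{Proof plan.} I would deduce $\mathcal{M}$ from the classical center--manifold theorem for semilinear parabolic equations and then upgrade its conclusions to the rotation/PT equivariance, the tangency estimate, and the attractivity statement by invoking uniqueness of the construction together with the structure already recorded in Lemmas \ref{Mspec} and \ref{cubiclem}. First I fix the setting: $\partial_x^2$ on $L^2((-1,1);\C)$ with domain $H^2\cap H^1_0$ is sectorial and $ixI$ is a bounded multiplication operator, so $L_1$ is sectorial and generates an analytic semigroup $e^{L_1 t}$ on $X:=L^2$ with interpolation space $X^{1/2}\cong H^1_0((-1,1);\C)$. By Lemma \ref{Mspec} the spectrum of $L_1$ is discrete, its intersection with $i\R$ is the finite set of eigenvalues with zero real part (one eigenvalue, or a conjugate pair, according to whether $I<I_c$ or $I>I_c$), and every remaining eigenvalue has real part $\le-\zeta_1<0$ by \eqref{gap}; this picture persists under the bounded $\e$--perturbation. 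Let $P$ be the finite--rank Riesz projection onto the associated generalized eigenspace $S_c$ (which is $\mathrm{span}\{u_1\}$ or $\mathrm{span}\{v_1,v_2\}$ except at the borderline value $I=I_c$) and $Q=I-P$; then $e^{L_1 t}Q$ decays exponentially on $X^{1/2}$, with an integrable $t^{-1/2}$ singularity at $t=0^+$. Adjoining \eqref{ezero}, the combined vector field on $X^{1/2}\times\R$ has linear part $(L_1\psi,0)$, center subspace $S_c\times\R$, and nonlinearity $R(\psi,\e):=\mathcal{N}[\psi]+\e\psi$; being a sum of fixed $\R$--multilinear expressions (of degree $\le 3$ in $\psi$, degree $\le 1$ in $\e$) whose boundedness into $H^1$ is exactly Lemma \ref{cubiclem}, $R$ is a $C^\infty$ map of $X^{1/2}\times\R$ into $X^{1/2}$ vanishing to second order at the origin. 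These are precisely the hypotheses of the center--manifold theorem: truncating $R$ by a smooth cutoff in $(\norm{\psi}_{H^1},\abs{\e})$ to a globally Lipschitz $C^k$ map with small constant, and solving the Lyapunov--Perron equation in the space of orbits growing slower than $e^{\eta|t|}$ with $0<\eta<\zeta_1$, produces a $C^k$ graph $\Phi:S_c\times\R\to X^{1/2}$ with $\Phi(u,\e)-u\in\mathrm{range}\,Q$, giving $\mathcal{M}$ as in \eqref{cmg}, tangent to $S_c\times\R$, locally flow--invariant, and represented in the basis $\{u_1\}$ or $\{v_1,v_2\}$ by the maps $\beta_j^\e$.

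For the invariances, observe that \eqref{reform} is equivariant under $\psi\mapsto e^{i\theta}\psi$ (since $L_1$ is $\C$--linear and $\mathcal{N}[e^{i\theta}\psi]=e^{i\theta}\mathcal{N}[\psi]$) and under $\psi\mapsto\psi^\da$ ($L_1$ and $\mathcal{N}$ commute with $\da$, as already noted), while $S_c$ --- hence $P$, $Q$, and the cutoff $\norm{\cdot}_{H^1}$ --- are invariant under both operations, using that $S_c$ is a complex subspace and that $u_1^\da=u_1$ for $I\le I_c$ while $u_1^\da=u_2$ for $I>I_c$ by \eqref{eig}. Since the Lyapunov--Perron manifold is the \emph{unique} locally invariant $C^1$ graph tangent to $S_c\times\R$, its images under these two symmetries again satisfy that characterization and therefore coincide with $\mathcal{M}$; reading this off $\Phi$ yields \eqref{phirot} and the PT--symmetry of $\mathcal{M}$, and in particular $u=u^\da\Rightarrow\Phi(u,\e)=\Phi(u,\e)^\da$. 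The quantitative estimate \eqref{verytang} comes from re-running the contraction with the \emph{un}truncated bounds: on the orbit through $u\in S_c$ one has $\norm{\psi(t)}_{H^1}\le 2\norm{u}_{H^1}$, whence $\norm{R(\psi(t),\e)}_{H^1}\le C(\norm{u}_{H^1}^3+\abs{\e}\norm{u}_{H^1})$ by Lemma \ref{cubiclem}, and applying the bounded--solution operator $g\mapsto\int e^{L_1(t-s)}Q\,g(s)\,ds$ (convergent by the spectral gap) converts this into $\norm{\Phi(u,\e)-u}_{H^1}=\norm{Q(\Phi(u,\e)-u)}_{H^1}\le C_1(\norm{u}_{H^1}^3+\abs{\e}\norm{u}_{H^1})$.

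Finally, because Lemma \ref{Mspec} forces $\mathrm{Re}\,\mu_j\le 0$ for all $j$ --- i.e. $L_1$ has no unstable spectrum --- $\mathcal{M}$ captures every small globally bounded orbit: if $\psi$ solves \eqref{reform} on $\R$ with $\sup_t\norm{\psi(t)}_{H^1}$ small enough, then $Q\psi(t)=\int_{-\infty}^{t}e^{L_1(t-s)}Q\,R(\psi(s),\e)\,ds$ by uniqueness of bounded solutions of the inhomogeneous equation in $\mathrm{range}\,Q$ (on which the spectrum of $L_1$ lies in $\{\mathrm{Re}<-\zeta_1\}$), and this is exactly the relation defining the graph $\Phi$; hence $\psi(t)\in\mathcal{M}_\e$ for every $t$, so in particular every nearby steady state and periodic solution lies on $\mathcal{M}$. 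I expect the one place genuinely requiring care to be the set-up: confirming that the nonlocal cubic operator $\mathcal{N}[\cdot]$, which carries a derivative under an integral, is a bona fide $C^k$ self-map of $H^1_0$ with cubic vanishing (Lemma \ref{cubiclem} supplies the estimate, $\R$--multilinearity the smoothness), and that $L_1$ is sectorial with the gap \eqref{gap} robust under the $\e$--perturbation; once these are in place, existence of $\mathcal{M}$, its tangency, the two equivariances, the bound \eqref{verytang}, and the capture of bounded orbits all follow from standard center--manifold theory together with the symmetry bookkeeping above.
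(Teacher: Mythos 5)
Your proposal follows essentially the same route as the paper's proof: sectoriality of $L_1$ and the spectral gap of Lemma \ref{Mspec}, a cutoff of the nonlinearity controlled by Lemma \ref{cubiclem}, a Lyapunov--Perron fixed point in an exponentially weighted space of orbits, equivariance of the fixed-point map plus uniqueness to obtain the rotation and PT symmetries, and a comparison with the linear flow to obtain the tangency estimate \eqref{verytang}. The only step to tighten is your interim claim that $\norm{\psi(t)}_{H^1}\le 2\norm{u}_{H^1}$ along the entire orbit (which need not hold in the weighted space $Y_\eta$); the paper sidesteps this by estimating $\norm{y_u-\Gamma(u,\e,y_u)}_{\eta}$ with $y_u=e^{L_1t}u$, which \emph{is} uniformly bounded by $C\norm{u}_{H^1}$ since $u\in S_c$, and then invoking the contraction to transfer the bound to the true fixed point.
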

\begin{proof}
We follow a standard center manifold construction, along the lines
for example, of \cite{Bressan}. To outline this approach, we first
note that in light of conditions \eqref{ReIm}, the spectrum of the
operator $-L_1$ lies within the set
$$
\{\lambda\in \C:\,\abs{{\rm{arg}}\,(\lambda+a)}<\frac{\pi}{4}\},
$$
for some positive number $a=a(I)$. Hence, $-L_1$ is sectorial and we
may assert the existence of an analytic semi-group $\{e^{L_1
t}\}_{t\geq 0}$, cf. \cite{Henry}, Theorem 1.3.4 or \cite{W},
section 2.2.3.

We will denote the ($L^2$) projection operators from $H^1((-1,1);\C)$ onto
the center and stable subspaces of $L_{1}$ by $\Pi_c$ and $\Pi_s$
respectively. Here by stable subspace we mean the span of all eigenvectors
of $L_1$ whose corresonding eigenvalues have negative real aprts.
We note that since the real part of all eigenvalues of
$L_1$ are non-positive, $L_1$ has no unstable subspace.

 A local center manifold
is constructed by first constructing a global center manifold for a
problem with a truncated nonlinearity through the introduction of a
cut-off function $\rho\in C^{\infty}([0,\infty);[0,1])$ satisfying
$\rho(s)\equiv 1$ for $0\leq s\leq 1$ and $\rho(s)\equiv 0$ for
$s\geq 2$. For any $\delta>0$ we then let
$\rho^{\delta}(s):=\rho(s/\delta).$ We use this cut-off to truncate
the nonlinearity $\mathcal{N}(y,\e)$ by defining
$\mathcal{N}^{\delta}(y,\e):=\rho^{\delta}(\norm{y}_{H^1})\,\mathcal{N}(y,\e).$

The graph map $\Phi:S_c\times\R\to H^1((-1,1);\C)$ is then defined by the following procedure:
For $u\in S_c$ and $\e$ fixed, we use a ``variation of constants" approach, rephrasing the P.D.E. \eqref{reform} (with the original nonlinearity
replaced by $\mathcal{N}^{\delta}(y,\e)$) as an
integral equation:
\bea
&&y(x,t)=\Gamma(u,\e,y) :=e^{L_{1}t}u+\int_0^t
e^{L_{1}(t-\tau)}\Pi_c\,\mathcal{N}^{\delta}\left(y(x,\tau),\e\right)\,d\tau\nonumber\\
&&+ \int_{-\infty}^t
e^{L_{1}(t-\tau)}\Pi_s\,\mathcal{N}^{\delta}\left(y(x,\tau),\e\right)\,d\tau.
\nonumber\\
\label{fixed} \eea
Then one argues that there exists a unique fixed point $y_{u,\e}=y_{u,\e}(x,t)$
to \eqref{fixed} in the space of functions
that grow sufficiently slowly at $t=\pm\infty$ given by
$$
Y_{\eta}:=\{y\in
C\left(-\infty,\infty);H^1((-1,1))\right):\;\norm{y}_{\eta}<\infty\}.
$$
Here $\norm{y}_{\eta}:=\sup_{t\in\R}e^{-\eta
|t|}\norm{y(\cdot,t)}_{H^1((-1,1))}$ and $\eta$ is any fixed
positive number less than the spectral gap $\zeta_{1}$, cf.
\eqref{gap}. Once the existence of this fixed point is established, we define the map $\Phi$ by
\[\Phi(u,\e)=y_{u,\e}(\cdot,0).\]

We should remark that when $t$ is negative, we
interpret $e^{L_{1}t}\Pi_c$ in \eqref{fixed} to mean flow projected
onto the finite dimensional center subspace; thus it reduces to a
finite number of ordinary differential equations. To see this and to
carry out the application of the contraction mapping principle to \eqref{fixed},
one considers the inverse Laplace
transform representations \beq
e^{tL_1}\Pi_c:=\int_{\Gamma_c}e^{\lambda t}(\lambda
I-L_1)^{-1}\,d\lambda, \label{bd1}\eeq  and \beq
e^{tL_1}\Pi_s:=\int_{\Gamma_s}e^{\lambda t}(\lambda
I-L_1)^{-1}\,d\lambda,\label{bd3} \eeq where $\Gamma_c$ is a bounded
contour enclosing the eigenvalues with zero real part and $\Gamma_s$ is a contour
in the left half-plane enclosing the stable spectrum that tends
asymptotically to infinity along the lines $a(1-s)\pm a i s$ as
$s\to\pm\infty$.  Since the resolvent is bounded along these
contours, it follows from these representations that for
every $s_1\in (0,\zeta_{1})$ one has \bea
&&\norm{e^{tL_1}\Pi_c}_{H^1\to H^1}\leq C,\quad\mbox{for
all}\;t\in\R,\label{bd4}\\
&& \norm{e^{tL_1}\Pi_s}_{H^1\to H^1}\leq Ce^{-(\zeta_{1}-s_1)t}
\quad\mbox{for all}\;t\geq 0.\label{bd5} \eea

Invoking these bounds, and by choosing the parameter $\delta$ in the
cut-off of the nonlinearity sufficiently small, the existence of a fixed point to
\eqref{fixed} follows from \eqref{cubic} by the contraction
%CHANGED: added this, important for completeness/correctness! -K
%mapping principle.
mapping principle, from which we find easily also Lipshitz regularity
of $\Phi$.
The asserted $C^\infty$ regularity of $\Phi$ may be established
by a careful iterative argument as described in \cite{Bressan}, using
$C^\infty$ regularity of the truncated equations
(in general, the center manifold inherits one degree less regularity
than the underlying equations);
we omit discussion of this delicate point.
%ENDCHANGED
The local center manifold for the untruncated
problem is then realized through \eqref{cmg} by choosing $\delta_0$
and $\e_0$ sufficiently small.

The rotational and PT invariance of $\mathcal{M}$ follow from the
fact that the center subspace $S_c$ enjoys these invariances and the
fact that for any $\theta_0\in\R$ one has
\[
\Gamma(e^{i\theta_0}
u,\e,e^{i\theta_0}y)=e^{i\theta_0}\Gamma(u,\e,y),\quad\mbox{as well
as}\quad
\Gamma(u^{\da},\e,y^{\da})=\left(\Gamma(u,\e,y)\right)^{\da}.
\]
With regard to this last assertion, note in particular that
$\mathcal{N}^{\delta}(y^{\da},\e)=\left(\mathcal{N}^{\delta}(y,\e)\right)^{\da}.$
Also if $u=u^{\da}$ and if $y_{u,\e}=\Gamma(u,\e,y_{u,\e})$ then
necessarily
\[
y^{\da}_{u,\e}=\left(\Gamma(u,\e,y_{u,\e})\right)^{\da}=\Gamma(u,\e,y^{\da}_{u,\e})
\]
and so by the uniqueness of the fixed point, necessarily
$y^{\da}_{u,\e}=y_{u,\e}.$ Hence, in particular
$y^{\da}_{u,\e}(\cdot,0)=y_{u,\e}(\cdot,0)$ and we have
$\Phi(u,\e)^{\da}=\Phi(u,\e).$ Similarly,
$e^{i\theta_0}\Phi(u,\e)=\Phi(e^{i\theta_0}u,\e)$ for all
$\theta_0\in\R.$

Finally, we turn to the verification of \eqref{verytang}. This comes
from an examination of the iteration procedure leading to the fixed
point as follows. Picking $\delta_0$ sufficiently small, we may
argue that, for instance,
\[
\norm{\Gamma(u,\e,y_1)-\Gamma(u,\e,y_2)}_{\eta}<\frac12\norm{y_1-y_2}_{\eta}
\]
for all $u\in S_c$, all sufficiently small $\e$ and all
$y_1,\,y_2\in Y_{\eta}.$ Letting $y_u$ denote the solution to the
linear problem, i.e. $y_u:=e^{L_{1}t}u$, it then easily follows that
\[ \norm{y_u-y_{u,\e}}_{\eta}=
\norm{y_u-\Gamma(u,\e,y_{u,\e})}_{\eta}\leq
\norm{y_u-\Gamma(u,\e,y_{u})}_{\eta}+\norm{\Gamma(u,\e,y_{u})-\Gamma(u,\e,y_{u,\e})}_{\eta}
\] and so \[ \norm{y_u-y_{u,\e}}_{\eta}\leq 2\norm{y_u-\Gamma(u,\e,y_u)}_{\eta}.\]

Then we calculate \bea
&&\norm{\Phi(u,\e)-u}_{H^1}=\norm{y_{u,\e}(\cdot,0)-y_{u}(\cdot,0)}_{H^1}\nonumber\\
&&\leq \sup_{t\in\R}
\norm{y_{u,\e}(\cdot,t)-y_{u}(\cdot,t)}_{H^1}e^{-\eta|t|}
=\norm{y_{u,\e}-y_{u}}_{\eta}\nonumber\\
&&\leq 2\norm{y_u-\Gamma(u,\e,y_u)}_{\eta}\leq 2C
\left(\norm{u}_{H^1}^3+\abs{\e}\norm{u}_{H^1}\right),\nonumber \eea
where in the last estimate we invoked \eqref{cubic}, \eqref{fixed}
and \eqref{bd1}--\eqref{bd3}.

\end{proof}
We also will need a version of the standard result on exponential
attraction to an orbit on the center manifold in the absence of any
unstable manifold. Again the proof we sketch is an adaptation of a
more general but somewhat weaker result
in \cite{Bressan} that is valid in the presence of an unstable manifold.

\begin{theorem}\label{exp}
For any positive integer $k$ and $r=r(k)>0$ sufficiently small,
there exists a
%CHANGED (need this later)-K:
%Lipshitz
$C^k$
%ENDCHANGED
map $P_{\e}$
from $B(0,r)\subset H^1$ to $\mathcal{M}_{\e}$, equal to the
identity when restricted to $\mathcal{M}_{\e}$, such that, for all
solutions $\psi^{\e}$ of \eqref{reform} originating at time $t=0$
within $B(0,r)$, \beq \label{expbd} \|\psi^{\e}(t)-\hat
\psi^{\e}(t)\|_{H^1}\le C_1e^{-\eta t}d_{H^1}(\psi^{\e}(0),
\mathcal{M}_{\e}), \eeq so long as $\psi^{\e}$ remains in $B(0,r)$,
where $\hat \psi^{\e} \in \mathcal{M}_{\e}$ denotes the trajectory
along $\mathcal{M}_{\e}$ originating at time $t=0$ at
%CHANGED (notation):
%$P_{\e}\psi^{\e}(0)$ and $\eta>0$ and $C_1$ are uniform constants.
$P_{\e}(\psi^{\e}(0))$ and $\eta>0$ and $C_1$ are uniform constants.
%ENDCHANGED
Here $d_{H^1}(\cdot,\cdot)$ denotes the distance in $H^1.$
\end{theorem}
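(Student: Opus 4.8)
The plan is to establish the asymptotic‐phase (exponential attractivity) property of the center manifold, which is available here precisely because, by Lemma~\ref{Mspec}, $L_1$ has no unstable spectrum. Throughout one argues with the truncated equation $\psi_t=L_1\psi+\mathcal{N}^{\delta}(\psi,\e)$, whose solutions coincide with those of \eqref{reform} as long as they remain in $B(0,r)$ with $r<\delta$; choosing $\delta$ small makes $\mathcal{N}^{\delta}(\cdot,\e)$ globally Lipschitz with an arbitrarily small constant, and by \eqref{verytang} the graph map $\Phi(\cdot,\e)$ is globally defined and Lipschitz‐close to the identity. In particular $h(\cdot,\e):=\Pi_s\Phi(\cdot,\e)$ maps $S_c$ into the stable subspace with bounded (small) Lipschitz norm and $\Phi(u,\e)=u+h(u,\e)$ for $u\in S_c$. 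Given a solution $\psi^{\e}$, set $p(t):=\Pi_c\psi^{\e}(t)$, $q(t):=\Pi_s\psi^{\e}(t)$, and measure the ``vertical'' deviation from $\mathcal{M}_{\e}$ by $\xi(t):=q(t)-h(p(t),\e)=\psi^{\e}(t)-\Phi(p(t),\e)\in\Pi_s H^1$. The scheme is: (i) show $\xi$ decays exponentially, so $\psi^{\e}$ approaches the manifold; (ii) construct, by an asymptotic‐phase argument, a genuine orbit on $\mathcal{M}_{\e}$ which $\psi^{\e}$ tracks, and take its time‐zero value as $P_{\e}(\psi^{\e}(0))$.

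For step (i), differentiate $\xi$ and use the invariance identity satisfied by $h$ (equivalently, the fact that $\Phi(\cdot,\e)$ carries orbits of the reduced ODE to orbits of the truncated flow, which follows from the invariance and graph structure asserted in Theorem~\ref{cm}). One obtains $\xi_t=L_1\xi+G(t)$ with $G(t)=\bigl[\Pi_s-D_p h(p(t),\e)\Pi_c\bigr]\bigl(\mathcal{N}^{\delta}(p+q,\e)-\mathcal{N}^{\delta}(p+h(p,\e),\e)\bigr)\in\Pi_s H^1$ and, since $(p+q)-(p+h(p,\e))=\xi$ and $D_p h$ is bounded, $\norm{G(t)}_{H^1}\le C(\delta^2+\abs{\e})\norm{\xi(t)}_{H^1}$. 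Representing $\xi(t)=e^{L_1 t}\xi(0)+\int_0^t e^{L_1(t-\tau)}G(\tau)\,d\tau$ and invoking the decay bound \eqref{bd5} for $e^{tL_1}\Pi_s$ together with a Gronwall estimate gives $\norm{\xi(t)}_{H^1}\le C e^{-\eta t}\norm{\xi(0)}_{H^1}$ for some $\eta\in(0,\zeta_1)$, provided $\delta$ and $\abs{\e}$ are small. This already shows $\norm{\psi^{\e}(t)-\Phi(p(t),\e)}_{H^1}\le C e^{-\eta t}\norm{\xi(0)}_{H^1}$, but $\Phi(p(\cdot),\e)$ need not be an orbit on $\mathcal{M}_{\e}$, so a phase correction is still required.

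For step (ii), seek $\hat p(t)\in S_c$ solving the reduced ODE $\hat p_t=\Pi_c\bigl(L_1\hat p+\mathcal{N}^{\delta}(\hat p+h(\hat p,\e),\e)\bigr)$ with $\varrho(t):=p(t)-\hat p(t)\to 0$ as $t\to\infty$. Subtracting the equations for $p$ and $\hat p$ and splitting the nonlinear difference as $\bigl(\mathcal{N}^{\delta}(p+q,\e)-\mathcal{N}^{\delta}(p+h(p,\e),\e)\bigr)+\bigl(\mathcal{N}^{\delta}(p+h(p,\e),\e)-\mathcal{N}^{\delta}(\hat p+h(\hat p,\e),\e)\bigr)$ yields $\varrho_t=L_1\varrho+\Pi_c\bigl(R(t)+F(t)\bigr)$ with $\norm{R(t)}_{H^1}\le C(\delta^2+\abs{\e})\norm{\varrho(t)}_{H^1}$ and, by step (i), $\norm{F(t)}_{H^1}\le C(\delta^2+\abs{\e})e^{-\eta t}\norm{\xi(0)}_{H^1}$. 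Since $L_1$ restricted to $S_c$ has purely imaginary (or zero) spectrum, $\norm{e^{tL_1}\Pi_c}_{H^1\to H^1}\le C$ for all $t\in\R$ by \eqref{bd4}, so the homogeneous flow provides no decay and a decaying $\varrho$ can only be produced by integrating the forcing from $+\infty$: $\varrho(t)=-\int_t^{\infty}e^{L_1(t-\tau)}\Pi_c\bigl(R(\tau)+F(\tau)\bigr)\,d\tau$. Viewing this as a fixed‐point equation for $\varrho$ in the weighted space $\{\varrho:\ \sup_{t\ge 0}e^{\eta' t}\norm{\varrho(t)}_{H^1}<\infty\}$ for a fixed $\eta'\in(0,\eta)$, the $R$‐term contributes a Lipschitz factor $\le C(\delta^2+\abs{\e})/\eta'<1/2$ (a contraction, for $\delta,\abs{\e}$ small) and the $F$‐term is a fixed inhomogeneity of weighted norm $\le C\norm{\xi(0)}_{H^1}$; the Banach fixed‐point theorem gives a unique such $\varrho$ with $\sup_{t\ge0}e^{\eta' t}\norm{\varrho(t)}_{H^1}\le C\norm{\xi(0)}_{H^1}$.

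Now set $\hat\psi^{\e}(t):=\Phi(\hat p(t),\e)\in\mathcal{M}_{\e}$, where $\hat p=p-\varrho$, define $P_{\e}(\psi^{\e}(0)):=\hat\psi^{\e}(0)$, and note $\hat p$ stays small so that $\hat\psi^{\e}$ is a genuine orbit of \eqref{reform}. Then $\psi^{\e}(t)-\hat\psi^{\e}(t)=\xi(t)+\bigl(\Phi(p(t),\e)-\Phi(\hat p(t),\e)\bigr)$, so steps (i)--(ii) and the Lipschitz bound on $\Phi$ give $\norm{\psi^{\e}(t)-\hat\psi^{\e}(t)}_{H^1}\le C e^{-\eta' t}\norm{\xi(0)}_{H^1}$; and $\norm{\xi(0)}_{H^1}\le C\,d_{H^1}(\psi^{\e}(0),\mathcal{M}_{\e})$, because if $\Phi(u^{*},\e)$ nearly realizes the distance then $\norm{u^{*}-p(0)}_{H^1}=\norm{\Pi_c(\Phi(u^{*},\e)-\psi^{\e}(0))}_{H^1}\le C\,d_{H^1}(\psi^{\e}(0),\mathcal{M}_{\e})$ and the Lipschitz property of $\Phi$ closes the estimate; renaming $\eta'$ as $\eta$ yields \eqref{expbd}. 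That $P_{\e}$ is the identity on $\mathcal{M}_{\e}$ is immediate: for data on $\mathcal{M}_{\e}$ local invariance (Theorem~\ref{cm}) forces $\xi\equiv0$, hence $F\equiv0$, hence $\varrho\equiv0$ and $\hat\psi^{\e}\equiv\psi^{\e}$. The $C^k$ dependence of $P_{\e}$ on $\psi^{\e}(0)$ follows from the uniform contraction principle applied to the fixed‐point equations for $\xi$ and $\varrho$ in the exponentially weighted spaces, using $C^k$ dependence of the truncated flow and of $\Phi$ on their arguments, with the customary loss of at most one derivative --- the same delicate bookkeeping as in the construction of $\Phi$ (cf. \cite{Bressan}, \cite{Henry}), which we do not belabor. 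The substantive point is step (ii): on the center subspace the semigroup is merely bounded, so attraction to an actual orbit (not merely to the manifold) is not automatic and must be forced by integrating the phase correction from $t=+\infty$ and closing a contraction in a weighted norm --- this is exactly where the absence of an unstable subspace enters, since otherwise one could assert only attraction to a center‐unstable manifold.
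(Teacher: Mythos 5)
Your proposal is correct, and it reaches the conclusion by a genuinely different (though equally classical) route than the paper. The paper follows Bressan's scheme: it extends $\psi^{\e}$ backward by a constant, seeks a single correction $z$ with $\psi^{\e}+z\in\mathcal{M}_{\e}$ as the fixed point of a modified version of the same integral operator $\Gamma$ used to build the center manifold (center part integrated from $t$ to $+\infty$, stable part from $-\infty$ to $t$), and then controls $|z|_{\eta}$ by $|\Lambda(0)|_{\eta}$, which is in turn bounded by $d_{H^1}(\psi^{\e}(0),\mathcal{M}_{\e})$ via the observation that the auxiliary map $F$ vanishes on $\mathcal{M}_{\e}$. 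You instead exploit the graph structure $\Phi=\mathrm{id}+h$ and split the correction into two pieces handled by separate arguments: a forward Duhamel/Gronwall estimate showing exponential decay of the vertical deviation $\xi=\Pi_s\psi^{\e}-h(\Pi_c\psi^{\e},\e)$ (attraction to the manifold), followed by a backward-from-infinity contraction for the asymptotic phase $\varrho$ on the center subspace (identification of the shadowing orbit). Both proofs rest on the same ingredients --- the exponentially weighted spaces, the smallness of the Lipschitz constant of $\mathcal{N}^{\delta}$, the decay bound \eqref{bd5} for $e^{tL_1}\Pi_s$, and the absence of an unstable subspace --- and both defer the $C^k$ regularity of $P_{\e}$ to the same iterative bookkeeping as for $\Phi$. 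What your decomposition buys is transparency: the attraction to $\mathcal{M}_{\e}$ appears as a standalone intermediate estimate, and the role of the merely bounded center semigroup (forcing the phase correction to be integrated from $t=+\infty$) is made explicit; what the paper's version buys is economy, since it recycles the fixed-point operator of Theorem \ref{cm} verbatim and produces the projection $P_{\e}$ and the estimate \eqref{expbd} in one stroke. Your final reduction of $\norm{\xi(0)}_{H^1}$ to $d_{H^1}(\psi^{\e}(0),\mathcal{M}_{\e})$ via the boundedness of $\Pi_c$ and the Lipschitz property of $\Phi$ plays the role of the paper's estimate \eqref{Fin}, and is correct.
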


\begin{proof}
As noted in \cite{TZ}, this follows by the proof of the more general
approximation property (v) of the Center Manifold Theorem stated in
\cite{Bressan}, restricted to the case that the underlying
linearized operator ($L_1$ in this case) has no unstable manifold.

To say a bit more about our adaptation of the approach presented in
\cite{Bressan}, given a solution $\psi^{\e}=\psi^{\e}(x,t)$ to
\eqref{reform}, one first extends $\psi^{\e}$ to a function
$\bar{\psi^{\e}}$ defined for negative $t$-values via
\[\bar{\psi^{\e}}=\left\{\begin{matrix} \psi^{\e}
& \quad\mbox{for}\;t\geq 0\\
\psi^{\e}(0)& \quad\mbox{for}\;t< 0.\end{matrix}\right.
\]
Thus, $\bar{\psi^{\e}}$ represents a globally bounded solution to
the equation $\bar{\psi^{\e}}_t=L_1
\bar{\psi^{\e}}+\mathcal{N}(\bar{\psi^{\e}},\e)+\phi^{\e}$ where
\beq\phi^{\e}=\phi^{\e}(x,t):=\left\{\begin{matrix} 0
& \quad\mbox{for}\;t> 0\\
-L_1 \psi^{\e}(0)-\mathcal{N}(\psi^{\e},\e)& \quad\mbox{for}\;t<
0.\end{matrix}\right. \label{phiL}\eeq Then we fix any positive
$\eta$ such that $\eta<\zeta_1$ (cf. \eqref{gap}) and seek a
function $z$ such that $\bar{\psi^{\e}}+z\in \mathcal{M}_{\e}.$ We
will find such a $z$ in the set
\[ Z_{\eta}:= \{z\in
C\left((\infty,\infty);H^1((-1,1))\right):\;|z|_{\eta}<\infty\}
\]
where $|z|_{\eta}:=\sup_{t\in\R}e^{\eta
t}\norm{z(\cdot,t)}_{H^1((-1,1))}$ and then define the projection
$P_{\e}$ onto $\mathcal{M}_{\e}$ via
$P_{\e}(\psi^{\e}(0)):=\psi^{\e}(0)+z(0).$ Thus, the trajectory on
the center manifold satisfying \eqref{expbd} will be
$\hat{\psi^{\e}}(t):=\psi^{\e}(t)+z(t).$ The function $z$ is
produced as follows: plugging $\bar{\psi^{\e}}+z$ into the integral
equation \eqref{fixed} leads one to seek $z$ as a fixed point of the
mapping $\Lambda:Z_{\eta}\to Z_{\eta}$ defined by \bea &&
\Lambda(z):=
-\int_t^{\infty}e^{L_{1}(t-\tau)}\Pi_c\,\left[\mathcal{N}^{\delta}\left(\bar{\psi^{\e}}(x,\tau)+z(x,\tau),\e\right)
-\mathcal{N}^{\delta}\left(\bar{\psi^{\e}}(x,\tau),\e\right)\right]\,d\tau+\nonumber\\
&&\int_t^{\infty}e^{L_{1}(t-\tau)}\Pi_c\,\phi^{\e}(x,\tau)\,d\tau +
\int_{-\infty}^t e^{L_{1}(t-\tau)}\Pi_s\,
\left[\mathcal{N}^{\delta}\left(\bar{\psi^{\e}}(x,\tau)+z(x,\tau),\e\right)
-\mathcal{N}^{\delta}\left(\bar{\psi^{\e}}(x,\tau),\e\right)\right]\nonumber\\
&&-\int_{-\infty}^t e^{L_{1}(t-\tau)}\Pi_s\,\phi^{\e}(x,\tau)
\,d\tau.
\nonumber\\
\label{newfixed} \eea Again the existence of a (unique such) fixed point
$z\in Z_{\eta}$ follows readily from the contraction mapping
principle since one can check that
$|\Lambda(z_1)-\Lambda(z_2)|_{\eta}\leq \theta |z_1-z_2|_{\eta}$ for
some $\theta\in (0,1).$
%CHANGED:
The fixed point $z$, and thus the map $P_\eps$ that it determines,
is Lipshitz in $\psi^\eps(0)$ by construction. With further effort,
it may be shown to be $C^k$ for any $k$, by a procedure similar to
that used to show smoothness of the center manifold \cite{Bressan}
in the analogous fixed-point construction of Proposition \ref{cm},
using $C^k$ regularity of both the center manifold and the truncated
equations. (In general, $P_\eps$ inherits the regularity of the
center manifold.) Note that the center manifold solution
$\hat{\psi^{\e}}(t)$ so constructed satisfies the truncated
equations \eqref{fixed} and not \eqref{reform}, since the righthand
side of \eqref{newfixed} involves the truncated nonlinearity
$\mathcal{N}^\delta$ in place of $\mathcal{N}$. However, this makes
no difference since the equations agree on the ball $B(0,r)$ under
consideration, for $r>0$ sufficiently small.

Hence, we have \[  |z|_{\eta}\leq
|z-\Lambda(0)|_{\eta}+|\Lambda(0)|_{\eta}
=|\Lambda(z)-\Lambda(0)|_{\eta}+|\Lambda(0)|_{\eta}\leq \theta
|z|_{\eta}+|\Lambda(0)|_{\eta},\] and so we conclude that \beq
|z|_{\eta}\le C|\Lambda(0)|_{\eta}. \label{zeieq}\eeq One then
observes from \eqref{phiL} and \eqref{newfixed} that for $t\geq 0$,
\beq \Lambda(0)(t)= e^{L_1 t}\Pi_s F(\psi^{\e}(0)),\label{Flamb}\eeq
where $F:H^1\to H^1$ is given by \[F(v):=\int_{-\infty}^0 e^{-L_1
\tau} \Pi_s \left(L_1 v+ N^\delta(v,\e)\right)\,d\tau.\] Note that
$F$ is evidently bounded and Lipshitz.
We next claim that if
$\psi(0)\in \mathcal{M}_{\e}$, then $F(\psi^{\e}(0))=0.$ To see
this, note that in this case the invariance property of the center
manifold implies that $\psi^{\e}\in\mathcal{M}_{\e}$ for $t\not =0$
as well. Hence the unique fixed point of \eqref{newfixed} must be
$z\equiv 0$ and so in particular $\Lambda(0)(t)=0$ for $t>0$, which
establishes the claim.

Finally, fixing any element $\psi^{\e}(0)\in
H^1\setminus\mathcal{M}_{\e}$ with sufficiently small $H^1$-norm,
one uses this last observation to obtain
\beq\norm{F(\psi^{\e}(0))}_{H^1}=\inf_{\psi_1\in\mathcal{M}{\e}}\norm{F(\psi^{\e}(0))-F(\psi_1)}_{H^1}\leq
Cd_{H^1}(\psi^{\e}(0),\mathcal{M}_{\e}). \label{Fin}\eeq The bound
\eqref{expbd} now follows by combining \eqref{zeieq}, \eqref{Flamb}
and \eqref{Fin} and using the bound \eqref{bd5}, since
$z=\hat{\psi}-\psi.$
\end{proof}

%CHANGED (added for later reference):
\begin{remark}\label{foliation}
{\rm Using the Implicit Function Theorem and the fact that $P_\eps$
is the identity on $\cMe$, we find that $H^1$ is foliated on a small
neighborhood of $\cMe$ by transverse smooth manifolds
$P_\eps^{-1}(w)$ through each $w\in \cMe$, depending in a smooth
fashion on the value of $w$. In particular, for $\overline{C}>0$
sufficiently large and $a>0$ sufficiently small, the $H^1$-ball
$B(0,a)$ is foliated by leaves $P_\eps^{-1}(w)$ for $w\in \cMe \cap
B(0,\overline{C}a)$, carried one to the other under the flow of the
underlying ODE, uniquely specified by the property that each
solution initiating in $P_\eps^{-1}(w_0)$ approaches the solution on
the center manifold with initial data $w_0$ at uniform exponential
rate $\sim 1>>\eps$. As a consequence, a $C^k$ stable manifold
$\cN\subset\cMe\cap B(0,a)$ of an orbit or manifold of orbits within
the center manifold $\cMe$ extends to a $C^k$ stable manifold
$\tilde N:=\cup_{w\in \cN} P_\eps^{-1}(w)\cap B(0,a)$ in $B(0,a)$ of
the same codimension in $B(0,a)$ as the codimension of $\cN$ in
$\cMe$. That is, not only is asymptotic stability in $B(0,a)$
determined completely by asymptotic stability within the center
manifold, but also
%the degree of
conditional stability as measured by codimension of the stable
manifold. }
%ENDCHANGED
\end{remark}

%NOTE FOR KZ's ODE CLASS: similar foliations work in presence
%of unstable manifold, now into stable AND unstable directions,
%following more general result of Bressan.  What we lose is the property of
%uniform exponential approach, NOT the foliation.
%TODO: tricky point is to establish correct dim./codimension of these objects...
%obvious in this case, less obvious maybe for general case... -KZ

We now apply the previous result on existence of a center manifold
to assert the existence of bifurcating stationary and periodic
states for equation \eqref{reform}.

We begin with the case of stationary states bifurcating from the
normal state. We refer to Section 4 for the definition \eqref{h287}
of the parameter $\chi_{11}$ which was found numerically to be real
for $I\leq I_c$, positive for $I_k<I<I_c$ and negative for $0<I<I_k$
where $I_k\approx 10.93$ and $I_c\approx 12.31.$
\begin{proposition}\label{stat}
For $I$ fixed in the interval $(0,I_k)$, equation \eqref{reform}
exhibits a stable supercritical pitchfork bifurcation of stationary
states $\{e^{i\theta_0} \psi_e(\cdot,\e):\;\theta_0\in [0,2\pi)\}$
branching from the normal state for all sufficiently small and
positive values of $\e$.  These equilibria satisfy the bound
 \beq
 \norm{\psi_e-\,(\frac{1}{\sqrt{-\chi_{11}}})\,\e^{1/2}\,u_1}_{H^1}<C\e^{3/2}
 \label{stable1}\eeq
as predicted formally in Section 4

For $I$ fixed in the interval $(I_k,I_c)$, the equation exhibits an
unstable subcritical pitchfork bifurcation of stationary states
$e^{i\theta_0}\tilde{\psi}_e(\cdot,\e),\;\theta_0\in [0,2\pi),$
branching from the normal state for all sufficiently small and
negative values of $\e$. These equilibria satisfy the bound
 \beq
 \norm{\tilde{\psi}_e-\,(\frac{1}{\sqrt{\chi_{11}}})\,|\e|^{1/2}\,u_1}_{H^1}<C|\e|^{3/2}
 \label{unstable1}\eeq
as predicted formally in Section 4.
\end{proposition}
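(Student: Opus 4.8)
The plan is to reduce the full nonlocal PDE \eqref{reform} to the flow on the center manifold $\mathcal{M}$ supplied by Theorem \ref{cm}, to exploit the rotational invariance \eqref{phirot} and the PT-symmetry of $\mathcal{M}$ to put the reduced equation into a canonical form, and then to deduce all three assertions — existence and type of the pitchfork, stability, and the asymptotic formulas \eqref{stable1}--\eqref{unstable1} — from a single scalar ODE for the modulus of the center-manifold coordinate.

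First I would set up the reduction. For $I\in(0,I_c)$ the center subspace is one-complex-dimensional, $S_c=\mathrm{span}\{u_1\}$ with $u_1^{\da}=u_1$, so by Theorem \ref{cm} every point of $\mathcal{M}_\e$ has the form $\Phi(\beta u_1,\e)$ with $\beta\in\C$, and the flow of \eqref{reform} on $\mathcal{M}_\e$ is governed by a $C^k$ ODE $\dot\beta=G(\beta,\e)$. To identify $G$, I would substitute $\psi=\Phi(\beta u_1,\e)$ into \eqref{reform} and apply the spectral projection $\Pi_c$ onto $S_c$; since $\Pi_c\Phi(\beta u_1,\e)=\beta u_1$ (graph property), $\Pi_c$ commutes with $L_1$, and $L_1u_1=0$ because $\lambda_1$ is real for $I<I_c$, the linear contribution of $L_1\Phi$ drops out and only the $\e$-shift and the cubic nonlinearity remain. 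Using the tangency bound \eqref{verytang} to write $\Phi(\beta u_1,\e)=\beta u_1+O(|\beta|^3+|\e||\beta|)$ in $H^1$, the cubic estimate \eqref{cubic}, and the explicit form \eqref{nonnot1} of $\mathcal{N}$ (noting that the pairing effected by $\Pi_c$ against $u_1$ is the bilinear one $f\mapsto\bigl(\int_{-1}^1 fu_1\bigr)/\bigl(\int_{-1}^1 u_1^2\bigr)\,u_1$, the left eigenfunction being $u_1^{\da}=u_1$ by PT-symmetry), one recovers exactly the formal computation of Section 4:
\[
G(\beta,\e)=\e\beta+\chi_{11}|\beta|^2\beta+O\!\left(|\beta|^5+|\e||\beta|^3\right),
\]
with $\chi_{11}$ the real quantity \eqref{h287}.

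Next I would use the two symmetries to normalize $G$. Rotational invariance \eqref{phirot} forces $G(e^{i\theta}\beta,\e)=e^{i\theta}G(\beta,\e)$, hence $G(\beta,\e)=\beta\,F(|\beta|^2,\e)$ for a $C^k$ function $F$ with $F(s,\e)=\e+\chi_{11}s+O(s^2+|\e|s)$; PT-symmetry of $\mathcal{M}$ (which acts as $\beta\mapsto\bar\beta$ on the coordinate, since $\Phi(u,\e)=\Phi^{\da}(u,\e)$ when $u=u^{\da}$) then forces $G(\bar\beta,\e)=\overline{G(\beta,\e)}$, i.e. $F$ is real-valued. Consequently $\arg\beta$ is constant along the flow on $\mathcal{M}_\e$, and $s:=|\beta|^2$ satisfies the scalar equation $\dot s=2sF(s,\e)$. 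Since $\chi_{11}\neq0$ for $I\neq I_k$, the implicit function theorem applied to $F(s,\e)=0$ gives a unique branch $s=s_*(\e)=-\e/\chi_{11}+O(\e^2)$ through the origin; $s_*(\e)>0$ exactly for $\e>0$ when $\chi_{11}<0$ (the supercritical case $0<I<I_k$) and exactly for $\e<0$ when $\chi_{11}>0$ (the subcritical case $I_k<I<I_c$). Each $s_*(\e)$ yields a circle $\{\Phi(\sqrt{s_*(\e)}\,e^{i\theta_0}u_1,\e):\theta_0\in[0,2\pi)\}$ of equilibria of the reduced flow, which by the invariance of $\mathcal{M}$ in Theorem \ref{cm} are genuine stationary solutions of \eqref{reform}; linearizing $\dot s=2sF(s,\e)$ at $s_*(\e)$ gives the rate $2s_*(\e)F'(s_*(\e),\e)=-2\e(1+O(\e))$, negative in the supercritical case and positive in the subcritical one, so the branch is orbitally asymptotically stable (modulo the neutral symmetry circle) when $0<I<I_k,\ \e>0$, and unstable when $I_k<I<I_c,\ \e<0$; Theorem \ref{exp} together with Remark \ref{foliation} promotes these conclusions from $\mathcal{M}_\e$ to $H^1$. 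The estimates \eqref{stable1}--\eqref{unstable1} then follow by writing $\psi_e=\Phi(\sqrt{s_*(\e)}\,u_1,\e)$, applying \eqref{verytang}, and using $\sqrt{s_*(\e)}=|\e|^{1/2}/\sqrt{|\chi_{11}|}+O(|\e|^{3/2})$ together with $\|u_1\|_{H^1}<\infty$.

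The step I expect to be the main obstacle is the identification and normalization of the reduced vector field $G$: one must verify not only that its cubic coefficient is exactly $\chi_{11}$, but — crucially — that PT-symmetry forces $F$ to be real. This is what decouples the phase and collapses the a priori two-real-dimensional reduced system to the scalar modulus equation, so that the bifurcating set consists of honest equilibria rather than rotating waves; the associated bookkeeping of the $O(|\beta|^5+|\e||\beta|^3)$ remainder via \eqref{verytang} and \eqref{cubic} is routine once this structure is in hand.
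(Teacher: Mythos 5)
Your proposal is correct and follows essentially the same route as the paper: reduction to the center-manifold ODE $\dot\beta=\e\beta+\chi_{11}|\beta|^2\beta+\mathcal{O}(\e|\beta|^3+|\beta|^5)$, use of PT-symmetry to reduce to a real scalar problem, the implicit function theorem for the branch $\e=-\chi_{11}\beta^2+\dots$, the tangency bound \eqref{verytang} for the estimates \eqref{stable1}--\eqref{unstable1}, and Theorem \ref{exp} to pass from stability on $\mathcal{M}_\e$ to stability in $H^1$. The only (harmless) cosmetic difference is that you encode PT-symmetry as reality of the normalized vector field $F(|\beta|^2,\e)$, whereas the paper argues via the fixed-point equation that trajectories with real initial coordinate remain real.
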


\begin{proof} Since we work here in the setting where $I\leq I_c$,
the center subspace is spanned by the single eigenfunction $u_1.$
Hence, we will express any point on the center manifold
$\mathcal{M}$ as $(\Phi(\beta,\e),\e)$ where $\beta\in\C$
corresponds to the coefficient of the point $\beta u_1$ on the
center subspace. We begin with the case where $I$ is fixed to lie in
the interval $(0,I_k)$. Then for any small value $\beta_0\in\C$ we
let $\psi^{\e}$ denote the solution to \eqref{reform} satisfying
Dirichlet boundary conditions and initial condition
$\psi^{\e}(\cdot,0)=\Phi(\beta_0,\e)$. For all small, positive $t$,
we know from Theorem \ref{cm} that
$(\psi^{\e}(\cdot,t),\e)\in\mathcal{M}$ and so there exists a smooth
function, which we denote by $\beta^{\e}=\beta^{\e}(t)$, such that
\beq \psi^{\e}(x,t)=\Phi(\beta^{\e}(t),\e). \label{eqa}\eeq

Recalling the definition $\beta_1:=\int_{-1}^{1}u_1^2\,dx$ we now
apply the projection $\Pi_c$ to every term in the equation
\[
\psi^{\e}_t=L_{1}\psi^{\e}+\mathcal{N}(\psi^{\e},\e)
\]
satisfied by $\psi^{\e}$ and then integrate against $u_1.$ We find
that
\[
\int_{-1}^{1}\Pi_c\big(\psi^{\e}_t\big)u_1\,dx=\frac{\partial}{\partial
t}\int_{-1}^{1}\Pi_c\big(\psi^{\e}\big)u_1\,dx=\frac{\partial}{\partial
t}\int_{-1}^{1}\Pi_c\big(\Phi(\beta^{\e},\e)u_1\,dx\big)=\beta_1\beta^{\e}_t.
\]
We also have
\[
\int_{-1}^{1}\Pi_c\big(L_1
\psi^{\e}\big)u_1\,dx=\beta_1\int_{-1}^{1}u_1\,L_1 \psi^{\e}\,dx
=\beta_1\int_{-1}^{1}L_1 u_1\,\psi^{\e}\,dx=0
\]
and $
\int_{-1}^1\Pi_c\big(\e\psi^{\e}\big)u_1\,dx=\e\beta_1\beta^{\e}$.
Consequently, we obtain for $\beta^{\e}$ the O.D.E. \bea
\beta^{\e}_t &=&\e\beta^{\e}+\frac{1}{\beta_1}
\int_{-1}^1 \Pi_c\big(\mathcal{N}[\beta^{\e}u_1]\big)u_1\,dx+e(\beta^{\e},\e)\nonumber\\
&=&\e\beta^{\e}+\chi_{11}\abs{\beta^{\e}}^2\beta^{\e}+e(\beta^{\e},\e)\label{stateode}
\eea where we recall the calculation of
$\int_{-1}^1\Pi_c\big(\mathcal{N}[\beta^{\e}u_1]\big)u_1\,dx$
carried out in Section 4, and we have introduced \beq
e(\beta^{\e},\e):=\frac{1}{\beta_1}\Pi_c\big(\mathcal{N}[\Phi(\beta^{\e},\e)]\big)-\Pi_c\big(\mathcal{N}[
\beta^{\e}u_1]\big).
 \label{edefn}\eeq
Due to \eqref{phirot}, we know that \beq
e(e^{i\theta_0}\beta^{\e},\e)=e^{i\theta_0}e(\beta^{\e},\e)\quad\mbox{for
any}\;\theta_0\in\R\label{erot}\eeq and through elementary use of
the triangle and Cauchy-Schwartz inequalities applied to the
nonlinearity $\mathcal{N}$, along with \eqref{verytang}, we estimate
\bea && \abs{e(\beta^\e,\e)}\leq C
\norm{\mathcal{N}[\Phi(\beta^{\e},\e)]-\mathcal{N}[\beta^{\e}u_1]}_{L^{\infty}} \nonumber\\
&&\leq C\left(
\norm{\Phi(\beta^{\e},\e)}^2_{H^1}+\norm{\beta^{\e}u_1}^2_{H^1}\right)
\left(\norm{\Phi(\beta^{\e},\e)-\beta^{\e}u_1}_{H^1}\right)\nonumber\\
&&=\mathcal{O}\left(\e(\beta^{\e})^3+(\beta^{\e})^5\right).
\label{erest} \eea

Returning to \eqref{stateode}, consider first the case where
$\beta^{\e}(0)=\beta_0\in\R.$ We first claim that the function
$\beta^{\e}(t)$ must be real. To see this, we begin by noting that
since $\beta_0$ is real, the quantity $\beta_0 u_1$ is PT-symmetric.
Hence, in particular
$\psi^{\e}(\cdot,t_1)=\left(\psi^{\e}(\cdot,t_1)\right)^{\da}$ as
well for any fixed $t_1>0$ since $\psi^{\e}$ satisfies a
PT-symmetric initial condition $\Phi(\beta_0 u_1,\e)$ that is . Now
denote by $\psi^{\e,1}$ the unique solution in $Y_{\eta}$ to the
equation \beq
\psi^{\e,1}=\Gamma(\beta^{\e}(t_1)u_1,\e,\psi^{\e,1}).\label{ab}
\eeq By \eqref{eqa} we have
$\psi^{\e}(\cdot,t_1)=\psi^{\e,1}(\cdot,0),$ and consequently,
$\psi^{\e,1}(\cdot,0)=\left(\psi^{\e,1}(\cdot,0)\right)^{\da}.$
Evaluating \eqref{ab} at $t=0$ and applying the $\da$ operation to
both sides, we then conclude that \beq
\psi^{\e,1}(\cdot,0)=\Gamma(\beta^{\e}(t_1)^*u_1,\e,\psi^{\e,1})(\cdot,0)
\label{ac} \eeq as well. Applying the projection $\Pi_c$ to both
\eqref{ab} evaluated at $t=0$ and \eqref{ac}, we see that indeed
$\beta^{\e}(t_1)=\beta^{\e}(t_1)^*$ as claimed.

An easy application of the implicit function theorem then reveals
the existence of a smooth curve of zeros $\e=\e(\beta^{\e})$ to the
equation \[
\frac{\e\beta^{\e}+\chi_{11}(\beta^{\e})^3+e(\beta^{\e},\e)}{\beta^\e}=0
\]
such that $\e=-\chi_{11}(\beta^{\e})^2+\mathcal{O}((\beta^{\e})^4)$.
Hence, there exist smooth curves of equilibria $\beta^{\pm}(\e)$ to
\eqref{stateode} for all small, positive $\e$ with
\beq\beta^{\pm}(\e)=\pm
\frac{1}{\sqrt{-\chi_{11}}}\,\e^{1/2}+\mathcal{O}(\e^{3/2}).\label{betaplus}\eeq
Consequently, within the collection of points on the center manifold
of the form $\Phi(\beta,\e)$ with $\beta$ real, the functions
$\psi_e^{\pm}:=\Phi(\beta^{\pm}(\e),\e)$ represent a supercritical
pitchfork bifurcation of equilibria from the normal state. The bound
\eqref{stable1} follows immediately from \eqref{verytang}. In light
of the rotation invariance of the problem, it immediately follows
that there is in fact a circle of equilibria
$e^{i\theta_0}\psi_e,\;\theta_0\in [0,2\pi)$ where we have written
simply $\psi_e$ for $\psi_e^+.$

Regarding stability of these equilibrium, it is clear from
\eqref{stateode} and the estimate \eqref{erest} that given any
initial data on the center manifold of the form $\Phi(\beta_0
u_1,\e)$ with $\beta_0$ real, positive and say bounded by
$C\sqrt{\e}$, the solution to \eqref{stateode} will converge to
$\beta^{+}(\e)$ and so the solution to \eqref{reform} will converge
to $\psi_e.$ Then since in light of \eqref{erot}, \eqref{stateode} is
clearly rotationally invariant, it follows that for complex initial
data on the center manifold, i.e. initial data of the form
$\Phi(\beta_0 u_1,\e)$ where $\beta_0=\abs{\beta_0}e^{i\theta_0}$
for some non-zero phase $\theta_0$, necessarily the solution will
converge to $e^{i\theta_0}\Phi(\beta^{+}(\e),\e)$. Thus, one
concludes that the circle of equilibrium states
$\{e^{i\theta_0}\psi_e:\;\theta_0\in [0,2\pi)\}$ is asymptotically
stable on the center manifold.

Finally, suppose that we start with initial conditions for
\eqref{reform} that are close to the circle of equilibria but that
do not lie on the center manifold. That is, suppose we have \beq
\label{closestart} \norm{\psi^{\e}(0)-e^{i\theta_{0}}\psi_e}_{H^1}
<r_1\quad\mbox{for some}\;\theta_{0}\in [0,2\pi)\eeq but that
$\psi^{\e}(0)\not\in\mathcal{M}_{\e}$. Without loss of generality,
we ignore this rotation for the remainder of the argument and take
$\theta_{0}=0$. We will argue that for $r_1$ and $\e$ sufficiently
small, again the trajectory $\psi^{\e}(t)$ is exponentially
attracted to $\psi_e$. We take in particular, $r_1<\frac{1}{4C_1} r$
where $C_1\geq 1$ and $r$ are the constants appearing in Theorem
\ref{exp}. Since $\norm{\psi_e}_{H^{1}}\leq
\frac{2}{\sqrt{-\chi_{11}}}\sqrt{\e}$ we can assert that
\beq\norm{\psi^{\e}(0)}_{H^{1}}<r/2\label{tri4}\eeq by choosing $\e$
sufficiently small and appealing to \eqref{closestart}.
 Denoting by
$\hat{\psi^{\e}}$ the trajectory on $\mathcal{M}_{\e}$, it then
follows from \eqref{expbd} and \eqref{closestart} that for as long
as $\psi(t)$ obeys the bound $\norm{\psi^{\e}(t)}_{H^1}<r$, one has
the estimate \beq \norm{\psi^{\e}(t)-\hat{\psi^{\e}}(t)}_{H^1}\leq
C_1 r_1 e^{-\eta t} <\frac r4 e^{-\eta t}.\label{psmall}\eeq

Then the triangle inequality to implies that \beq
\norm{\hat{\psi^{\e}}(0)-\psi_e}_{H^1}\leq (1+C_{1})r_{1}.
\label{tritwo} \eeq

Now $\hat{\psi^{\e}}(t)\in \mathcal{M}_{\e}$ is necessarily given by
$\hat{\psi^{\e}}(t)=\Phi(\beta^{\e}(t),\e)$ where $\beta^{\e}$ is
governed by \eqref{stateode}. As we already noted, up to a rotation
which we again ignore, the equilibrium value $\beta^{+}(\e)$ that is
stable under this flow, both in the sense of (exponential)
asymptotic approach, $\beta^{\e}(t)\to\beta^{+}(\e)$ and in the
sense that $\beta^{\e}(t)$ will stay close to this equilibrium for
all time. Choosing $r_{1}$ still smaller if necessary, we may appeal
to \eqref{tritwo} to conclude that
$\abs{\beta^{\e}(0)-\beta^{+}(\e)}$ is small and then the Lipschitz
property of the map $\Phi$ allows us to assert that, for instance,
\beq \norm{\hat{\psi^{\e}}(t)-\psi_e}_{H^1}< r/4\quad\mbox{for
all}\;t\geq 0.\label{tri3} \eeq

It follows from \eqref{tri4}, \eqref{psmall} and \eqref{tri3} that
in fact $\norm{\psi^{\e}(t)}_{H^{1}}<r$ for all $t\geq 0$ and so
\eqref{psmall} is valid for all time. Combining \eqref{psmall} with
the exponential approach of $\hat{\psi^{\e}}$ to $\psi_e$ along the
center manifold, we obtain the asymptotic stability of all
trajectories $\psi^{\e}(t)$ satisfying \eqref{closestart}.

The case where $I_k<I<I_c$ is handled similarly. Recall that in this
case, the parameter $\chi_{11}$ takes a positive value.
 Working then with $\e$ small and negative, and again
starting with initial data of the form $\Phi(\beta_0,\e)$ with
$\beta_0$ real, we find that \eqref{stateode} is now replaced by
\beq \beta^{\e}_t=
\e\beta^{\e}+\chi_{11}(\beta^{\e})^3+e(\beta^{\e},\e).
\label{unstateode}\eeq Another application of the implicit function
theorem reveals that \eqref{unstateode} possesses a pair of unstable
equilibria $\tilde{\beta}^{\pm}(\e)$ for all small negative
$\e$-values with $\tilde{\beta}^{\pm}(\e)=\pm
\frac{1}{\sqrt{\chi_{11}}}\,|\e|^{1/2}+\mathcal{O}(|\e|^{3/2})$. An
examination of \eqref{unstateode} shows that the corresponding
equilibria $\tilde{\psi}_e^{\pm}:=\Phi(\tilde{\beta}^{\pm}(\e),\e)$
bifurcating subcritically from the normal state are unstable as well
since even nearby PT-symmetric points on the center manifold, that
is points of the form $\beta u_1$ for $\beta$ real and near
$\beta^{\pm}(\e)$, flow away from them. Writing simply $\tilde{\psi}_e$
for $\tilde{\psi}_e^+$, the same is of course true for any of the
equilibria on the circle $\{e^{i\theta_0}\tilde{\psi}_e:\;\theta_0\in
[0,2\pi)\}.$

\end{proof}

We turn now to the case where the applied current $I$ satisfies
$I>I_c$ and a bifurcation to a periodic state occurs. We recall that
in this parameter regime the constants $\chi_{11}$ and $\chi_{12}$
are not real.
\begin{proposition}\label{Hopf}
Fix the applied current $I$  in the interval $(I_c,\infty)$. Then
provided that the constants $\hat{\chi}$ and $\tilde{\chi}$ given by
and \eqref{chihatdefn} and \eqref{chitildedefn}  respectively both
have positive real parts, the equation \eqref{reform} exhibits a
stable Hopf bifurcation to a periodic state $\psi_p=\psi_p(x,t,\e)$
branching from the normal state for all sufficiently small and
positive values of $\e$. This bifurcating solution, $\psi_{p}$ obeys
the estimate \beq
\norm{\psi_p-e^{i\theta_0}\left(\beta_{1,p}^\e(t)v_1+\beta_{2,p}^\e(t)v_2\right)}_{H^1}<C\e^{3/2}
\label{stable2}
 \eeq
for some $\theta_0\in [0,2\pi)$, where the real functions
$\beta_{1,p}^\e$ and $\beta_{2,p}^\e$ take the form \bea
&&(\beta_{1,p}^\e(t),\beta_{2,p}^\e(t))=
\left(\frac{\sqrt{\e}}{\sqrt{{\rm{Re}}\,\tilde{\chi}}}+\mathcal{O}(\e^{3/2})\right)
\nonumber\\
&&\times \bigg(\cos\left[\left(\imu+\frac{{\rm{Im}}\,\tilde{\chi}}{{\rm{Re}}\,\tilde{\chi}}\e+\mathcal{O}(\e^{3/2})\right)t\right],
\sin\left[\left(\imu+\frac{{\rm{Im}}\,\tilde{\chi}}{{\rm{Re}}\,\tilde{\chi}}\e+\mathcal{O}(\e^{3/2})\right)t\right]\bigg).
\nonumber\\
&&\label{ampe}\eea
\end{proposition}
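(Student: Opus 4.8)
The plan is to follow the template of Proposition~\ref{stat}, now over the two-complex-dimensional center subspace $S_c={\rm span}\{v_1,v_2\}$, organizing the argument into three layers: reduction of \eqref{reform} (for data on $\mathcal{M}_{\e}$) to a finite O.D.E.\ system; a classical planar Hopf bifurcation inside the \emph{PT-symmetric} part of $\mathcal{M}_{\e}$, governed by $\tilde\chi$; and normal attraction of that submanifold within $\mathcal{M}_{\e}$, governed by $\hat\chi$, followed by the usual lift off the center manifold via Theorem~\ref{exp} and Remark~\ref{foliation}. Concretely, I would fix $I>I_c$, take $\ell=1$ in Lemma~\ref{Mspec}, and write a solution of \eqref{reform} issuing from $\mathcal{M}_{\e}$ as $\psi^\e(\cdot,t)=\Phi\!\left(\beta_1^\e(t)v_1+\beta_2^\e(t)v_2,\e\right)$, $(\beta_1^\e,\beta_2^\e)\in\C^2$. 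Applying $\Pi_c$ to \eqref{reform} and pairing against the PT-dual basis of $\{v_1,v_2\}$ (the two-dimensional analogue of the pairing against $u_1$ in Proposition~\ref{stat}), and using \eqref{veval} for the linear part exactly as $L_1u_1=0$ was used before, yields
\[
\dot\beta_1^\e=\imu\,\beta_2^\e+\e\,\beta_1^\e+Q_1(\beta^\e)+e_1(\beta^\e,\e),\qquad
\dot\beta_2^\e=-\imu\,\beta_1^\e+\e\,\beta_2^\e+Q_2(\beta^\e)+e_2(\beta^\e,\e),
\]
where $Q=(Q_1,Q_2)$ is homogeneous cubic with coefficients built from $\chi_{11},\chi_{12}$ of \eqref{h287},\eqref{h87} (after the change of basis \eqref{vdefn}), where $|e_j(\beta,\e)|=\mathcal{O}\!\left(\e|\beta|^3+|\beta|^5\right)$ by \eqref{verytang} and the Cauchy--Schwarz bookkeeping already used for \eqref{erest}, and where $e_j$ inherits the rotational equivariance \eqref{phirot} as in \eqref{erot}. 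Equivalently, passing to the slowly-rotating coordinates $\al_1,\al_2$ of Section~\ref{NP} via $\psi\sim\e^{1/2}(\al_1e^{-i\imu t}u_1+\al_2e^{i\imu t}u_2)$ gives exactly the $\al$-system \eqref{h85a}--\eqref{h85b} up to an $\mathcal{O}(\e^{5/2})$ correction on the scale $|\al_j|\sim1$.

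Next I would restrict to PT-symmetric data. Since \eqref{reform} preserves PT-symmetry (as noted after \eqref{nonnot1}) and $v_1,v_2$ are PT-symmetric, the plane $\{(\beta_1,\beta_2)\in\R^2\}$ is invariant; that $\beta_j^\e(t)$ stays real for real initial coefficients is shown exactly as in Proposition~\ref{stat}, by applying the $\da$-operation to the defining fixed-point equation for $\Gamma$ and using uniqueness. On this plane the reduced system is a perturbed harmonic oscillator of frequency $\imu+\mathcal{O}(\e)$; rescaling $\beta=\e^{1/2}\gamma$ puts it in slow--fast form $\dot\gamma=\imu J\gamma+\e\bigl(\gamma+Q(\gamma)\bigr)+\mathcal{O}(\e^{2})$ with $J$ a rotation generator, and averaging over the fast rotation (equivalently, specializing \eqref{mh85a}--\eqref{mh85b} to $|\al_1|=|\al_2|$, as was done to obtain \eqref{asymal}) reduces the amplitude equation to $\dot\rho=\e\rho\bigl(1-({\rm{Re}}\,\tilde\chi)\rho^{2}\bigr)+\mathcal{O}(\e^{2})$, with $\tilde\chi$ as in \eqref{chitildedefn}. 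The hypothesis ${\rm{Re}}\,\tilde\chi>0$ is precisely the first-Lyapunov-coefficient condition for a supercritical Hopf bifurcation, so for each sufficiently small $\e>0$ the planar system has a unique small limit cycle, orbitally asymptotically stable within the plane, of amplitude $\rho=1/\sqrt{{\rm{Re}}\,\tilde\chi}+\mathcal{O}(\e)$ and angular frequency $\imu+\bigl({\rm{Im}}\,\tilde\chi/{\rm{Re}}\,\tilde\chi\bigr)\e+\mathcal{O}(\e^{3/2})$; the $\mathcal{O}(\e^{5/2})$ remainder $e_j$ is $C^1$-small relative to the $\mathcal{O}(\e^{3/2})$ corrections being tracked and perturbs neither the existence of the cycle (robustness of the transverse fixed point of the Poincar\'e return map, via the implicit function theorem) nor its stated asymptotics. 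Undoing the rescaling and invoking \eqref{verytang} produces $\psi_p$ together with the bounds \eqref{stable2}, \eqref{ampe}, the $\beta_{j,p}^\e$ being the real coordinates of the cycle.

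It remains to establish attraction to $\{e^{i\theta_0}\psi_p\}$. Within $\mathcal{M}_{\e}$, the two directions transverse to the PT-symmetric plane measure departure from PT-symmetry; in the $\al$-coordinates this is governed, modulo overall rotation, by the Riccati variable $r=|\al_1|/|\al_2|$, and \eqref{req} together with ${\rm{Re}}\,\hat\chi>0$ ($\hat\chi$ as in \eqref{chihatdefn}) gives $r\to1$ at the fast exponential rate $\sim1\gg\e$; hence the PT-symmetric submanifold is normally attracting inside $\mathcal{M}_{\e}$, and every solution on $\mathcal{M}_{\e}$ of sufficiently small norm with $\al_1(0),\al_2(0)\neq0$ converges, up to a fixed complex rotation, onto it and thence to $e^{i\theta_0}\psi_p$. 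Finally, for initial data $\psi^\e(0)\in H^1$ near $\{e^{i\theta_0}\psi_p\}$ but off $\mathcal{M}_{\e}$, Theorem~\ref{exp} supplies a center-manifold trajectory $\hat\psi^\e$ with $\|\psi^\e(t)-\hat\psi^\e(t)\|_{H^1}\le C_1e^{-\eta t}d_{H^1}(\psi^\e(0),\mathcal{M}_{\e})$, and a bootstrap identical to the final part of the proof of Proposition~\ref{stat} (triangle inequality to keep $\psi^\e(t)\in B(0,r)$ for all $t$, using $\|\psi_p\|_{H^1}=\mathcal{O}(\e^{1/2})$) upgrades this to exponential attraction of $\psi^\e$ to the periodic orbit, which is the asserted stable Hopf bifurcation.

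\textbf{Main obstacle.} The delicate point is the genuine two-time-scale structure: normal relaxation onto the PT-symmetric submanifold happens at rate $\mathcal{O}(1)$ while the Hopf dynamics inside it run at rate $\mathcal{O}(\e)$, so the two-step reduction must be justified by either a Fenichel-type persistence/normal-hyperbolicity argument for the invariant plane inside $\mathcal{M}_{\e}$ or a direct Gronwall-type comparison on the rescaled system showing the transverse coordinate is slaved before amplitude and phase move appreciably; one must simultaneously verify that the center-manifold remainder $e_j(\beta,\e)=\mathcal{O}(\e|\beta|^3+|\beta|^5)$, i.e.\ $\mathcal{O}(\e^{5/2})$ on the working scale, stays strictly below every order retained in \eqref{ampe}. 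Everything else is a routine adaptation of the machinery built in Theorems~\ref{cm}--\ref{exp} and the formal computations of Section~\ref{NP}.
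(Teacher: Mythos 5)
Your proposal follows the same overall architecture as the paper's proof: center-manifold reduction of \eqref{reform} to a four-real-dimensional ODE system with cubic nonlinearity and $\mathcal{O}(\e\abs{\beta}^3+\abs{\beta}^5)$ remainder, restriction to the exactly invariant PT-symmetric plane $\{(\beta_1,\beta_2)\in\R^2\}$ where a supercritical planar Hopf bifurcation is produced under ${\rm{Re}}\,\tilde{\chi}>0$, transverse stability within $\cMe$ governed by $\hat{\chi}$, and attraction of data off $\cMe$ via Theorem \ref{exp}. Your use of averaging for the planar step is a cosmetic variant of the paper's route (polar coordinates, monotonicity of $\theta^\e$, the scalar ODE for $dR^\e/d\theta^\e$, and a Poincar\'e return map resolved by the implicit function theorem); both yield \eqref{iftcurve} and hence \eqref{stable2}--\eqref{ampe}.

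There is, however, a genuine error in your stability step and in your diagnosis of the ``main obstacle.'' You claim the PT-symmetric submanifold is normally attracting within $\cMe$ at rate $\sim 1\gg\e$. It is not: in \eqref{req} the prime denotes $d/d\tau$ with $\tau=\e t$, so $r\to 1$ at rate $\mathcal{O}(\e)$ in physical time; equivalently, linearizing the paper's system \eqref{Rg} about the periodic orbit gives $(\gamma^\e)'\approx -2\e\,({\rm{Re}}\,\hat{\chi}/{\rm{Re}}\,\tilde{\chi})\,\gamma^\e$. Thus the transverse relaxation and the in-plane Hopf dynamics run on the \emph{same} slow $\mathcal{O}(\e)$ scale (indeed the in-plane contraction $\partial\Pi/\partial a\sim 1-4\pi\e/\imu$ is also $\mathcal{O}(\e)$ per unit time), and the two-time-scale ``slaving'' you invoke does not exist. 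The $\mathcal{O}(1)$ rate belongs only to the fast rotation and to the attraction to $\cMe$ itself from Theorem \ref{exp}, not to the attraction to the PT-symmetric slice inside $\cMe$. Fortunately the obstacle you worry about is illusory: no Fenichel-type persistence is needed because the PT-symmetric plane is \emph{exactly} invariant (PT-symmetry of \eqref{reform} plus uniqueness), and exponential orbital stability only requires that every transverse Floquet direction contract, which the $\mathcal{O}(\e)$ rate supplies. The correct repair is what the paper does: derive the closed system \eqref{Rg} for $((R^\e)^2,\gamma^\e)$, linearize about $((R^\e_p)^2,0)$, and read off exponential decay of $\gamma^\e$ from ${\rm{Re}}\,(\tilde{\chi}+\hat{\chi})>0$, combining this with the in-plane condition $\partial\Pi/\partial a<1$. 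A further caution: your claim that \emph{every} solution on $\cMe$ with $\al_1(0),\al_2(0)\neq 0$ converges to the PT-symmetric manifold is not justified by \eqref{req} once the $\mathcal{O}(\e(R^\e)^4+(R^\e)^6)$ error terms are restored; that global statement is the content of the separate and considerably more delicate Proposition \ref{Hopf2}, and only local stability near the periodic orbit is needed (or claimed) here.
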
 \begin{remark}\label{albe}{\rm Recalling the
relationship between the functions $v_1$ and $v_2$ given by
\eqref{vdefn}, one checks that the coefficients $\beta_{1,p}^\e$ and
$\beta_{2,p}^\e$ introduced above are related to the coefficients
$\alpha_1$ and $\alpha_2$ introduced in \eqref{h79} via the formulas
\[
\alpha_1(\e t)=\frac{1}{\sqrt{\e}}\,e^{i\imu t}\left(\beta_{1,p}^\e(t)+i
\beta_{2,p}^\e(t)\right),\quad \alpha_2(\e
t)=\frac{1}{\sqrt{\e}}\,e^{-i\imu t}\left(\beta_{1,p}^\e(t)-i
\beta_{2,p}^\e(t)\right)
\]
}\end{remark}
 \begin{remark} {\rm We recall from the previous section that numerically, we indeed find that
${\rm{Re}}\,\tilde{\chi}>0$ and ${\rm{Re}}\,\hat{\chi}>0$ for
$I>I_c$. See Figure \ref{chi_1}.}
\end{remark}

\begin{proof}
Invoking Theorem \ref{cm}, given any two complex numbers $\beta_1^0$
and $\beta_2^0$ of sufficiently small modulus, let $\psi^\e$ denote
the solution to \eqref{reform} subject to Dirichlet boundary
conditions and initial conditions given by
$\Phi(\beta_1^0,\beta_2^0,\e)$. Then we may describe $\psi^\e$ via
$\Phi$ at all future times as
$\psi^\e=\Phi(\beta_1^\e(t),\beta_2^\e(t),\e)$ for complex-valued
functions $\beta_1^\e(t)$ and $\beta_2^\e(t).$

We then project \eqref{reform} onto the center subspace and use
\eqref{veval} to obtain \bea &&
(\beta_1^\e)'v_1+(\beta_2^\e)'v_2=\nonumber\\ &&-\imu \beta_2^\e
v_1+\imu
\beta_1^\e v_2 +\e\beta_1^\e v_1+\e\beta_2^\e v_2\nonumber\\
&&+\bigg(\int_{-1}^{1}v_1\mathcal{N}\left[\beta_1^\e v_1+\beta_2^\e
v_2\right]\,dx\bigg)v_1+\bigg(\int_{-1}^{1}v_2\mathcal{N}\left[\beta_1^\e
v_1+\beta_2^\e v_2\right]\,dx\bigg)v_2\nonumber\\
&&+\bigg(\int_{-1}^{1}v_1\,\big(\mathcal{N}[\Phi(\beta_1^\e,\beta_2^\e,\e)]
-\mathcal{N}[\beta_1^\e v_1+\beta_2^\e
v_2]\big)\,dx\bigg)v_1\nonumber\\ &&+
\bigg(\int_{-1}^{1}v_2\,\big(\mathcal{N}[\Phi(\beta_1^\e,\beta_2^\e,\e)]
-\mathcal{N}[\beta_1^\e v_1+\beta_2^\e
v_2]\big)\,dx\bigg)v_2\nonumber.
 \eea
Integrating this equation first against $v_1$ and then against
$v_2$, we use the resulting two by two linear system in
$\beta_1^\e\,'$ and $\beta_2^\e\,'$ to find
\bea &&  \left(\begin{matrix}\beta_1^\e\nonumber\\
\beta_2^\e
\end{matrix}\right)'
= \left(\begin{matrix} \e&-{\rm{Im}}\,\lambda_1\nonumber\\
\imu&\e
\end{matrix}\right)
\left(\begin{matrix} \beta_1^\e\\ \beta_2^\e
\end{matrix}\right)+ \left(\begin{matrix}
 \int_{-1}^{1}v_1\mathcal{N}\left(\beta_1^\e v_1+\beta_2^\e
v_2\right)\,dx   \\
\int_{-1}^{1}v_2\mathcal{N}\left(\beta_1^\e v_1+\beta_2^\e
v_2\right)\,dx
\end{matrix}\right)\nonumber\\
&& + \left(\begin{matrix}
 \int_{-1}^{1}v_1\,\big(\mathcal{N}[\Phi(\beta_1^\e,\beta_2^\e,\e)]
-\mathcal{N}[\beta_1^\e v_1+\beta_2^\e
v_2]\big)\,dx      \\
\int_{-1}^{1}v_2\,\big(\mathcal{N}[\Phi(\beta_1^\e,\beta_2^\e,\e)]
-\mathcal{N}[\beta_1^\e v_1+\beta_2^\e v_2]\big)\,dx
\end{matrix}\right).
\nonumber
 \eea

Appealing to the center manifold estimate \eqref{verytang} and
carrying out a lengthy calculation similar to that of Section \ref{NP}, we
finally arrive at a system of the form
\bea &&  \left(\begin{matrix}\beta_1^\e\nonumber\\
\beta_2^\e
\end{matrix}\right)'
= \left(\begin{matrix} \e&-{\rm{Im}}\,\lambda_1\nonumber\\
\imu&\e
\end{matrix}\right)
\left(\begin{matrix} \beta_1^\e\\ \beta_2^\e
\end{matrix}\right)
+ \left(\begin{matrix} -[{\rm{Re}}\,\tilde{\chi}
(R^\e)^2+i{\rm{Im}}\,\hat{\chi}\gamma^\e]\beta_1^\e+
[{\rm{Im}}\,\tilde{\chi} (R^\e)^2-i{\rm{Re}}\,\hat{\chi}\gamma^\e]\beta_2^\e\\
-[{\rm{Im}}\,\tilde{\chi}
(R^\e)^2-i{\rm{Re}}\,\hat{\chi}\gamma^\e]\beta_1^\e
-[{\rm{Re}}\,\tilde{\chi} (R^\e)^2+i{\rm{Im}}\,\hat{\chi}\gamma^\e]
\beta_2^\e\end{matrix}\right)\nonumber\\
&&\label{bsys}\qquad\qquad\qquad\qquad\qquad+\mathcal{O}\left(\e(R^\e)^3+(R^\e)^5\right).
 \eea In the system above we have introduced the notation
 \beq
R^\e:=\sqrt{\abs{\beta_1^\e}^2+\abs{\beta_1^\e}^2}\quad\mbox{and}\quad
\gamma^\e:=i\left((\beta_1^\e)^*\beta_2^\e-\beta_1^\e(\beta_2^\e)^*\right)=
\abs{\beta_1^\e}\abs{\beta_2^\e}\sin(\theta_1^\e-\theta_2^\e),\label{Rgam}
\eeq where $\beta_j^\e=\abs{\beta_j^\e}e^{i\theta_j^\e}.$

We now apply the standard method for proving the existence of a
periodic solution to \eqref{bsys} via a Hopf bifurcation and to
compute rigorously the amplitude and period of the oscillations. To
this end, we consider first the case where the initial values
$\beta_1^0$ and $\beta_2^0$ are real. Then the resulting
PT-symmetric initial data $\Phi(\beta_1^0,\beta_2^0,\e)$ for
\eqref{reform} will lead to the PT-symmetry of the solution at all
future times. Consequently, for all $t>0$, the projection of
$\psi^\e$ onto the center subspace must take the form
$\beta_1^\e(t)v_1+\beta_2^\e(t)v_2$ where $\beta_1^\e$ and
$\beta_2^\e$ are real. This leads to a significant simplification of
\eqref{bsys} in that $\gamma^\e\equiv 0.$

Converting to polar coordinates, $R^\e$ and
$\theta^\e:=\tan^{-1}\left(\beta_2^\e/\beta_1^\e\right)$, we derive
the system \bea && (R^\e)'=\e R^\e-{\rm{Re}}\,\tilde{\chi}
(R^\e)^3+\mathcal{O}\left(\e(R^\e)^4+(R^\e)^6\right),\label{Reqn}\\
&& (\theta^\e)'=\imu -{\rm{Im}}\,\tilde{\chi}
(R^\e)^2+\mathcal{O}\left(\e(R^\e)^2+(R^\e)^4\right).\label{thetaeqn}
\eea Starting with any initial condition with small amplitude
$a:=R^\e(0)$, it is easy to argue that $\theta^\e$ is a monotone
increasing function of $t$, thus justifying the description of
$R^\e$ as $R^\e(a,\theta^\e)$ via the scalar O.D.E. \beq \frac{d
R^{\e}}{d \theta^\e}=
\frac{\e}{\imu}R^\e-\frac{{\rm{Re}}\,\tilde{\chi}}{\imu}
(R^\e)^3+g(\e,R^\e,\theta^\e)\label{scalarODE}\eeq where
\[
g(\e,R^\e,\theta^\e)=\mathcal{O}\left(\e (R^\e)^3+(R^\e)^5\right).
\]

A periodic orbit for \eqref{bsys} corresponds to a value of the
amplitude $a$ such that $R^\e(a,2\pi)=a.$ Thus, we seek a fixed
point of the Poincar\'e return map $\Pi$ given by \beq \Pi(\e,a):=
ae^{2\pi\e/\imu}+e^{2\pi\e/\imu}\int_0^{2\pi} e^{-(\e/\imu)\theta}
\left(\frac{-{\rm{Re}}\,\tilde{\chi}}{\imu}
R^\e(a,\theta)^3+g(\e,R^\e(a,\theta),\theta)\right)\,d\theta.
\label{poincare}\eeq

Using \eqref{scalarODE} one readily checks that \beq
R^\e(a,\theta)=a+\mathcal{O}(\e a+a^3)\quad\mbox{and so}\quad
R^\e(a,\theta)^3=a^3+\mathcal{O}(\e a^3+a^5).\label{aeest}\eeq
Therefore, expanding $\Pi$ for small $\e$ we see that
\[
\Pi(\e,a)=a+\frac{2\pi\e}{\imu}a-\frac{2\pi{\rm{Re}}\,\tilde{\chi}}{\imu}a^3
+ \mathcal{O}(a\e^2+\e a^3+a^5)\] Via the implicit function theorem
one derives a curve of zeros $\e(a)$ for the expression
\[ \frac{\Pi(\e,a)-a}{a}.\] Hence, we obtain a curve of fixed points of $\Pi$
 with \[ \e(a)=({\rm{Re}}\,\tilde{\chi})
a^2+\mathcal{O}(a^3),\] or inverting this relationship, \beq
a=\frac{1}{\sqrt{{\rm{Re}}\,\tilde{\chi}}}\,\e^{1/2}+\mathcal{O}(\e).\label{iftcurve}\eeq
Denoting the resulting periodic solution to \eqref{Reqn}-\eqref{thetaeqn} by $(R_{p}^\e(t),\theta_{p}^\e(t))$
and letting $(\beta_{1,p}^\e,\beta_{2,p}^\e)=(R_{p}^\e\cos(\theta_p^\e), R_{p}^\e\sin(\theta_p^\e))$ denote
the corresponding periodic solution to \eqref{bsys},
the asymptotic estimates \eqref{stable2} and \eqref{ampe} then
readily follow from \eqref{thetaeqn}, \eqref{aeest} and
\eqref{iftcurve}.

We also note that the asymptotic stability of this periodic orbit,
and hence the asymptotic stability of the periodic solutions to
\eqref{reform} among nearby PT-symmetric competitors on the center
manifold, is a consequence of the condition
\[
\frac{\partial\Pi}{\partial a}(\e(a),a)\sim
1-\frac{4\pi{\rm{Re}}\,\tilde{\chi}}{\imu}a^2<1,
\]
cf. \cite{HK}, Thm. 12.13. In light of the rotational invariance of
the system \eqref{bsys}, this means not only that real initial data
$(\beta_1^\e(0),\beta_2^\e(0))$ lying sufficiently close to the
periodic orbit will be asymptotically drawn by the flow into the
orbit but also that any rotation, say
$e^{i\theta_0}(\beta_1^\e(0),\beta_2^\e(0))$ of such initial data
will be drawn to the corresponding rotation of this orbit as well.

We next wish to argue that these periodic orbits are stable within
the class of all flows starting nearby on the center manifold. For
this purpose we return to \eqref{bsys}; that is, we consider the
situation where the initial data for  $\beta_1^\e$ and $\beta_2^\e$
are not necessarily real. Note that this corresponds to initial data
that is then not assumed to be PT-symmetric. We claim that if we
start with complex initial data $(\beta_1^\e(0),\beta_2^\e(0))$
sufficiently close to any of the orbits given by
\eqref{stable2}-\eqref{ampe}, then again the flow will draw the
resulting solution into one of these periodic orbits.

To establish this claim it is useful to derive a system of O.D.E.'s
for the quantities $(R^{\e})^2$ and $\gamma^\e$. Differentiating the
defining formulas in \eqref{Rgam} and using the system satisfied by
$\beta_1^\e$ and $\beta_2^\e$ given in \eqref{bsys}, a routine
calculation leads us to:
 \beq
\left(\begin{matrix}(R^{\e})^2\\ \gamma^\e\end{matrix}\right)'=
\left(\begin{matrix} 2\left[\big(\e-{\rm{Re}}\,\tilde{\chi}\,
(R^\e)^2\big)(R^\e)^2-{\rm{Re}}\,\hat{\chi}(\gamma^\e)^2\right]\\
2\left[\e-{\rm{Re}}\,(\tilde{\chi}+\hat{\chi})
(R^\e)^2\right]\gamma^\e
\end{matrix}\right)\quad+\quad\mathcal{O}(\e(R^\e)^4+(R^\e)^6).\label{Rg}
\eeq Note that $\beta_j^\e$ are parallel if and only if
$\gamma^\e=0$. Recalling that PT-symmetric solutions correspond to
$\beta_j^\e$ real, we thus see that $\gamma^\e=0$ corresponds to the
invariant three-dimensional manifold of PT-symmetric solutions and
their complex rotations that we have already considered, with the
single remaining dimension parametrized conveniently by $\gamma^\e$.

If we momentarily ignore the error term, then we can readily
identify an equilibrium point for \eqref{Rg} at
$(\e/{\rm{Re}}\,\tilde{\chi},0)$. Linearization about this critical
point immediately yields linear stability and in fact a standard
phase plane analysis yields nonlinear asymptotic stability. To treat
the full system \eqref{Rg} (that is, including the error term), we
must linearize instead about the nearby periodic solution
$((R_p^\epsilon)^2,0)$, corresponding to the (real) periodic orbit
$(\beta^\e_{1,p},\beta^\e_{2,p})$ constructed above, yielding
$$
(\gamma^\epsilon )'=
2\left[\e-{\rm{Re}}\,(\tilde{\chi}+\hat{\chi}) (R^\e_p)^2
+\mathcal{O}(\e(R^\e_p)^2+(R^\e_p)^4) \right] \gamma^\e.
$$
Since the error term is clearly lower order, the condition
$(R_p^\e)^2\approx \e/{\rm{Re}}\,\tilde{\chi}$  again implies
exponential linearized stability in the $\gamma^\epsilon$ direction.
As exponential orbital stability has already been verified in the remaining directions,
we may conclude exponential linearized and nonlinear orbital stability of the
family of periodic orbits within the full four-dimensional center manifold
and not only within the invariant three-dimensional $\gamma=0$ manifold.

It remains to argue that the periodic orbits attract nearby initial
data for \eqref{reform} off of the center manifold. The argument for
this fact follows exactly as did the corresponding point in the
proof of Proposition \ref{stat} regarding stability of stationary
solutions through an appeal to Theorem \ref{exp}.
\end{proof}

Proposition \ref{Hopf} describes the bifurcating stable periodic
solutions of main physical interest. With a little further effort,
we may also rigorously obtain the existence of the two unstable
periodic branches derived formally at the end of Section \ref{NP} and
obtain essentially a complete description of the full
four-dimensional bifurcation.

\begin{proposition}\label{Hopf2}
Under the assumptions of Proposition \ref{Hopf}, the equation
\eqref{reform} also exhibits an unstable
bifurcation for small $\eps>0$
to periodic states $\psi_{p,\pm}=\psi_{p,\pm}(x,t,\e)$ of form
%for all small positive $\e$:
\beq
\norm{\psi_{p,\pm}-e^{i\theta_0}\left(\beta_{1,\pm}^\e(t)v_1+\beta_{2,\pm}^\e(t)v_2\right)}_{H^1}<C\e^{3/2},
\label{unstable2}
 \eeq
where, recalling that
$\chi_{11}=-\frac{1}{2}(\tilde{\chi}+\hat{\chi})$ so that
$-{\rm{Re}}\, \chi_{11}>0$,
\begin{equation}
\beta_1^{\e,\pm}(t)=
\left(\frac{\sqrt{\e}}{2\sqrt{-{\rm{Re}}\,{\chi}_{11}}}+\mathcal{O}(\e^{3/2})\right)
e^{\pm i\big((\imu
+\e\frac{{\rm{Im}}\,\chi_{11}}{{\rm{Re}}\,\chi_{11}})t+\mathcal{O}(\e^{3/2})\big)}
\quad\mbox{and}\quad\beta_2^{\e,\pm}=\pm
i\beta_1^{\e,\pm}.\label{unstable3}\end{equation}
%For a positive value $a$ independent of $\e$,
%these two states, along with the stable periodic
%state $\psi_p$ constructed in Theorem \ref{Hopf} and the normal
%state, represent the only
%persistent states on the four-dimensional
%manifold $\mathcal{M}_\e$, cf. \eqref{cmg}, within $B(0,a)$.

Furthermore, there exists a positive value $a$ independent of $\e$
such that for all small $\e>0$, these two states, along with the
stable periodic state $\psi_p$ constructed in Theorem \ref{Hopf} and
the normal state, represent the only persistent states in
$B(0,a):=\{\psi:\;\norm{\psi}_{H^1}<a\}$. Indeed, the phase portrait
within $B(0,a)$ consists of two repelling codimension two $C^\infty$
stable manifolds $\tilde \cN_{p,\pm}$ of the unstable periodic
solutions $\psi_{p,\pm}$, a repelling codimension four $C^\infty$
stable manifold $\tilde \cN_0$ of the normal state $\psi=0$, and an
attracting codimension one $C^\infty$ stable manifold $\tilde
\cN_{PT}$ of the invariant manifold $\cN_{PT}$ of PT-symmetric
solutions and their rotations
lying within $\cMe\cap B(0,\overline{C}a)$, cf. Remark
\ref{foliation}. The latter contains both the unstable normal
equilibrium and the stable periodic solutions, with all other
solutions flowing from repelling submanifolds $\tilde \cN_{p,\pm}$
and $\tilde \cN_0$ to the attracting submanifold $\tilde \cN_{PT}$
and thereafter to the stable periodic solutions.

In particular, all solutions originating in $B(0,a)$ converge
either to the stable periodic, an unstable periodic, or the normal state.
For generic initial data $\psi_0\in B(0,a)$, the corresponding solution
converges to the stable periodic state, the only exceptional data
lying on the codimension two and four submanifolds $\tilde \cN_{p,\pm}$
and $\tilde \cN_0$, respectively.
%
% Finally, for $\epsilon>0$, the phase portrait within
%$B(0,a)\cap\mathcal{M}_\e$ contains three invariant submanifolds:
%two repelling two-dimensional submanifolds $\cN_{p,\pm}$
%consisting of the stable manifolds of the unstable periodic solutions,
%and the attracting three-dimensional manifold $\cN_{PT}$
%of PT-symmetric solutions and their complex rotations,
%the latter containing both the unstable normal
%equilibrium and the stable periodic solution, with all other
%solutions flowing from repelling two-dimensional submanifolds to the
%attracting three-dimensional submanifold.
\end{proposition}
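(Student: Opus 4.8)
The plan is to carry the center-manifold reduction of Proposition \ref{Hopf} through to a complete global description of the four-dimensional flow \eqref{bsys} on $\cMe$, identifying all of its $\omega$-limit sets, and then to lift the picture from $\cMe$ to $B(0,a)$ via the foliation of Remark \ref{foliation} exactly as at the end of the proof of Proposition \ref{stat}. The reduction hinges on the closed planar subsystem \eqref{Rg} for $\big((R^\e)^2,\gamma^\e\big)$. Setting $q:=\gamma^\e/(R^\e)^2\in[-1,1]$ and using ${\rm Re}\,\hat{\chi}>0$ together with the invariance of $\cN_{PT}=\{\gamma^\e=0\}$ (to factor the remainder in the $\gamma^\e$-equation of \eqref{Rg} through $\gamma^\e$), one computes
\[
q'=q\Big[-2({\rm Re}\,\hat{\chi})\,(R^\e)^2(1-q^2)+\mathcal{O}\big(\e(R^\e)^2+(R^\e)^4\big)\Big],
\]
while the $(R^\e)^2$-equation of \eqref{Rg} shows that along any orbit other than $\psi\equiv0$, $(R^\e)^2$ is bounded below by a positive constant and eventually confined to an interval $[c_1\e,c_2\e]$ with $0<c_1<c_2$. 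Three invariant sets are available \emph{before} any perturbation argument: the normal state $\psi\equiv0$; the three-dimensional manifold $\cN_{PT}$ of PT-symmetric solutions and their complex rotations, invariant by the symmetries of \eqref{reform} recorded in Theorem \ref{cm}; and the two invariant sets of real dimension two, $\{\al_1^\e\equiv0\}=\{\beta_1^\e+i\beta_2^\e\equiv0\}$ and $\{\al_2^\e\equiv0\}=\{\beta_1^\e-i\beta_2^\e\equiv0\}$, corresponding exactly to $q\equiv-1$ and $q\equiv+1$ (cf. Remark \ref{albe}), on which \eqref{bsys} collapses to a single complex Stuart--Landau equation of the type already analyzed in Proposition \ref{stat} and in \eqref{a2unst}--\eqref{a1unst}.

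I would then pin down the four recurrent objects. For $\e>0$ the linearization $L_1+\e$ of \eqref{reform} at $\psi\equiv0$ has all center eigenvalues with real part $\e>0$, so the normal state is a hyperbolic repeller in $\cMe$, its stable manifold in $\cMe$ is $\{0\}$, and extension by Remark \ref{foliation} gives the codimension-four $\tilde\cN_0$. On each $\{\al_j^\e\equiv0\}$ the reduced Stuart--Landau equation has, by the implicit function theorem as in Propositions \ref{stat} and \ref{Hopf}, a circle of periodic orbits orbitally asymptotically stable \emph{within} that two-dimensional set; these are $\psi_{p,\pm}$, with \eqref{unstable2}--\eqref{unstable3} following from \eqref{verytang} and the amplitude $\sim\e^{1/2}/(2\sqrt{-{\rm Re}\,\chi_{11}})$, where $\chi_{11}=-\frac12(\tilde{\chi}+\hat{\chi})$. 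Their transverse behavior is read off \eqref{Rg}: the corresponding equilibrium of \eqref{Rg} lies on the boundary $q=\pm1$ at $(R^\e)^2=\e/{\rm Re}(\tilde{\chi}+\hat{\chi})$, and its Jacobian has negative determinant, hence it is a saddle whose stable direction is precisely the invariant line $q=\pm1$. Thus $\psi_{p,\pm}$ are hyperbolic periodic orbits whose stable manifold in $\cMe$ is exactly $\{\al_j^\e\equiv0\}\setminus\{0\}$ (two-dimensional, so codimension two), with an unstable direction pointing into $\{\abs q<1\}$; extension yields the repelling codimension-two $\tilde\cN_{p,\pm}$. Finally $\psi_p$ of Proposition \ref{Hopf}, together with its rotations, is a normally hyperbolic attracting two-dimensional invariant cylinder inside $\cN_{PT}$, projecting to the circle of equilibria $A:=\{\gamma^\e=0,\ (R^\e)^2=\e/{\rm Re}\,\tilde{\chi}\}$ of \eqref{Rg}; normal attraction holds since along $A$ the $\gamma^\e$-linearization has eigenvalue $-2\e\,{\rm Re}\,\hat{\chi}/{\rm Re}\,\tilde{\chi}<0$.

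For the global claim, work first on $\cMe$. For $\abs q\le1-C\e$ (with $C$ fixed large) the leading term in the $q$-equation dominates the remainder, since $(R^\e)^2\gtrsim\e$; hence the functional $\log\frac{q^2}{1-q^2}$ is strictly decreasing at rate $\gtrsim\e$ there, so $q(t)\to0$ once the orbit lies in that region. In the remaining thin layer $1-\abs q=\mathcal{O}(\e)$ the flow of \eqref{Rg} is governed by the saddle $\psi_{p,\pm}$, whose stable manifold is exactly $\{q=\pm1\}$; since the saddle is transversally unstable, an orbit with $\abs{q(0)}<1$ cannot converge to $\psi_{p,\pm}$ and so must leave the layer toward smaller $\abs q$, after which $q\to0$. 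Consequently every orbit with $\al_1^\e(0)\neq0\neq\al_2^\e(0)$ satisfies $\gamma^\e/(R^\e)^2\to0$ and $(R^\e)^2\to\e/{\rm Re}\,\tilde{\chi}$, i.e. approaches the normally attracting cylinder of $\psi_p$; by normal hyperbolicity of that cylinder, together with the fact (Proposition \ref{Hopf}) that every nonzero orbit within $\cN_{PT}$ converges to $\psi_p$ up to rotation, the orbit itself converges to $\psi_p$ up to a fixed rotation. Orbits with $\al_1^\e(0)=0$ (resp. $\al_2^\e(0)=0$) remain on the corresponding two-dimensional invariant set and converge to $\psi_{p,+}$ (resp. $\psi_{p,-}$), the only exception being the origin. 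This exhausts $\cMe$; the transfer to all of $B(0,a)$, the description of the various stable manifolds as $P_\e$-preimages of submanifolds of $\cMe$, and the reduction of conditional $H^1$-stability to stability on $\cMe$ are verbatim the Theorem \ref{exp}/Remark \ref{foliation} argument used in Proposition \ref{stat}; the $C^\infty$ regularity of $\tilde\cN_{p,\pm}$, $\tilde\cN_0$ and $\tilde\cN_{PT}$ follows from $C^\infty$ smoothness of $\cMe$ and $P_\e$ and the classical stable-manifold theorem for the hyperbolic periodic orbits $\psi_{p,\pm}$ and the hyperbolic equilibrium $\psi\equiv0$.

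The main obstacle is making the last two steps rigorous and uniform in $\e$: first, controlling the flow in the thin layer $1-\abs q=\mathcal{O}(\e)$ adjacent to the exactly invariant sets $\{\al_j^\e\equiv0\}$, where the leading-order attraction $q(1-q^2)$ degenerates and one must instead exploit the hyperbolic saddle structure of $\psi_{p,\pm}$ while carrying the $\mathcal{O}(\e(R^\e)^4+(R^\e)^6)$ remainder of \eqref{Rg} along; and second, upgrading ``the projected orbit in $\big((R^\e)^2,\gamma^\e\big)$ tends to $A$'' to ``the full four-dimensional orbit tends to $\psi_p$ up to rotation,'' which requires an exponential-tracking (normal-hyperbolicity) argument for the cylinder of $\psi_p$ inside $\cMe$ --- available precisely because the relevant orbits are eventually uniformly near $A$, where $\cN_{PT}$ is normally attracting, even though $\cN_{PT}$ is normally repelling near the origin.
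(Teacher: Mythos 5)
Your overall strategy coincides with the paper's: reduce to the planar $\big((R^\e)^2,\gamma^\e\big)$ subsystem \eqref{Rg}, introduce the ratio $s=\gamma^\e/(R^\e)^2$, classify the four critical points, and lift the center-manifold phase portrait to $B(0,a)$ via Theorem \ref{exp} and Remark \ref{foliation}. But there is a genuine gap at the heart of the argument: you treat the sets $\{\al_1^\e\equiv 0\}$ and $\{\al_2^\e\equiv 0\}$ (equivalently $\beta_2^\e=\pm i\beta_1^\e$, i.e.\ $q=\pm 1$) as exactly invariant for the \emph{full} reduced system \eqref{bsys}. They are invariant only for the truncated amplitude equations \eqref{h85a}--\eqref{h85b}; the $\mathcal{O}(\e(R^\e)^3+(R^\e)^5)$ remainders coming from $\mathcal{N}[\Phi(\beta_1^\e,\beta_2^\e,\e)]-\mathcal{N}[\beta_1^\e v_1+\beta_2^\e v_2]$ respect only the rotation and PT symmetries, and PT maps $\{\al_1=0\}$ to $\{\al_2=0\}$ rather than preserving it. Consequently (i) you cannot construct $\psi_{p,\pm}$ as the Hopf/Stuart--Landau periodic orbit of a planar restriction of \eqref{bsys} to these sets; (ii) their stable manifolds in $\cMe$ are not ``exactly $\{\al_j^\e\equiv 0\}\setminus\{0\}$''; and (iii) your thin-layer argument --- that an orbit with $\abs{q(0)}<1$ cannot converge to $\psi_{p,\pm}$ because the stable manifold is the invariant boundary $q=\pm 1$ --- loses its premise. (Your factoring of the remainder in the $\gamma^\e$-equation through $\gamma^\e$ is fine, since $\cN_{PT}=\{\gamma^\e=0\}$ genuinely is invariant; the analogous factoring through $1-q^2$ at $q=\pm1$ is not available.)

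The paper fills exactly this gap with machinery your proposal omits: existence and hyperbolicity of $\psi_{p,\pm}$ are obtained by a persistence argument for the Poincar\'e return map $F^\e$ in $\omega$-coordinates --- explicit Jacobian \eqref{DF} at the unperturbed fixed point, the eigenvalues \eqref{smalleig} of $D(F^\e-I)$ of size $\sim\e$ versus perturbations of the Jacobian entries of size $\mathcal{O}(\e^2)$, hence a surviving fixed point with one stable and two unstable directions. The global argument then replaces your exact boundary $q=\pm1$ by the backward-invariant wedges $\cK_\pm$ of \eqref{Kdef}, shows the perturbed stable manifolds $\cN_{p,\pm}$ are confined to $\cK_\pm$ and are graphs over their unperturbed counterparts away from the periodic orbit, and handles the excluded neighborhood of the orbit by a separate ``microscopic'' invariant-manifold analysis of the return map on an $\eta\e^{1/2}$-ball (spectral gap $\gtrsim\e$ versus Lipschitz norm $o(\e)$ of the nonlinear plus perturbation terms). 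You do flag the thin layer as ``the main obstacle,'' which is the right instinct, but your proposed resolution still presupposes the exact invariance that fails; without a quantitative persistence argument of the above type the existence of $\psi_{p,\pm}$, the codimension-two description of $\tilde\cN_{p,\pm}$, and the claim that all other orbits reach $\cN_{PT}$ remain unproved.
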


\begin{remark}
{\rm The description \eqref{unstable2}-\eqref{unstable3} of the
unstable periodic solutions given in $\beta$-coordinates may be
recognized as profile \eqref{a2unst}-\eqref{a1unst} given in the
$\alpha$-coordinates of Section \ref{NP}.
%TODO: section ref., not number.
}
\end{remark}
\begin{proof}
Our approach here is generally to establish the claims of the
theorem first for the system of differential equations \eqref{bsys}
or \eqref{Rg} ignoring higher order error terms and then to broaden
the claims to the full equations including the perturbations. Hence,
we first note that in light of the smoothness of these higher order
terms and the polynomial error bounds, the perturbations are seen to
be small in the $C^1$ topology in a small neighborhood of
%origin,
the origin,
i.e. for $\beta^\e_1,\,\beta^\e_2$ small or equivalently, for $R^\e$
small.
%CHANGED (important addition)-K:
%Moreover, on the intersection of
%$\mathcal{M}_\e$ and $B(0,C\sqrt{\e})$, where the unperturbed flow
%is small of order $\eps$, it is easily seen that perturbations
%are also small in $C^1$ topology for the rescaled,
%slow variables $\beta_j\to \beta_j/\sqrt{\eps}$, $t\to t/\eps$,
%for which flow is order one.
%ENDCHANGED

We begin our analysis by turning to the system \eqref{Rg}. For
convenience, let us denote the quantity $(R^\e)^2$ by $A^\e$ and so
view this system as one for $\gamma^\e$ given by \eqref{Rgam} and
$A^\e:=\abs{\beta^\e_1}^2+\abs{\beta^\e_2}^2.$ We list some simple
observations:

\par\noindent $\bullet$ Elementary phase plane analysis of this system reveals
that there exists a small positive number $r$, independent of $\e$,
such that the triangle
\[T:=\{(\gamma^\e,A^\e):\;0\leq \abs{\gamma^\e}\leq A^\e\leq r\}\]
is invariant for the flow \eqref{Rg} for all $\e$ sufficiently
small. That the flow cannot exit the two bottom sides of this
triangle is trivial in light of the algebra
\[\abs{\gamma^\e}\leq 2\abs{\beta^\e_1}\abs{\beta^\e_2}\leq
\abs{\beta^\e_1}^2+\abs{\beta^\e_2}^2=A^\e.\] That the flow cannot
exit the top where $A^\e=r$ follows immediately from the fact that
\begin{equation}(A^\e)'\leq 2\bigg(\e-{\rm{Re}}\,\tilde{\chi}(A^\e)+C\e
(A^\e)+C(A^\e)^2\bigg)A^\e<0\label{down}\end{equation} provided that
$\e<<A^\e$ and $A^\e=r$ is small. Here $C$ is a positive constant
coming from the error bounds in \eqref{Rg}.

\par\noindent
$\bullet$ A further consequence of \eqref{down} for the full system
\eqref{Rg} and its ``parent" system \eqref{bsys} is that if one
starts with data lying in the triangle $T$, then the corresponding
orbit will exponentially approach a sub-triangle in which
$A^\e=\mathcal{O}(\e).$
%CHANGED (added this):
The rate of approach may be obtained by direct computation from the
differential inequality $(A^\e)'\le -C_1(A^\e)^2$, $C_1>0$.
%ENDCHANGED

\par\noindent
$\bullet$ If one ignores the error terms in this system, then it is
easy to check that there are exactly four critical points located
at\[
(\gamma^\e,A^\e)=(0,\frac{\e}{{\rm{Re}}\,\tilde{\chi}}),\;(0,0)\quad\mbox{
and}\quad \frac{\e}{2\abs{{\rm{Re}}\,\chi_{11}}}(\pm 1,1)\] and the
Jacobian at each of these points is nonzero (see the next item
below).
%REMOVED (Peter, there IS not full system in the triangle- at least
%I don't think so. For, the error terms depend on variables other than
%A and \gamma it seems....
%That's why I had to change this part significantly. -K
%
%Since the error terms in \eqref{Rg} are $C^1$-small, it
%follows that there exists a critical point of the full system near
%each of these four points and that there are no other critical
%points in $T$.
%%CHANGED (the above is not adequate)-K:
%Here, we are using the second observation that errors are $C^1$-small
%also in the blown-up ``slow'' variables, or, equivalently, that errors
%are small even compared to the $O(\eps)$ size of the eigenvalues of
%the full system at these critical points.
%%ENDCHANGED

\par\noindent
$\bullet$ Linearizing \eqref{Rg} about these four critical points,
one finds: (i) The critical point near
$(0,\frac{\e}{{\rm{Re}}\,\tilde{\chi}})$ is stable, with two
negative eigenvalues, as has been noted already in the proof of
Theorem \ref{Hopf}. This point corresponds to the stable periodic
orbit arising from a Hopf bifurcation. (ii) The normal state $(0,0)$
is unstable with a multiplicity two positive eigenvalue. (iii)
$\frac{\e}{2\abs{{\rm{Re}}\,\chi_{11}}}(\pm 1,1)$ are both saddles
with one positive and one negative eigenvalue. The negative
eigenvalue corresponds to the eigenvector $(\pm 1,1).$

\par\noindent
$\bullet$ Not only do $(\pm 1,1)$ correspond to the stable
eigendirections of the linearization about
$\frac{\e}{2\abs{{\rm{Re}}\,\chi_{11}}}(\pm 1,1)$ respectively, but
in fact, the bottom sides of the triangle $T$, i.e. $\gamma^\e=\pm
A^\e$ are exactly the stable manifolds of these unstable critical
points for the system \eqref{Rg} ignoring error terms.

\par\noindent
$\bullet$ Introducing $s$ as the ratio $\gamma^\e/A^\e$, we use
\eqref{Rg} to calculate that
\begin{equation} s'=-2{\rm{Re}}\,\hat{\chi} A^\e
s(1-s^2)+\mathcal{O}(\e A^\e+(A^\e)^2).\label{seqn}\end{equation} As
a consequence, if one ignores error terms, then one sees the flow
repels away from the triangle's lower sides $s=\pm 1$ towards the
manifold of $PT$-symmetric solutions and their rotations, $s=0$,
i.e. towards $\gamma^\e=0.$
We should note that since the flow
\eqref{reform} preserves PT-symmetry, it follows that $s=0$ must in
fact be a critical point of \eqref{seqn} even with inclusion of
error terms.

Having noted these simple properties of \eqref{Rg}, we next
establish the existence of the unstable periodic orbits for
\eqref{bsys} and hence for \eqref{reform}. As we shall see, these
correspond precisely to the two unstable critical points near
$\frac{\e}{2\abs{{\rm{Re}}\,\chi_{11}}}(\pm 1,1)$ found above in the
$\gamma^\e A^\e$ plane. Let us first note that back in Section
\ref{NP}, we identified exact periodic solutions on the invariant
manifolds $\alpha_1\equiv 0$ and $\alpha_2\equiv 0$ for the
equations \eqref{h85a}-\eqref{h85b} obtained by neglecting
higher-order error terms. In light of the relations \eqref{albe}
this corresponds to exact solutions (again neglecting error terms)
to \eqref{bsys} found along the manifold $\beta^\e_2=\pm
i\beta^\e_1$. This, in turn, corresponds to solutions of \eqref{Rg}
for which $\gamma^\e=\pm A^\e$.

The explicit formulas for these two unstable periodic solutions can
be obtained by directly solving \eqref{bsys} under the constraint
$\beta^\e_2=\pm i\beta^\e_1$ and are given in \eqref{unstable3}.
Note in particular that the two equations in \eqref{bsys} for
$(\beta^\e_1)'$ and $(\beta^\e_2)'$ reduce to the same equation
under this constraint. Their saddle-type instability follows from
the observations listed above and in particular from \eqref{seqn}.

In order to argue that these saddle-type periodic solutions persist
with the inclusion of error terms in \eqref{bsys}, we define
$\omega^\e_1:=(\beta^\e_1+i\beta_2^\e)$ and
$\omega^\e_2:=(\beta^\e_1-i\beta_2^\e)$ so that $\omega^\e_1$ and
$\omega^\e_2$ are closely related to the $\alpha_j$ variables of
Section 5 via the formulas $\omega^\e_j=\sqrt{\e}e^{(-1)^j\imu
t}\alpha_j$ for $j=1,2.$ We will describe how to find a periodic
solution to the full system \eqref{bsys} nearby the solution
corresponding to $\beta^\e_1=i\beta^\e_2$, (i.e. $\omega^\e_2=0$) for the
truncated system. The same analysis can be repeated to find the
other unstable solution to the full system as well.

To this end, it is perhaps best to introduce a Poincar\'e map as
follows: Consider \eqref{bsys} without error terms and start with
initial data $\omega^\e_1(0),\;\omega^\e_2(0)$ such that
${\rm{Im}}\,\omega^\e_1(0)=0$ while ${\rm{Re}}\,\omega^\e_1(0)$ is
near $\sqrt{\e}/\sqrt{-{\rm{Re}}\,\chi_{11}}$ and
$\abs{\omega^\e_2(0)}$ is small. We then measure the values of
${\rm{Re}}\,\omega^\e_1$, ${\rm{Re}}\,\omega^\e_2$ and
${\rm{Im}}\,\omega^\e_2$ at the next time, say
$T=T(\omega^\e_1(0),\omega^\e_2(0),\e),$ at which the trajectory
$(\omega^\e_1(t),\omega^\e_2(t))$ crosses the $3$-plane
$\{(\omega_1,\omega_2)\in \C\times\C:\;{\rm{Im}}\,\omega_1=0\}.$ Let
us denote this map
\begin{equation}({\rm{Re}}\,\omega^\e_1(0),{\rm{Re}}\,\omega^\e_2(0),
{\rm{Im}}\,\omega^\e_2(0))\rightarrow
({\rm{Re}}\,\omega^\e_1(T),{\rm{Re}}\,\omega^\e_2(T),
{\rm{Im}}\,\omega^\e_2(T))\label{returnmap}\end{equation} by
$F^\e:\R^3\to\R^3.$ Note that
$(\frac{\sqrt{\e}}{\sqrt{-{\rm{Re}}\,\chi_{11}}},0,0)$ is a fixed
point of this map corresponding to the exact unstable periodic
solution of the truncated system satisfying
$\beta^\e_1=i\beta^\e_2$.

By converting \eqref{bsys} into a system for $\omega^\e_j$, $j=1,2$
one can carry out a tedious but straight-forward calculation to
determine that when the Jacobian matrix of $F^\e$ is evaluated at
the fixed point
$(\frac{\sqrt{\e}}{\sqrt{-{\rm{Re}}\,\chi_{11}}},0,0)$, it takes the
form
%\[
\begin{equation}\label{DF}
DF^\e=\left(\begin{matrix}
\mu^\e&0&0\\
0&e^{a^\e}\cos{b^{\e}}&-e^{a^\e}\sin{b^{\e}}\\
0&e^{a^\e}\sin{b^{\e}}&e^{a^\e}\cos{b^{\e}}
\end{matrix}\right).
\end{equation}
%\]
Here
\[
\mu^\e=e^{-2\e p^\e},\quad
a^{\e}=\e\left(\frac{{\rm{Re}}\,\hat{\chi}}{-{\rm{Re}}\,\chi_{11}}\right)p^\e,\quad
b^{\e}=\left(\imu-\e
\frac{{\rm{Im}}\,\chi_{12}}{{\rm{Re}}\,\chi_{11}}\right)p^\e,\\
\]
and
\[
p^\e=\frac{2\pi}{\imu+\frac{{\rm{Im}}\,\chi_{11}}{{\rm{Re}}\,\chi_{11}}\e},\]
which we recognize as the period of the unstable periodic solution
with error terms ignored.

Note that there is one eigenvalue of $DF^\e$ of modulus less than
one, namely $\mu^\e$, giving one stable direction to the map, while
the other two unstable eigenvalues $e^{a^{\e}\pm i b^\e}$ have
modulus greater than one since $a^\e>0$. Thus, one again sees the
saddle-type instability of this periodic solution from the
perspective of this Poincar\'e map.

To argue that this picture persists under perturbation (i.e. under
inclusion of the error terms in \eqref{bsys}), we observe that the
three eigenvalues of $D(F^\e-I)$ take the form
\begin{equation}
-\frac{4\pi}{\imu}\e+\mathcal{O}(\e^2),\quad
\frac{2\pi}{(\imu)\,(-{\rm{Re}}\,\chi_{11})}\bigg({\rm{Re}}\,\hat{\chi}\pm
i{\rm{Im}}\,\tilde{\chi}\bigg)\e +\mathcal{O}(\e^2).
\label{smalleig}\end{equation} Once we incorporate the error terms
from \eqref{bsys} in our analysis, we use the fact that within the
$\mathcal{O}(\sqrt{\e})$ ball where we are working, these
$\mathcal{O}(\e|\beta|^3+ |\beta|^5)$ error terms are necessarily
$\mathcal{O}(\e^{5/2})$ with a contribution to the entries of the
original Jacobian matrix $DF^\e$ of at most
$\mathcal{O}(\e|\beta|^2+ |\beta|^4)=\mathcal{O}(\e^2)$. Hence,
again subtracting the identity to form the displacement map of the
full system including error terms, \eqref{smalleig} guarantees that
the differential of this map is still nonsingular. Necessarily it then must still
map a neighborhood of the original critical point onto the origin, that is,
there must still exist a fixed point of the map corresponding to $F$ perturbed by
higher order terms. What is more, the Jacobian matrix of this perturbed map evaluated
at the fixed point must still possess one eigenvalue of modulus less than one and two
of modulus greater than one so the saddle-type instability persists under
perturbation for these periodic orbits.

It remains to establish the asserted global behavior of solutions
originating in the $H^1$-ball $B(0,a).$
We begin by describing the behavior in $\cMe\cap B(0,\overline{C}a)$,
where $\overline{C}>0$ is the geometric constant of Remark \ref{foliation},
to be used later.
As observed previously, the manifold $\cN_{PT}\subset \cMe$
of PT-symmetric solutions and their rotations remains invariant
for the full as well as the unperturbed equations, and is locally
attracting by \eqref{seqn}.
Likewise, the normal state ($\beta^\e_1=\beta^\e_2=0$) is an equilibrium of the full system
that is locally
repelling.
Since the unstable periodic solutions
correspond to hyperbolic, saddle-type equilibria of the associated return map,
with a single stable eigenvalue and a pair of unstable eigenvalues,
we find that they possess, for fixed $\eps$, a $C^\infty$ stable manifold of
dimension one of the return map. Under time-evolution, this induces
stable manifolds $\cN_{p,\pm}$
of dimension two within the center manifold, as claimed.

The structure and attracting or repelling properties of $\cN_{p,\pm}$
are difficult to determine outside an $\eps$-neighborhood of the periodic
solution, due to the fast, order one, angular flow relative to the
order $\eps$ flow measured by the return map. Thus, it would appear that a determination
of the global
structure in $\cMe \cap B(0,Ca)$ would require more complicated
averaging arguments outside the scope of the present analysis.
However, we may finesse this point using more elementary tools
together with the special structure of our
%TODO: spelling of finess?
equations.

We first observe by \eqref{seqn} that for $C>0$ sufficiently large,
the neighborhoods $\cK_\pm$ of the triangle $T$ given by
\begin{equation}\label{Kdef}
\cK_\pm:=\{(\gamma^\e,A^\e)\in
T:\;s:=\gamma^e/A^\e\;\mbox{satisfies}\;\, |s\pm 1|\le C(\eps +
A^\e)\}
\end{equation}
are invariant under backward
flow of the ODE on the center manifold. Hence, by estimate \eqref{unstable2}-\eqref{unstable3}
showing that $||s|-1|\le C\eps$ along the unstable periodics,
the stable manifolds $\cN_{p,\pm}$
of the unstable periodics are confined to
$\cK_\pm$. In particular, they lie no more than $C(\eps + A^\e)$ from the sides of the triangle
corresponding in $\gamma^\e$-$A^\e$ coordinates to the stable manifolds
of the unstable periodic solutions of the unperturbed equations.
Moreover, from \eqref{Rg}, we see that for $(\gamma^e,A^\e)\in\cK_\pm$
one has
$$
(A^\e)'= 2(\e-{\rm{Re}}\,(\tilde{\chi}+\hat{\chi})A^\e)A^\e
+\mathcal{O}(\eps + A^\e)(A^\e)^2.
$$
Thus, for $(\gamma^\e,A^\e)\in \cK_\pm$ outside an $\eps^2$-neighborhood of
the critical point $\frac{\e}{{\rm{Re}}\,(\tilde{\chi}+\hat{\chi})}(\pm 1,1)$
of the unperturbed $\gamma^\e$-$A^\e$ system, $A^\e$ is strictly decreasing
for $A^\e>\frac{\e}{{\rm{Re}}\,(\tilde{\chi}+\hat{\chi})}$ and strictly decreasing for
$A^\e<\frac{\e}{{\rm{Re}}\,(\tilde{\chi}+\hat{\chi})}.$ Hence in these
regions, $\cN_{p,\pm}$ are graphs over their unperturbed counterparts
and $\mathcal{O}(\e + A^\e)$ close to them.

%CHANGED
%*******************************NEW VERSION*******

It remains to treat the excluded $\eps^2$ neighborhood in
$\gamma^\e$-$A^\e$ coordinates of the critical point
$\frac{\e}{{\rm{Re}}\,(\tilde{\chi}+\hat{\chi})}(\pm 1,1)$
of the unperturbed $\gamma^\e$-$A^\e$ system.
By a straightforward computation, we find that this
neighborhood is contained in the image of an $\eps$-neighborhood
in $\beta^\e$-coordinates of the orbit of the order $\eps^{1/2}$ amplitude
unstable periodic solution \eqref{unstable3} of the
unperturbed $\beta^\e$-equations.

Indeed, a brief examination in $\beta^\e$-coordinates shows that the
$\gamma^\e$-$A^\e$ estimate is overly conservative (a result of the
singularity in certain directions of the coordinate change between
the two coordinate systems). Changing to the more convenient
$\omega$-coordinates, for which $|\omega^\e_1|=
\frac{\sqrt{\e}}{\sqrt{-{\rm{Re}}\,\chi_{11}}}$, $\omega_2^\e=0$
corresponds to the unstable periodic orbit, and $\omega_2^\e=0$ its
stable manifold, we find compute that
$$
\begin{aligned}
|\omega_1^\e|_t&=
(\eps +({\rm{Re}}\,\chi_{11} |\omega^\e_1|^2 + {\rm{Re}}\, \chi_{12} |\omega^\e_2|^2)
|\omega_1^\e| +O(\eps |\omega^\e|^3+|\omega^\e|^5) ,\\
|\omega_2^\e|_t&=
(\eps +({\rm{Re}}\,\chi_{11} |\omega^\e_2|^2 + {\rm{Re}}\, \chi_{12} |\omega^\e_1|^2)
|\omega_2^\e| +O(\eps |\omega^\e|^3+|\omega^\e|^5)\\
\end{aligned}
$$
(cf. \eqref{mh85a}--\eqref{mh85b}). Using this system, we can check
that as long as the deviations $\tilde{\omega}^\e_1$ and
$\tilde{\omega}^\e_2$ from the unstable periodic solution to the
unperturbed system satisfy
\[\e^{3/2}<<\abs{\tilde{\omega}^\e_1}<<\e^{1/2}\quad\mbox{and}\quad
\e^{3/2}<<\abs{\tilde{\omega}^\e_2}<<\e^{1/2},\] the stable
manifolds $\cN_{p,\pm}$ are graphs over their unperturbed
counterparts ($\omega^\e_2\equiv 0$ in this case) and all other
solutions are repelled toward $\cN_{PT}$ since
$\frac{d}{dt}\abs{\tilde{\omega}^\e_1}$ will be negative in this
regime while $\frac{d}{dt}\abs{\tilde{\omega}^\e_2}$ will be
positive.
%TODO: hmmm, do we need this after all?  There seems to be
%a large margin for error in the arguments..  maybe not..

%CHANGED (not really, but just for you to find this note...!)
%
%gamma-A coordinates DO turn out as I feared to be slightlhy problematic
%due to above-mentioned singularity of coordinate change, but only at this one
%point in the argument...  In the end, a small price to pay for
%the very simple picture in unperturbed case, maybe...
%though in retrospect, $|\omega_1|$ %and $|\omega_2|$ give an
%equally simple planar system with same properties, and nonsingular
%coord. change, or at least less %singular (bi-Lipshitz, so distances
%preserved, avoiding above difficulty).
%
%This little snag is what took me so long, by the way...
%in the end, just a confusion, since it turned out there
%was plenty of wiggle room in the argument after all...
%micro-description turns out to be on order eps^{1/2} ball
%as of course it should be from our formal asymptotics...
%--K
%ENDCHANGED

On the other hand, by our previous estimates on the orders of
perturbation terms and their Jacobians in the return map $F^\e$
given by \eqref{returnmap}, the Taylor expansion of this map at the
unstable fixed point  corresponding to the unstable periodic,
denoted here by $p^\e_*$, is
$$
\tilde{F}^\e(z):=F^\e(p^\e_*+z)- p^\e_* = dF^\e(p^\e_*)z + N^\e(z,z)
+ \Theta^\e(z).
$$
Here the quadratic order Taylor remainder term $N^\e$ is order
\begin{equation}\label{Nbd}
N^\e=\mathcal{O}(|p^\e_*||z|^2),\quad
N^\e_z=\mathcal{O}(|z|^2+|p^\e_*||z|)
\end{equation}
and the term $\Theta^\e$ which incorporates the perturbation term is
order
\begin{equation}\label{Thetabd}
\Theta^\e=\mathcal{O}(\e|p^\e_*|^3 +|p^\e_*|^5),\quad
\Theta^\e_z=\mathcal{O}(\e|p^\e_*|^2 +|p^\e_*|^4)
\end{equation}
in an $\eps$-neighborhood of $p^\e_*$.

Reviewing the standard invariant manifold constructions by fixed
point/contraction arguments (see, e.g., \cite{Bressan, HK}), we find
that they yield existence and closeness in angle of these manifolds
to corresponding invariant subspaces of $dF^\e$ on a ball about
$p^\e_*$ for which the Lipshitz norm of the (total) nonlinear term
$N^\e+\Theta^\e$ is sufficiently small compared to the spectral gap
of $d\tilde{F}^\e(0)= dF^\e(p^\e_*)$ so long as the norm of $\log
d\tilde{F}^\e =\log dF^\e$ is no larger than some specified multiple
of the spectral gap, as it is here, cf. \eqref{DF}. Here the
spectral gap is defined as the minimum distance between one and the
modulus of eigenvalues that are not modulus one. By
\eqref{smalleig}, the spectral gap is greater than $\eta \eps$ for
some positive $\eta$, while by \eqref{Nbd}--\eqref{Thetabd} and the
fact that $|p^\e_*|\sim \eps^{1/2}$, we have
$|N^\e_z|+|\Theta^\e_z|\le C(|z|^2+\e^{1/2}|z| +\e^2)=o(\e)$ as
desired for $|z|<<\eps^{1/2}$. Thus, we obtain a detailed
``microscopic'' description of the behavior of the return map on a
ball about $p^\e_*$ of radius $\eta \eps^{1/2}$,  for $\eta>0$ and
sufficiently small. This translates to a detailed description of
$\cN_{p,\pm}$ and asymptotic behavior on an
$\eta\eps^{1/2}$-neighborhood of the unstable periodic orbit, far
more than what was needed (the excluded $\eps^{3/2}$-neighborhood in
$\omega^\e$ coordinates or for that matter the excluded $\e^2$
neighborhood in $\gamma^\e, A^\e$ coordinates) to complete the
argument.

Indeed, though we do not need it, the faster decay in $\eps$ of
perturbation terms $\Theta$, $\Theta_z$ yields uniform convergence
of perturbed to unperturbed flow and stable manifold in ``blown-up''
coordinates $\tilde \omega:=\omega/\eps^{1/2}$, by the same
argument.

%ENDCHANGED

%*******************************OLD VERSION***************
%
%Within the excluded $\eps^2$ neighborhood, on the other hand,
%corresponding to an $\eps^{3/2}$ neighborhood in $\beta$-coordinates,
%we may work in microscopic coordinates centered about the unperturbed
%unstable periodic orbits and rescaled by factor $\eps^{-1}$ in radial
%directions to obtain a bounded flow converging in $C^1$ as $\eps\to 0$ to
%the unperturbed flow, from which we find that, within a larger
%$\eps$-neighborhood $B_\eps$ in original, unrescaled $\beta$ coordinates,
%$\cN_{p,\pm}$ is order $\eps^{3/2}$ close to its unperturbed
%counterpart and uniformly repelling at order $\eps$ exponential
%rate throughout the $\eps$-neighborhood $B_\eps$; in particular, solutions
%in $B_\eps$ that are not on $N_{p,\pm}$ leave $\cK_\pm$ at
%$\eps$-exponential rate.
%We omit the details of this latter, essentially standard, analysis.
%
%*****************************************************

Thus, solutions originating outside $\cK_\pm$ remain outside, with $s$
by \eqref{seqn} strictly
decreasing at an exponential rate. Moreover, if ever solutions leave
$\cK_\pm$, they are attracted at an exponential rate to the
PT-symmetric manifold $\cN_{PT}$ corresponding to $s=0$.
On the other hand, solutions remaining in $\cK_\pm$ sufficiently
long must eventually enter the small neighborhood $B_\eps \cap \cK_\pm$,
after which solutions not on $\cN_{p,\pm}$ must
%CHANGED: added clarifying sentence here:
(by our microscopic description carried out in $\omega$-coordinates)
%ENDCHANGED
eventually leave.
Piecing together this information, we find that all solutions originating
in $\cMe\cap B(0,Ca)$ and not on the stable manifolds $\cN_{p,\pm}$
of the unstable periodics time-asymptotically approach the attracting
PT-symmetric manifold $\cN_{PT}$, as claimed.
This completes the description of asymptotic behavior within the
center manifold.

By Remark \ref{foliation}, the $C^\infty$ stable manifolds
$\cN_{p,\pm}$ within the center manifold, of codimension two
in $\cMe\cap B(0,a)$, extend to $C^\infty$ stable manifolds
$$
\tilde N_{p,\pm}:=\cup_{w\in \cN_{p,\pm}}P_\eps^{-1}(w)\cap B(0,\overline{C}a)
$$
of codimension two in $B(0,Ca)$.
Consequently, for fixed $\eps>0$, $(\tilde \cN_{p,+}\cup \tilde \cN_{p,-})\cap B(0,a)$
is exactly the set of data in $B(0,a)$ whose solutions converge
asymptotically to an unstable branch.
Likewise, there is a $C^\infty$ stable manifold
$\tilde \cN_{0}:=P_\eps^{-1}(0)$ of
the normal equilibrium, of codimension four, containing all solutions
originating in $B(0,a)$ and converging to the normal state.
%Since codimension two and four, these
%are null sets by any reasonable measure on $H^1$,
%hence nongeneric.
Finally, $\cN_{PT}$ has a $C^\infty$ stable manifold
$\tilde \cN_{PT}:=\cup_{w\in \cN_{PT}}P_\eps^{-1}(w)$ of codimension one.
By Proposition \ref{exp}, together with our description of asymptotic
behavior on the center manifold, we find that all solutions originating
in $B(0,a)$ outside the sets $\tilde \cN_{p,\pm}$ and $\tilde \cN_0$
are attracted to $\cN_{PT}$ (hence
to the larger manifold $\tilde \cN_{PT}$ of solutions converging
to $\cN_{PT}$) and
ultimately to the stable periodic solutions or their rotations.
This completes the description of asymptotic behavior and the proof.
\end{proof}

\begin{remark}\label{phaselock}
\rm{An implication of Proposition \ref{Hopf2} is that if one chooses
complex initial data $(\beta_1^\e(0),\beta_2^\e(0))$ away from the
special two-dimensional manifolds containing the unstable branch,
then there is a kind of phase-locking phenomenon whereby the flow
\eqref{bsys} pushes the solution towards a pair
$(\beta_1^\e(t),\beta_2^\e(t))$ which is a complex rotation of a
real pair, that is, $\gamma^\e$ tends to zero and the corresponding
solution to \eqref{reform} approaches a PT-symmetric profile. On the
other hand, it is not true that all solutions exhibit this
phase-locking, as evidenced by the existence of the unstable
periodic solutions.}
\end{remark}

%CHANGED, added:
\begin{remark}\label{validation}
\rm{
As noted parenthetically in the proof, a side-consequence of
our analysis is to rigorously validate the formal asymptotics
of Section \ref{NP} by verifying convergence of perturbed to
unperturbed phase portrait on a ball of size $\eps^{1/2}$ about
the normal state, in ``blown-up'' coordinates $\tilde \omega:=
\omega/\eps^{1/2}$ (equivalently, $\tilde \beta:=\beta/\eps^{1/2}$)
equivalent to the $\alpha$-coordinates of the earlier section.
}
\end{remark}
%ENDCHANGED

\subsection{Bifurcation off of higher eigenvalues} We conclude this
section by remarking on the bifurcation situation off of higher
eigenvalues. Recalling the notation \eqref{Lell}, we fix any integer
$\ell>1$, and set the parameter $\Gamma$ in \eqref{gl1} equal to
${\rm{Re}}\,\lambda_{2\ell-1}+\e$. Then \eqref{gl1}-\eqref{gl3} can
be rewritten as \beq
\psi_t=L_{\ell}[\psi]+\mathcal{N}(\psi,\e)\label{ellreform}\eeq

All of the analysis of the preceding subsection, including the
center manifold construction, applies to produce the existence of
stationary and periodic solutions to this system for $I<I_{\ell}$
and $I>I_{\ell}$ respectively, cf. \eqref{split}. The difference
here is that for $\ell>1$, there will always be a nonempty unstable
subspace corresponding to $\lambda_j$ with $1\leq j<2\ell-1$. Hence,
these bifurcating solutions will always be unstable. As far as the
center manifold construction is concerned, the primary change is
that the fixed point argument must now be applied to the integral
equation \bea &&y(x,t)=\Gamma(u,\e,y) :=e^{L_{1}t}u+\int_0^t
e^{L_{1}(t-\tau)}\Pi_c\,\mathcal{N}^{\delta}\left(y(x,\tau),\e\right)\,d\tau\nonumber\\
&&+ \int_{-\infty}^t
e^{L_{1}(t-\tau)}\Pi_s\,\mathcal{N}^{\delta}\left(y(x,\tau),\e\right)\,d\tau-
\int_{t}^{\infty}
e^{L_{1}(t-\tau)}\Pi_u\,\mathcal{N}^{\delta}\left(y(x,\tau),\e\right)\,d\tau,
\nonumber \eea rather than \eqref{fixed}. Here $\Pi_u$ denotes the
projection onto the unstable subspace and through its realization as
a contour integral \beq e^{tL_1}\Pi_u:=\int_{\Gamma_u}e^{\lambda
t}(\lambda I-L_1)^{-1}\,d\lambda, \label{bd2}\eeq (where $\Gamma_u$
is any bounded contour enclosing the finite number of eigenvalues
with positive real part) one establishes the necessary
bound\beq\norm{e^{tL_1}\Pi_u}_{H^1\to H^1}\leq Ce^{(\zeta_{1}-s)t}
\quad\mbox{for all}\;t\leq 0\nonumber \eeq to augment
\eqref{bd4}-\eqref{bd5}.

\section{Phase slip centers}

The physics literature associates periodic solutions with the
existence of phase slip centers (PSCs), that is, zeros of the order
parameter. Indeed, an immediate conclusion of the formal calculation
of Section \ref{NP}, rigorously confirmed in Section 6, is that the leading
order term in the expansion for the periodic solution $\psi_{p}$ to
\eqref{reform} along the bifurcation branch in the regime $I>I_c$
has a periodic array of zeros at $x=0$.  More precisely, referring
back to \eqref{stable2}-\eqref{ampe}, we see that the approximate
solution $\beta_1(t)v_1(x)+\beta_2(t)v_2(x)$, when evaluated at
$x=0$, has zeros whenever the expression
\[
\cos[\omega^{\e}t]:=\cos\left[\left(\imu+\frac{{\rm{Im}}\,\tilde{\chi}}{{\rm{Re}}\,\tilde{\chi}}\e+
\mathcal{O}(\e^{3/2})\right)t\right]
\]
vanishes, since $v_{2}(0)=i(u_{1}(0)-u_{2}(0))=0$. In other words,
there are PSC's located periodically at $(x,t)=(0,T^{\e}_{k})$,
$k=0,1,2,\ldots$ where $T^{\e}_{k}\approx
\frac{1}{\omega^{\e}}(\pi/2+k\pi)$. Since the actual solution
$\psi_{p}$ is uniformly close to $\beta_1v_1+\beta_2v_2$, it follows
that the two functions have the same Brouwer degree in a
neighborhood of the points $\{(0,T^{\e}_{k})\}$. Hence, we will
rigorously conclude that $\psi_{p}$ possesses an array of zeros
close to the points $\{(0,T^{\e}_{k})\}$ once we check that the
degree of $\beta_1v_1+\beta_2v_2$ is nonzero at these points. But a
direct calculation of the Jacobian determinant at these points
indeed reveals that
\[
{\rm{Jac}}\,\left(\beta_1v_1+\beta_2v_2\right)(0,T^{\e}_k) \sim
-\frac{8\pi\e}{\kappa_{r}\imu+\kappa_{i}\e}{\rm{Re}}\,u_{1}'(0).
\]
A straight-forward numerical calculation
of the first eigenfunction for the operator $M$ (cf. \eqref{gl13}),
for which there are rigorous error bounds, shows that
${\rm{Re}}\,u_{1}'(0)\not=0$, with values ranging monotonically from about $-0.2234$
for $I=12.5$ down to about $-0.3578$ for $I=20.$
Hence, we obtain a rigorous
confirmation of a periodic array of PSC's for the periodic solution.

\end{document}